\documentclass[11pt,a4paper]{article}
\usepackage{amssymb}
\usepackage{makeidx}
\usepackage{amsfonts}
\usepackage{amsmath}
\usepackage{amsthm}
\usepackage{graphicx}
\usepackage[dvips]{epsfig}
\usepackage{longtable}
\usepackage{graphics}
\usepackage{dsfont}
\usepackage{xfp} 
\usepackage{pgfplotstable}
\usepackage{subcaption}
\usepackage{comment}
\usepackage{xcolor}

\usepackage{graphicx}				
\usepackage{amssymb}
\usepackage{mathtools}
\usepackage{tensor}



\newtheorem{theorem}{Theorem}[section]
\newtheorem{lemma}[theorem]{Lemma}
\newtheorem{proposition}[theorem]{Proposition}
\newtheorem{corollary}[theorem]{Corollary}

\newtheorem{assumption}{Assumption}
\newtheorem{example}{Example}[section]
\newtheorem{remark}{Remark}[section]


\newcommand{\E}{\mathbb{E}}
\newcommand{\Q}{\mathbb{Q}}
\renewcommand{\P}{\mathbb{P}}

\newcommand{\VaR}{\operatorname{VaR}}

\newcommand{\ES}{\operatorname{ES}}

\newcommand{\var}{\operatorname{var}}

\newcommand{\Cor}{\operatorname{Cor}}

\renewcommand{\phi}{\varphi}






\title{Cost-of-capital valuation with risky assets}
\author{Hansj\"org Albrecher\footnote{Department of Actuarial Science, Faculty of Business and Economics (HEC), University of Lausanne and Swiss Finance Institute, UNIL-Chamberonne, CH-1015 Lausanne} \and Filip Lindskog\footnote{Department of Mathematics, Stockholm University, SE-106 91 Stockholm} \and Hervé Zumbach\footnote{Department of Actuarial Science, Faculty of Business and Economics (HEC), University of Lausanne, UNIL-Chamberonne, CH-1015 Lausanne}}

\date{}

\begin{document}
\maketitle

\begin{abstract}
Cost-of-capital valuation is a well-established approach to the valuation of liabilities and is one of the cornerstones of current regulatory frameworks for the insurance industry. Standard cost-of-capital considerations typically rely on the assumption that the required buffer capital is held in risk-less one-year bonds. The aim of this work is to analyze the effects of allowing investments of the buffer capital in risky assets, e.g.~in a combination of stocks and bonds. In particular, we make precise how the decomposition of the buffer capital into contributions from policyholders and investors varies as the degree of riskiness of the investment increases, and highlight the role of limited liability in the case of heavy-tailed insurance risks. With a focus on non-life insurance, we present a combination of general theoretical results, explicit results for certain stochastic models and numerical results that emphasize the key findings. 
\end{abstract}


\section{Introduction}\label{sec1}
Risk-based solvency principles for insurance companies are at the heart of modern insurance regulation, see for instance the currently enforced Solvency II regulation in the European Union \cite{SolvencyII,EU_Directive_2025_2} and the Swiss Solvency Test (SST) in Switzerland \cite{fopi}. According to these principles, the valuation of insurance liabilities must reflect not only expected future cash flows but also the uncertainty surrounding them. A central component in this process is the cost of capital, which represents the return that investors demand for providing capital to absorb unexpected losses, where the necessary amount of capital to ensure solvency is set by the regulator on the basis of the underwritten policies and a risk measure. While financial risks can typically be priced in a market-consistent way (cf.\ \cite{koch2020market}), this is more subtle for insurance risk (see e.g.\ \cite{salahnejhad2023market}, \cite{engsner2023multiple} and \cite{schmidt23,oberpriller2024robust}). In the absence of a liquid market for insurance risks, one typically needs to adhere to a \textit{mark-to-model} approach. The economic value of the insurance liabilities is then defined as the expected claim costs ('\textit{best estimate}') plus a margin (called \textit{risk margin} in Solvency II, and \textit{market value margin} in the SST) which reflects these capital costs. The justification is that for this monetary amount another insurance undertaking should then be willing to take over these liabilities, as it covers the costs it generates for integrating it into one's business (an \textit{arm's length transaction}, cf.\ \cite{SolvencyII}). The suggested cost-of-capital rate (above the risk free rate) for such considerations, and the standard value currently implemented, is $6\%$. Recently, the European union decided to lower this value to 4.75\%, cf.\ \cite{reut,EU_Directive_2025_2}, which after national transposition is expected to be enforced in the member countries by January 2027. See e.g.\ \cite{floreani2011risk,albrecher2022cost} for economic considerations to justify particular cost-of-capital rates. \\

It may be natural to interpret that the insurance premiums should match the economic value of liabilities that the issuance of these insurance policies generates (as this is the value for which also another party should be willing to accept these risks). This would mean that the safety loading (on top of the pure premium reflecting expected claim costs) corresponds to the risk margin discussed above. From an actuarial perspective, this is a natural starting point for the concrete pricing of the policies, although several other factors will eventually play a role (including competition, demand/supply patterns, solidarity considerations etc.). More than that, at the general level of asset-liability management of the company, the company is exposed to several other sources of risks, and the regulatory rules ask to determine the solvency capital requirement that each of these generate, and the resulting values then need to be added up for the overall solvency capital satisfying the regulatory demand. That is, for simplicity, these risks are considered independently at first, and based on some (often very coarse) dependence assumptions between the different risk categories, the obtained sum can in a second step potentially be reduced by a diversification benefit, taking into account the fact that not all of these risks are likely to lead to losses at the same time. \\

In the approach as described above, the insurance liabilities are considered in a stand-alone fashion, with the capital costs they generate (and therefore the necessary insurance premiums) possibly over-estimated, as they rely on the assumption that the solvency capital is invested solely in risk-less bonds. But in practice some of that capital can be invested into risky assets, generating higher returns than the risk-less bond and providing additional income that reduces the claim costs. At the same time, this position in risky assets introduces additional risk, which itself translates into the need for further solvency capital.\\
\indent In this paper, we would like to specifically investigate this trade-off directly by considering those two risks together. Concretely, considering a mix of risky and risk-less investment of the solvency capital, we would like to study up to which degree of 'riskiness' of the investment the policyholders can benefit from smaller premiums for the same level of solvency of the insurance undertaking (under the assumption that the safety loading is determined by the generated capital costs). Our focus is on short-tailed risks in the non-life domain. Despite extensive literature on the joint valuation of actuarial and financial risks within the insurance product itself (particularly with life insurance applications in mind, see e.g.\ \cite{dhaene2017fair,delong2019fair,barigou2019fair,barigou2022insurance}), to the best of our knowledge an explicit analysis of the trade-off between risky and risk-less assets of a non-life insurer for a policyholder's perspective was not considered before. We therefore deliberately decide to keep the underlying model assumptions simple, in order not to blur the main analysis by overlays with other factors. This includes the restriction to a one-period model, independence between insurance and financial risk as well as a focus on the Value-at-Risk and Expected Shortfall as a risk measure, see Section \ref{sec2} for details. \\

It turns out that one can specify conditions under which the needed insurance premiums decrease with increasing weight in risky investment, up to a certain limit weight, up to which also the overall capital requirement is reduced. That is, a mild weight in risky investment is of advantage to all involved parties. Beyond that limit, the needed insurance premiums may decrease even further, but at the expense of overall increased capital requirements for the additionally introduced investment risk. Such considerations may also add to reflections about the justifications of currently implemented solvency capital charges in the standard model of Solvency II for equity risk (e.g.\ 39\% shock for Type 1 assets and 49\% shock for Type 2 assets, cf. \cite{EU_Delegated_Regulation_2015_35}). We reiterate that the purpose of the paper is to establish some concrete quantitative insights into the matter under very concrete and simple model assumptions. \\

The rest of the paper is organized as follows. Section \ref{sec2} lays out the concrete model assumptions on which we base the considerations of the paper and establishes a number of general monotonicity results for the needed insurance premium, the invested amount of the shareholders and the overall solvency capital requirement when increasing the proportion of risky assets in the management of insurance risks. For the more particular case of normally distributed insurance and financial risks we derive explicit formulas for the involved quantities in Section \ref{sec3}. 
In Section \ref{sec5} we then give and discuss concrete numerical illustrations of the effects of risky investments on the needed insurance premiums and solvency requirements, and Section \ref{sec6} concludes. All proofs are delegated to an appendix.

\section{A cost-of-capital valuation with risky assets}\label{sec2}
Assume that at time $0$ liabilities corresponding to an aggregate claim amount $X_1$ at time $1$ are transferred from an insurance company to an empty company whose purpose is to carry out the runoff of the liabilities. $X_1$ is not replicable and no replicating portfolio is considered. For simplicity, we assume that all payoffs at time $1$ are discounted to monetary values at time $0$ (alternatively, that the risk-less interest rate is zero). 
Along with the liabilities, the following cash amounts are transferred: a cash amount $C_0$ from the shareholders and 
a cash amount $V_0$ from the insurance company. Since only the liabilities and the cash amount $V_0$ are transferred from the insurance company, $V_0$ should be interpreted as a theoretical premium for these liabilities. 
The cash transfers are necessary because the new entity receiving the liabilities needs capital in order to comply with the solvency standards of insurance regulation. 

Assume now that the (deterministic) cash amount $R_0=V_0+C_0$ is immediately invested into an asset (or a collection of assets) with gross return $Z_1$ giving the risky amount $R_0Z_1$ at time $1$. 
At time $1$, the payoffs to the shareholders and to the policyholders, respectively, are: 
\begin{align*}
Z_{\text{sh}}:=(R_0Z_1-X_1)^+, \quad Z_{\text{ph}}:=(R_0Z_1)\wedge X_1,
\end{align*}
where $x\wedge y$ means $\min(x,y)$ and $x^+=\max(x,0)$. 
The interpretation is that the policyholders receive what they are entitled to if there is sufficient capital available at that time, and any remaining available capital goes to the shareholders. For a discussion on the effects of the limited liability of the shareholders, see e.g.\ \cite{filipovic2015optimal,albrecher2022cost}.

The shareholders assign at time $0$ a value $C_0$ to their payoff $Z_{\text{sh}}$ at time $1$, and the aggregate theoretical premium is the remaining amount $V_0=R_0-C_0$ needed to finance $R_0$. Cost-of-capital valuation corresponds to 
\begin{align}\label{C0coc} 
C_0=\E[Z_{\text{sh}}]/(1+\eta),
\end{align}
where $\eta>0$ represents the cost-of-capital rate, which is the spread over the risk-free rate for the more risky investment 
(which corresponds to betting on a favorable runoff result). 
Writing $(R_0Z_1-X_1)^+=R_0Z_1-(R_0Z_1)\wedge X_1$, we get in the cost-of-capital case that 
\begin{align}\label{eq:V0formula} 
V_0=\frac{1}{1+\eta}\,\E[(R_0Z_1)\wedge X_1]+\frac{\eta}{1+\eta}\,R_0+\frac{1}{1+\eta}R_0\,\E[1-Z_1].
\end{align}
If the investment is into a risk-less bond, then $Z_1\equiv 1$ which gives as a special case of the valuation formula \eqref{eq:V0formula}
\begin{align*} 
V_0=\frac{1}{1+\eta}\,\E[R_0\wedge X_1]+\frac{\eta}{1+\eta}\,R_0,
\end{align*} 
which appears as Equation (8.16) in Mildenhall and Major \cite[p.199]{mildenhall}. It also corresponds to the one-period case of  \cite[Eq.10]{engler2023approximations} and to \cite[Eq.12]{AlDa} in the absence of limited liability of the shareholders.

The cash amount $R_0$ is the smallest amount such that the new entity managing the liability runoff is allowed to operate (is considered solvent). 
$R_0$ cannot be determined unless the associated investment strategy is given.   
Concretely, $R_0$ must satisfy 
\begin{align}\label{r0}
\rho(R_0Z_1-X_1)=0
\end{align} 
so that the final net worth $R_0Z_1-X_1$ is acceptable according to a solvency criterion given by a risk measure $\rho$. That is, each choice of the risk measure $\rho$ entails an overall needed cash amount $R_0$ for each given insurance risk $X_1$ and chosen asset strategy $Z_1$. In terms of risk measure, in this paper we focus on the Value-at-Risk and the Expected Shortfall. For $\alpha\in (0,1)$ small, e.g.~$0.5\%$ or $1\%$, these are defined as, with $Y=R_0Z_1-X_1$,  
\begin{align*} 
\VaR_{\alpha}(Y)=F_{-Y}^{-1}(1-\alpha) \quad\text{and}\quad \ES_{\alpha}(Y)=\frac{1}{\alpha}\int_0^{\alpha}\VaR_{\beta}(Y)d\beta. 
\end{align*}
Since $Z_1$ and $X_1$ are discounted, no discount factor appears in front of the quantile $F_{-Y}^{-1}(1-\alpha)$.  

\begin{remark}\normalfont 
We could consider the more general $\min\{r:\rho(rZ_1-X_1)\leq 0\}$ as definition of $R_0$  
instead of \eqref{r0} since there are stochastic models for which no $R_0$ satisfies \eqref{r0}. 
However, under reasonable assumptions (see Proposition \ref{prop1} below) $R_0$ is the unique solution to $\rho(R_0Z_1-X_1)=0$. 
\hfill $\diamond$
\end{remark}

\begin{remark}\label{rem:V0bounds}\normalfont
A model-independent upper bound for $V_0$ is obtained from using $\E[Y^+]\geq \E[Y]$ which, applied to \eqref{eq:V0formula}, gives 
\begin{align*}
V_0\leq R_0-\frac{1}{1+\eta}\E[R_0Z_1-X_1]=\frac{1+\eta-\E[Z_1]}{1+\eta}R_0+\frac{1}{1+\eta}\E[X_1].
\end{align*}
If $Z_1$ and $X_1$ have finite variances and if the risk measure $\VaR_{\alpha}$ is used to determine $R_0$, then 
a lower bound for $V_0$ follows from combining the Cauchy-Schwarz inequality $\E[|AB|]\leq \E[A^2]^{1/2}\E[B^2]^{1/2}$ with the identity $Y^+=Y\mathds{1}_{[0,\infty)}(Y)$. Concretely,  
\begin{align*}
\E[(R_0Z_1-X_1)^+]\leq \E[(R_0Z_1-X_1)^2]^{1/2}(1-\alpha)^{1/2}
\end{align*}
and $\E[Y^2]=\var(Y)+\E[Y]^2$ then lead to the lower bound 
\begin{align*}
V_0\geq R_0-\frac{(1-\alpha)^{1/2}}{1+\eta}\Big(R_0^2\var(Z_1)+\var(X_1)+\big(R_0\E[Z_1]-\E[X_1]\big)^2\Big)^{1/2}.
\end{align*}
\hfill $\diamond$
\end{remark}

\begin{remark}\label{rem:llo}\normalfont
Limited liability means that shareholders do not have to inject capital at time $1$ to offset a possible deficit at that time. Limited liability increases the value $C_0$ for shareholders from $\E[R_0Z_1-X_1]/(1+\eta)$ to $\E[(R_0Z_1-X_1)^+]/(1+\eta)$. The difference 
\begin{align*}
\frac{1}{1+\eta}\big(\E[(R_0Z_1-X_1)^+]-\E[R_0Z_1-X_1]\big)=-\frac{1}{1+\eta}\E[(R_0Z_1-X_1)^-]
\end{align*} 
is referred to as the value of the limited liability option. Note that this value coincides with the difference between the upper bound for $V_0$ in Remark \ref{rem:V0bounds} and the value of $V_0$ in case of limited liability. 
\hfill $\diamond$
\end{remark}

\begin{example}\label{ex:llo_pareto}\normalfont 
As an illustration, consider a Pareto-distributed insurance risk $X_1$ with cumulative distribution function 
\begin{equation}\label{parparam}F(x)=1-(x/x_m)^{-\beta}, \quad x>x_m>0\end{equation} and suppose that $\beta>1$,  ensuring a finite mean $\E[X_1]=x_m\beta/(\beta-1)$.  
Consider a purely risk-less investment and $R_0$ determined by $\VaR_{\alpha}$, 
\begin{align*}
R_0=\VaR_{\alpha}(-X_1)=x_m\alpha^{-1/\beta}=\frac{\beta-1}{\beta}\alpha^{-1/\beta}\E[X_1]. 
\end{align*}
In this Pareto model with only risk-less investment we can compute the value of the limited liability option (cf.~Remark \ref{rem:llo}) explicitly. It is given by the expression 
\begin{align*}
\frac{1}{1+\eta}\frac{\alpha^{-1/\beta+1}}{\beta}\E[X_1]. 
\end{align*}
For nonnegative $R_0,Z_1,X_1$, the (typically crude) upper bound $(R_0Z_1-X_1)^+\leq R_0Z_1$ gives the general upper bound $\E[X_1]/(1+\eta)$ for the value of the limited liability option. Here, with $Z\equiv 1$ and $X_1$ Pareto distributed, we see that we can actually come arbitrarily close to this upper bound by letting $\beta$ approach $1$. \\ 
\indent Consider $\alpha=0.005$. 
For $\beta=2$ and $\beta=1.1$ the two values of the limited liability option are approximately $0.03\cdot \E[X_1]$ and $0.53\cdot \E[X_1]$, respectively. The two corresponding values of $R_0$ are approximately $7.07 \cdot \E[X_1]$ and $11.23 \cdot \E[X_1]$, respectively. The upper bound for $V_0$ in Remark \ref{rem:V0bounds} (corresponding to unlimited liability) is here
\begin{align*}
\frac{\eta}{1+\eta}\frac{\beta-1}{\beta}\alpha^{-1/\beta}\E[X_1]+\frac{1}{1+\eta}\E[X_1]
=\frac{\E[X_1]}{1+\eta}\bigg(1+\frac{\alpha^{-1/\beta}}{\beta}\eta(\beta-1)\bigg).
\end{align*}
For $\beta=2$ and $\beta=1.1$ the two values of the upper bound for $V_0$ are therefore approximately $1.34\cdot \E[X_1]$ and $1.58\cdot \E[X_1]$, respectively.  
We obtain the actual value of $V_0$ by subtracting the value of the limited liability option: 
\begin{align*}
V_0=\frac{\E[X_1]}{1+\eta}\bigg(1+\frac{\alpha^{-1/\beta}}{\beta}\big(\eta(\beta-1)-\alpha\big)\bigg). 
\end{align*} 
For $\beta=2$ and $\beta=1.1$ the two values of $V_0$ are therefore approximately $1.31\cdot \E[X_1]$ and $1.05\cdot \E[X_1]$, respectively.  
The fact that, for a fixed $\E[X_1]$, the heavier tail implies a 
considerably smaller value of $V_0$ is due to the relatively large value of the limited liability option. \hfill $\diamond$
\end{example}
Investing in a risk-less bond means $Z_1\equiv 1$ and then (for translation-invariant $\rho$) we have $R_0=\rho(-X_1)$. Our interest lies in the consequences of investing the capital $R_0$ in (at least partially) risky assets. Therefore, the case $Z_1\equiv 1$ is only a benchmark  here, and we will in general  consider random variables $Z_1$ of the form 
\begin{align}\label{convc}
Z_1=Z^{w}_1:=wS_1+1-w\quad \text{for}\;w\in [0,1].
\end{align} 
This corresponds to investing a fraction $w$ in a risky asset (with value $S_0=1$ at time $0$ and discounted value $S_1$ at time $1$) and the remainder in a risk-less bond. 
When considering $Z_1$ of the form \eqref{convc} we will sometimes write $R_0^{w}, C_0^{w}, V_0^{w}$ to emphasize the dependence on $w$ for fixed $S_1$ and $X_1$. 
In the sequel, we will always assume the following: 

\begin{assumption}\label{basicassumption}
	The risk measure $\rho$ is either $\VaR_{\alpha}$ or $\ES_{\alpha}$. $X_1$ and $S_1$ are independent, and $S_1$ is absolutely continuous (having a density). There exists a unique solution $R_0\geq 0$ to $\rho(R_0Z_1-X_1)=0$. 
\end{assumption}
The independence assumption between $X_1$ and $S_1$ is for simplicity of exposition (see also Remark \ref{rem3.5} and Section \ref{sec6}). 
The assumption on the existence of a unique solution $R_0$ is in fact not very restrictive. For instance, one can derive the following result (the proof of which is given in Appendix \ref{app1}). 

\begin{proposition}\label{prop1}
	If $\rho$ is either $\VaR_{\alpha}$ or $\ES_{\alpha}$, $X_1$ and $S_1$ are independent and take nonnegative values only, $S_1$ is absolutely continuous and $\P(X_1=0) < 1-\alpha$, then there exists a unique $R_0>0$ solving \eqref{r0}.
\end{proposition}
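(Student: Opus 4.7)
The plan is to define $f(r) := \rho(rZ_1 - X_1)$ on $[0,\infty)$ and show, via continuity, strict monotonicity, and the boundary values, that $f$ has a unique zero. The trivial case $Z_1 \equiv 1$ (no risky investment) is immediate from translation invariance since $f(r) = \rho(-X_1) - r$, so I focus on the case when $Z_1$ (say $Z_1 = wS_1 + 1 - w$ with $w>0$) inherits a density from $S_1$.

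For continuity, I would start from the representation $G_r(y) := \P(X_1 - rZ_1 \leq y) = \E[F_{X_1}(y + rZ_1)]$, so that $\VaR_{\alpha}(rZ_1 - X_1) = G_r^{-1}(1-\alpha)$ and $\ES_{\alpha}(rZ_1 - X_1) = \alpha^{-1}\int_0^{\alpha} G_r^{-1}(1-\beta)\,d\beta$. Since $-rZ_1$ has a density (independent of $X_1$) for $r>0$, $G_r$ is continuous in $y$ and, by dominated convergence, jointly continuous in $(r,y)$; this promotes the quantile functional and its average over $\beta$ to continuous functions of $r$ on $[0,\infty)$.

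For monotonicity, $Z_1 \geq 0$ a.s.\ makes $rZ_1 - X_1$ pathwise non-decreasing in $r$, so monotonicity of $\VaR_{\alpha}$ and $\ES_{\alpha}$ gives $f$ non-increasing. Strict monotonicity follows from $Z_1 > 0$ a.s.\ (a consequence of absolute continuity) together with the identity
\begin{equation*}
G_{r'}(y) - G_r(y) = \E\bigl[F_{X_1}(y + r'Z_1) - F_{X_1}(y + rZ_1)\bigr] \geq 0,\qquad r' > r,
\end{equation*}
which is strict at the relevant level $y = f(r)$ provided $F_{X_1}$ is not constant on the interval that $y + rZ_1$ sweeps out, i.e.\ provided the $(1-\alpha)$-quantile of $X_1 - rZ_1$ lies where the density is positive. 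This is exactly where the hypothesis ``$\P(X_1=0)$ sufficiently small'' enters, by preventing the relevant quantile from being pinned at $0$ by an atom there. The argument for $\ES_{\alpha}$ is obtained by integrating the $\VaR_{\beta}$ statement over $\beta \in (0,\alpha)$ and using Fubini.

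For the boundary values, $f(0) = \rho(-X_1) \geq 0$ since $X_1 \geq 0$ a.s., while as $r \to \infty$ one has $X_1 - rZ_1 \to -\infty$ a.s.\ (using $Z_1 > 0$ a.s.), so $G_r(y) \to 1$ pointwise by dominated convergence and hence $f(r) \to -\infty$. Continuity, strict monotonicity, and these two boundary values then yield a unique $R_0 \geq 0$ with $f(R_0) = 0$. The main obstacle is the strict monotonicity step: one has to translate the absolute continuity of $Z_1$ and the smallness of $\P(X_1=0)$ into positivity of the density of $X_1 - rZ_1$ at its $(1-\alpha)$-quantile, which is a case analysis that must be done carefully but once settled makes the intermediate value argument routine.
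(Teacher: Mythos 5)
Your overall strategy---show that $f(r)=\rho(rZ_1-X_1)$ is continuous and strictly decreasing on $[0,\infty)$ with $f(0)\geq 0$ and $f(r)\to-\infty$, then invoke the intermediate value theorem---is the same as the paper's. The technical route differs slightly: for $\VaR_{\alpha}$ the paper avoids working with the quantile function directly by rewriting the equation as $F_{X_1-rZ_1}(0)=1-\alpha$ and differentiating the distribution function under the integral sign in both $y$ and $r$, which makes continuity and strict monotonicity of $r\mapsto F_{X_1-rZ_1}(0)$ explicit; only for $\ES_{\alpha}$ does it pass to the quantile, via the implicit function theorem applied to $g(r,y)=F_{X_1-rZ_1}(y)-1+\beta$. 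Your plan of handling the quantile directly for both risk measures is workable but, as you note, requires the more delicate verification that the density of $X_1-rZ_1$ is positive at the relevant quantile.

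The one point I would push back on is where you locate the hypothesis that $\P(X_1=0)$ is small. You attribute it to the strict-monotonicity step (``preventing the relevant quantile from being pinned at $0$ by an atom''), but for every $r>0$ the convolution of $X_1$ with the absolutely continuous variable $-rZ_1$ already gives $X_1-rZ_1$ a density, so no atom of $X_1$ can pin the quantile there; strict monotonicity in $r$ comes from the positivity of $\partial F_{X_1-rZ_1}/\partial r$, not from any condition on $\P(X_1=0)$. In the paper the hypothesis is a boundary condition: it guarantees $F_{X_1}(0)=\P(X_1=0)<1-\alpha$, so that the continuous, strictly increasing map $r\mapsto\P(X_1\leq rZ_1)$ starts strictly below the target level $1-\alpha$ and therefore crosses it exactly once at some $r>0$. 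If $\P(X_1=0)\geq 1-\alpha$, the crossing can occur at $r=0$ or fail to be unique in degenerate cases. Since you flag strict monotonicity as the ``main obstacle'' while treating the boundary value $f(0)\geq 0$ as immediate, a literal execution of your plan risks using the hypothesis where it is not needed and omitting it where it is; otherwise the argument goes through.
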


\begin{remark}\label{rem2}\normalfont
Since we are considering positively homogeneous risk measures, $0=\rho(R_0Z_1-X_1)=\rho(aR_0Z_1-aX_1)$ for any $a>0$. Hence, for a fixed $Z_1$, replacing $X_1$ by $aX_1$ changes the necessary invested amount from $R_0$ to $aR_0$. Correspondingly, $C_0=\E[(R_0Z_1-X_1)^+]/(1+\eta)$ changes to 
\begin{align*}
\E[(aR_0Z_1-aX_1)^+]/(1+\eta)=a\E[(R_0Z_1-X_1)^+]/(1+\eta)=aC_0. 
\end{align*}
As a result, $V_0$ also changes to $aV_0$. 
Hence, it is sufficient to restrict the analysis to insurance liability variables $X_1$ satisfying $\E[X_1]=1$. 
\hfill $\diamond$
\end{remark}

\subsection{Monotonicity properties}
Throughout this paper, for any two random variables $Y_1,Y_2$ the notation $Y_1\leq_{\text{st}} Y_2$ refers to first-order stochastic dominance, i.e.\ for the respective cumulative distribution functions we have $F_{Y_1}(x)\ge F_{Y_2}(x)$ for all $x\in{\mathbb R}$. 

We need to establish some overall soundness of the functionals $R_0$ and $V_0$. 

\begin{proposition}\label{prop23}
Assume that Assumption \ref{basicassumption} holds and that $Z_1$ takes only nonnegative values.  
	\begin{itemize}
		\item[(i)] For $Z_1$ fixed, $X_1\mapsto R_0$ is increasing with respect to increasing first-order stochastic dominance.
		\item[(ii)] For $X_1$ fixed, $Z_1\mapsto R_0$ is decreasing with respect to increasing first-order stochastic dominance.
		\item[(iii)] For $Z_1$ fixed and and $\E[Z_1]\leq 1+\eta$, $X_1\mapsto V_0$ is increasing with respect to increasing first-order stochastic dominance.
		\item[(iv)] For $X_1$ fixed and $\E[Z_1]\leq 1+\eta$, $Z_1\mapsto V_0$ is decreasing with respect to increasing first-order stochastic dominance.
	\end{itemize}
\end{proposition}

Let $S_1\leq_{\text{icx}}\widetilde{S}_1$ mean that $\widetilde{S}_1$ is larger than $S_1$ in increasing convex order (which is equivalent to stop-loss order, and also equivalent to $\int_{\beta}^{1}F_{S_1}^{-1}(u)du\leq \int_{\beta}^{1}F_{\widetilde{S}_1}^{-1}(u)du$ for all $\beta\in [0,1]$, see e.g.\ \cite{denuit2006actuarial}). Take $S_1\geq_{\text{icx}} 1$.  Then for $\widetilde{w}\geq w$
\begin{align*}
&\int_{\beta}^{1}F_{\widetilde{w}S_1+1-\widetilde{w}}^{-1}(u)du-\int_{\beta}^{1}F_{wS_1+1-w}^{-1}(u)du\\
&\quad = (\widetilde{w}-w)\int_{\beta}^{1}F_{S_1}^{-1}(u)du+(1-\beta)(w-\widetilde{w})\\
&\quad = (\widetilde{w}-w)\bigg(\int_{\beta}^{1}F_{S_1}^{-1}(u)du-(1-\beta)\bigg)\geq 0.
\end{align*}
Hence, for  $S_1\geq_{\text{icx}} 1$ we see that $Z_1^w=wS_1+1-w$ increases in increasing convex order with $w$. The condition $S_1\geq_{\text{icx}} 1$ is easily checked. For a random variable with symmetric density it simply means that the mode (mean if it exists) is greater than one.

Increasing convex order enables results on how the fraction of the invested amount in the risky asset affects the value of liabilities. 
Let $\mu_w=w\E[S_1]+1-w$ and let $R_0^w$ refer to the quantity $R_0$ in \eqref{r0} under \eqref{convc} (in particular, $R_0^0=\rho(-X_1)$). 

\begin{proposition}\label{prop:icxV0}
Assume that Assumption \ref{basicassumption} holds, that $S_1$ takes only nonnegative values, and that $S_1\geq_{\text{icx}} 1$. 
If $\widetilde{w}\geq w$, $R_0^{\widetilde{w}}\leq R_0^{w}$ and $1+\eta\geq \mu_w$, then $V_0^{\widetilde{w}}\leq V_0^{w}$. 
\end{proposition}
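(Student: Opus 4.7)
The plan is to insert an intermediate value. Define the auxiliary function $V_0(w,r):=\frac{1}{1+\eta}\E[(rZ_1^w)\wedge X_1]+\frac{r(1+\eta-\mu_w)}{1+\eta}$ for $r\geq 0$, so that $V_0^w=V_0(w,R_0^w)$ and $V_0^{\widetilde{w}}=V_0(\widetilde{w},R_0^{\widetilde{w}})$. I would then establish the chain $V_0(\widetilde{w},R_0^{\widetilde{w}})\leq V_0(w,R_0^{\widetilde{w}})\leq V_0(w,R_0^w)$. The key algebraic observation is that $a\wedge b=a-(a-b)^+$ together with $\E[Z_1^w]=\mu_w$ collapses $V_0(w,r)$ to $r-\frac{1}{1+\eta}\E[(rZ_1^w-X_1)^+]$, in which the $\mu_w$ dependence has cancelled.

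For the first inequality, at fixed $r\geq 0$ this rewriting yields $V_0(\widetilde{w},r)-V_0(w,r)=-\frac{1}{1+\eta}\bigl(\E[(rZ_1^{\widetilde{w}}-X_1)^+]-\E[(rZ_1^w-X_1)^+]\bigr)$. The paragraph preceding the proposition established $Z_1^{\widetilde{w}}\geq_{\text{icx}} Z_1^w$ under $S_1\geq_{\text{icx}} 1$ and $\widetilde{w}\geq w$. Since $y\mapsto (ry-x)^+$ is increasing convex in $y$ for each fixed $x$ and $r\geq 0$, I would apply the icx-order conditionally on $X_1$ (invoking independence of $X_1$ and $S_1$) to conclude $\E[(rZ_1^{\widetilde{w}}-X_1)^+]\geq \E[(rZ_1^w-X_1)^+]$. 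Specializing to $r=R_0^{\widetilde{w}}$ gives the first inequality.

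The second inequality follows from monotonicity of $r\mapsto V_0(w,r)$. For $r_1\geq r_2\geq 0$ and $Z_1^w\geq 0$ (which holds on $w\in[0,1]$ whenever $S_1$ is a nonnegative gross return), the $1$-Lipschitz, nondecreasing nature of $y\mapsto y^+$ yields $(r_1Z_1^w-X_1)^+-(r_2Z_1^w-X_1)^+\leq (r_1-r_2)Z_1^w$. Taking expectations then gives $V_0(w,r_1)-V_0(w,r_2)\geq (r_1-r_2)(1+\eta-\mu_w)/(1+\eta)\geq 0$, where the hypothesis $1+\eta\geq\mu_w$ is used exactly here. Applying this with $r_1=R_0^w$ and $r_2=R_0^{\widetilde{w}}$ and chaining the two bounds closes the proof. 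The main obstacle is choosing a decomposition in which the two distinct hypotheses (icx ordering of the returns and the bound $\mu_w\leq 1+\eta$) act independently; the identity $a\wedge b=a-(a-b)^+$ is what makes this separation possible, by erasing $\mu_w$ from the icx comparison and localizing the $r$-dependence in a single stop-loss term.
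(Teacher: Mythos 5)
Your proof is correct and is essentially the paper's argument repackaged: the paper chains the same two estimates --- the increasing-convex-order comparison applied conditionally on $X_1$ to $y\mapsto (ry-x)^+$, and the Lipschitz bound $(R_0^{w}Z_1^{w}-X_1)^+\leq (R_0^{\widetilde{w}}Z_1^{w}-X_1)^+ +(R_0^{w}-R_0^{\widetilde{w}})Z_1^{w}$ --- into a single inequality for $C_0$ rather than interpolating through the intermediate value $V_0(w,R_0^{\widetilde{w}})$. The nonnegativity of $Z_1^{w}$ that your second step requires is also implicitly used in the paper's proof, so nothing is lost.
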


\begin{proposition}\label{prop:icxV02}
Assume that Assumption \ref{basicassumption} holds. If $R_0^{w}\mu_w\geq R_0^{0}$, then $C_0^{w}\geq C_0^0$. 
If $R_0^{w}\mu_w\geq R_0^{0}\geq R_0^{w}$, then $V_0^{w}\leq V_0^0$. 
\end{proposition}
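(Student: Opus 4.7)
The approach I would take rests on two standard tools: Jensen's inequality conditional on $X_1$, and the identity $V_0^w=R_0^w-C_0^w$ (since $R_0=V_0+C_0$ by construction).

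For the first claim, I would start from
\begin{align*}
(1+\eta)C_0^w=\E\big[(R_0^wZ_1^w-X_1)^+\big].
\end{align*}
Since $X_1$ and $S_1$ (hence $Z_1^w$) are independent by Assumption \ref{basicassumption}, I can condition on $X_1=x$ and exploit that $z\mapsto (R_0^wz-x)^+$ is convex in $z$. Jensen's inequality then gives
\begin{align*}
\E\big[(R_0^wZ_1^w-x)^+\big]\geq \big(R_0^w\E[Z_1^w]-x\big)^+=(R_0^w\mu_w-x)^+.
\end{align*}
Integrating against the law of $X_1$ yields $\E[(R_0^wZ_1^w-X_1)^+]\geq \E[(R_0^w\mu_w-X_1)^+]$. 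Under the hypothesis $R_0^w\mu_w\geq R_0^0$, the map $a\mapsto (a-X_1)^+$ is nondecreasing, so $\E[(R_0^w\mu_w-X_1)^+]\geq \E[(R_0^0-X_1)^+]=(1+\eta)C_0^0$, giving $C_0^w\geq C_0^0$.

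For the second claim I would simply write
\begin{align*}
V_0^0-V_0^w=(R_0^0-R_0^w)+(C_0^w-C_0^0).
\end{align*}
The first term is nonnegative by the added hypothesis $R_0^0\geq R_0^w$, and the second is nonnegative by the first part of the proposition (which applies as soon as $R_0^w\mu_w\geq R_0^0$). Hence $V_0^w\leq V_0^0$.

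I do not anticipate serious obstacles; the only subtlety is making clean use of independence when applying Jensen's inequality (so that the convex function $z\mapsto(R_0^wz-x)^+$ can be evaluated at the deterministic mean $\mu_w$ after conditioning on $X_1$). Everything else is a direct manipulation of the identity $V_0=R_0-C_0$ and the monotonicity of $a\mapsto (a-X_1)^+$.
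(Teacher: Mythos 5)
Your proof is correct and follows essentially the same route as the paper: the paper establishes $R_0^{w}Z_1^{w}\geq_{\text{icx}}R_0^{w}\mu_w\geq R_0^{0}$ via the integrated-quantile characterization of increasing convex order, which is exactly the content of your conditional Jensen step, and it then deduces $C_0^{w}\geq C_0^{0}$ and $V_0^{w}\leq V_0^{0}$ from $V_0=R_0-C_0$ just as you do. No gaps.
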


The following result says that, under mild conditions, some risky investment, i.e.~a small $w>0$ compared to $w=0$, is always beneficial in the sense that $R_0^w<R_0^0$ and $V_0^w<V_0^0$. 

\begin{proposition}\label{prop:init_der_base_case}
Let $\rho=\VaR_{\alpha}$ and take $S_1$ and $X_1$ to be nonnegative and independent. Assume that $\E[S_1]>1$ and that $X_1$ has a bounded and continuous density that is nonvanishing in a neighborhood of $R_0^0=F_{X_1}^{-1}(1-\alpha)$. 
If $w\mapsto R_0^w$ is continuously differentiable, then 
\begin{align*}
\lim_{w\to 0}\frac{d}{dw}R_0^w=\lim_{w\to 0}\frac{d}{dw}V_0^w=-R_0^0(\E[S_1]-1)<0, \quad \lim_{w\to 0}\frac{d}{dw}C_0^w=0. 
\end{align*}
\end{proposition}

\begin{remark}\normalfont 
If $X_1=x>0$ is degenerate, then the assumptions in Proposition \ref{prop:init_der_base_case} do not hold. 
In this case $0=\VaR_{\alpha}(R_0^w(wS_1+1-w)-x)=R_0^w(w(\VaR_{\alpha}(S_1)+1)-1)+x$. 
If further $\VaR_{\alpha}(S_1)\in (-1,0)$, then 
\begin{align*}
\lim_{w\to 0}\frac{d}{dw}R_0^w=x(\VaR_{\alpha}(S_1)-1)>0. 
\end{align*}
The conclusion is that there is no $w$ which gives $R_0^w<R_0^0$ for all possible insurance risks $X_1$. In order for risky investment to reduce $R_0^w$ some uncertainty about the outcome of $X_1$ is needed.
This conclusion is consistent with Theorem 1.1 in \cite{Filipovic-MF-2008}, although \cite{Filipovic-MF-2008} focuses on convex risk measures.  
\hfill $\diamond$
\end{remark}

\begin{remark}\normalfont
We focus on cost-of-capital valuation in this paper, but we could consider replacing $C_0=\E[Z_{\text{sh}}]/(1+\eta)$ in \eqref{C0coc} by risk-neutral valuation $C_0=\E^{\Q}[Z_{\text{sh}}]$ and still obtain the qualitative conclusion that some risky investment is always beneficial. 
Under the additional assumptions that $\E^{\Q}[S_1]=1$, that independence between $S_1$ and $X_1$ holds also with respect to $\Q$, and that $X_1$ has a continuous density with respect to $\Q$ which is nonvanishing in a neighborhood of $R_0^0$,  
the expression for the limit of the derivative for $V_0^w$ takes the form 
\begin{align*}
\lim_{w\to 0}\frac{d}{dw}V_0^w=R_0^0(F^{\Q}_{X_1}(R_0^0)-1)(\E[S_1]-1)<0,
\end{align*}
where $F^{\Q}_{X_1}(R_0^0)=\Q(X_1\leq R_0^0)$. 
A proof of this fact follows along the same lines as the proof of Proposition \ref{prop:init_der_base_case}. 
\hfill $\diamond$
\end{remark}

\section{The Gaussian model}\label{sec3}
Let us now assume that $X_1$ and $Z_1$ are normally distributed. In that case one obtains explicit expressions for $R_0$, $C_0$ and $V_0$. Let us focus on the Value-at-Risk first. 

\subsection{Value-at-Risk}
For $Y\sim N(\mu,\sigma^2)$, it is well-known that  
\begin{align}\label{var}
	\VaR_{\alpha }(Y) = -\mu  + \sigma \Phi ^{-1}(1 - \alpha). 
\end{align}
We will always assume that $\alpha\in (0,1/2)$ so that $ \Phi ^{-1}(1-\alpha)>0$. 

\begin{proposition}\label{prop:capital_req_normal}
	Suppose that $X_1\sim N(\gamma,\nu^2)$ and $Z_1\sim N(\mu,\sigma^2)$ are independent with $\mu,\gamma,\sigma,\nu>0$. Then
	\begin{align}\label{eqr}
			\VaR_{\alpha }(R_0Z_1-X_1)=\gamma-R_0\mu+\Phi^{-1}(1-\alpha)\sqrt{R_0^2\sigma^2+\nu^2}.
	\end{align}
	\begin{itemize}
		\item[(i)]
		If $\mu>\sigma\Phi^{-1}(1-\alpha)$, then  
		\begin{align}\label{er0}
			R_0=\frac{\mu\gamma+\Phi^{-1}(1-\alpha)\sqrt{\gamma^2\sigma^2+\mu^2\nu^2-\sigma^2\nu^2\Phi^{-1}(1-\alpha)^2}}{\mu^2-\sigma^2\Phi^{-1}(1-\alpha)^2}
		\end{align}
		is the unique $R_0>0$ solving Equation \eqref{r0}. 
		\item[(ii)]
		If $\mu\le \sigma\Phi^{-1}(1-\alpha)$, then Equation \eqref{r0} has no positive solution $R_0$. 
	\end{itemize}
\end{proposition}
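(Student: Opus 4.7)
The plan is to reduce the equation $\rho(R_0 Z_1 - X_1) = 0$ to a quadratic in $R_0$ and analyse its solvability according to whether the leading coefficient is positive. Write $k := \Phi^{-1}(1-\alpha) > 0$ for brevity. First I would note that, by independence, $R_0 Z_1 - X_1 \sim N(R_0 \mu - \gamma,\, R_0^2\sigma^2 + \nu^2)$, so that applying \eqref{var} gives \eqref{eqr} directly. Setting the right-hand side of \eqref{eqr} to zero yields
\begin{equation*}
R_0 \mu - \gamma = k\sqrt{R_0^2 \sigma^2 + \nu^2},
\end{equation*}
and the positivity of the right-hand side forces any candidate solution to satisfy $R_0 \mu \geq \gamma$, in particular $R_0 > 0$. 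Squaring then produces the quadratic
\begin{equation*}
(\mu^2 - k^2\sigma^2)R_0^2 - 2\mu\gamma R_0 + (\gamma^2 - k^2\nu^2) = 0.
\end{equation*}

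For part (i), I would assume $\mu > k\sigma$, so that the leading coefficient is strictly positive. The quadratic formula yields two candidate roots; a straightforward expansion of the discriminant gives $4k^2(\gamma^2\sigma^2 + \mu^2\nu^2 - \sigma^2\nu^2 k^2)$, and the bracketed expression is positive since $\mu^2\nu^2 > k^2\sigma^2\nu^2$. This produces the formula \eqref{er0} with a $\pm$ sign. The main subtlety, and the key step to carry out carefully, is to verify that only the ``$+$'' root respects the sign constraint $R_0 \mu \geq \gamma$ introduced by squaring. I would check this by substituting each candidate into $R_0 \mu - \gamma$ and comparing signs: the ``$-$'' root gives $R_0 \mu - \gamma = (\mu\gamma - k\sqrt{\cdots})\mu/(\mu^2-k^2\sigma^2) - \gamma$, which after simplification is negative whenever $\mu > k\sigma$, hence is an extraneous solution introduced by squaring. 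The ``$+$'' root therefore provides the unique positive solution to \eqref{r0}.

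For part (ii), I would argue via the auxiliary function $g(R_0) := \gamma - R_0\mu + k\sqrt{R_0^2\sigma^2 + \nu^2}$ that $g(R_0) > 0$ for all $R_0 \geq 0$. Note $g(0) = \gamma + k\nu > 0$, and
\begin{equation*}
g'(R_0) = -\mu + \frac{k \sigma^2 R_0}{\sqrt{R_0^2\sigma^2 + \nu^2}}.
\end{equation*}
When $\mu = k\sigma$ one has $k\sqrt{R_0^2\sigma^2+\nu^2} > k R_0 \sigma = R_0 \mu$, so $g(R_0) > \gamma > 0$. When $\mu < k\sigma$, setting $g'(R_0) = 0$ yields a unique minimizer $R_0^* = \mu\nu/(\sigma\sqrt{k^2\sigma^2 - \mu^2})$; substituting back, a short computation gives $g(R_0^*) = \gamma + \nu\sqrt{k^2\sigma^2 - \mu^2}/\sigma > 0$, so $g$ is bounded below by a positive constant. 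Equivalently, one can observe that the quadratic in the first step has a nonpositive leading coefficient and a nonpositive discriminant times it, yielding no admissible root.

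The main obstacle I anticipate is not the algebra itself but the sign bookkeeping: squaring the defining equation doubles the candidate set, so the uniqueness statement in (i) requires explicitly discarding the ``$-$'' root, and the non-existence claim in (ii) requires ruling out an isolated tangency at the minimizer of $g$, which is why the explicit value $g(R_0^*) > 0$ is useful.
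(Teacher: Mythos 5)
Your proposal is correct and follows essentially the same route as the paper: both derive \eqref{eqr} from the normality of $R_0Z_1-X_1$, reduce $\VaR_{\alpha}(R_0Z_1-X_1)=0$ to the quadratic $(\mu^2-\sigma^2\Phi^{-1}(1-\alpha)^2)R_0^2-2\mu\gamma R_0+(\gamma^2-\nu^2\Phi^{-1}(1-\alpha)^2)=0$, and identify \eqref{er0} as the admissible root. The only differences are in the supporting details: for uniqueness the paper invokes convexity of $x\mapsto \gamma-x\mu+\Phi^{-1}(1-\alpha)\sqrt{x^2\sigma^2+\nu^2}$ together with its values at $0$ and at infinity, whereas you discard the extraneous root via the sign constraint $R_0\mu\geq\gamma$ introduced by squaring; and for (ii) the paper uses the one-line bound $\Phi^{-1}(1-\alpha)\sqrt{x^2\sigma^2+\nu^2}\geq \Phi^{-1}(1-\alpha)\sigma x\geq \mu x$, which makes your explicit minimization of $g$ unnecessary (though it is also correct).
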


\begin{remark}\label{rem:equivR0}\normalfont 
If $\gamma\neq \nu\Phi^{-1}(1-\alpha)$, then $R_0$ in \eqref{er0} is equivalent to  
\begin{align*}
R_0=\frac{\gamma^2-\nu^2\Phi^{-1}(1-\alpha)^2}{\mu\gamma-\Phi^{-1}(1-\alpha)\sqrt{\gamma^2\sigma^2+\mu^2\nu^2-\sigma^2\nu^2\Phi^{-1}(1-\alpha)^2}}.
\end{align*}
This follows by multiplying both the numerator and denominator of \eqref{er0} by 
\begin{align*}
\mu\gamma-\Phi^{-1}(1-\alpha)\sqrt{\gamma^2\sigma^2+\mu^2\nu^2-\sigma^2\nu^2\Phi^{-1}(1-\alpha)^2},
\end{align*}
and factoring out $\mu ^2  - \sigma ^2 \Phi ^{-1}(1 - \alpha )^{2} $ from the numerator. 
The usefulness comes from the fact that we will consider $\gamma$ and $\nu$ fixed while allowing $\mu$ and $\sigma$ to vary as a result of considering different positions in a risky asset.   
\hfill $\diamond$
\end{remark}

\begin{remark}\normalfont 
Taking derivatives in \eqref{er0}, we immediately get the (intuitive) result that $\mu \mapsto R_{0} $  is decreasing and the functions $\sigma  \mapsto R_{0} $,  $\gamma \mapsto R_{0}$ and $\nu \mapsto R_{0}$ are increasing. \hfill $\diamond$
\end{remark}
\begin{remark}\normalfont 
 Note that for the existence of $R_0>0$, the mean return $\mu$ of the assets needs to exceed the threshold $\sigma\Phi^{-1}(1-\alpha)$. The latter expression depends on the volatility $\sigma$ of the assets as well as the security level $\alpha$. Only in case of a purely risk-less investment ($\mu=1,\sigma=0$) the existence of $R_0>0$ is guaranteed for all parameter values (and then trivially reduces to $R_0={\gamma+\nu\Phi^{-1}(1-\alpha)}$).\hfill $\diamond$
\end{remark}

Investing a portion $0<w<1$ of the wealth in the risky asset according to \eqref{convc}, in the Gaussian model this simply translates into replacing the parameters $(\mu,\sigma)$ by $(\mu_{w},\,\sigma_{w}):=(w\mu+1-w,w\sigma)$. 
The following result shows that, under very mild conditions, the function $w\mapsto R_0^{w}$ is strictly convex and in addition strictly decreasing for $w$ sufficiently small.   

\begin{proposition}\label{prop:R0_is_convex}
Suppose that $X_1\sim N(\gamma,\nu^2)$ and $S_1\sim N(\mu,\sigma^2)$ are independent with   
\begin{align*}
\nu>0,\sigma>0,\gamma>\nu\Phi^{-1}(1-\alpha) \text{ and } \mu>\max(1,\sigma\Phi^{-1}(1-\alpha)). 
\end{align*}
For $w\in [0,1]$, let $R_0^{w}$ be the positive solution to 
\begin{align*}
\VaR_{\alpha}(R_0^w(wS_1+1-w)-X_1)=0.
\end{align*}  
Then $w\mapsto R_0^{w}$ is strictly convex. Moreover, $R_0^{w}<R_0^0$ for all $w\in (0,\widehat{w})$, where $\widehat{w}=1$ if $\mu\geq 1+\sigma\Phi^{-1}(1-\alpha)$, and otherwise 
\begin{align*}
\widehat{w}=\frac{2(\mu-1)\nu\Phi^{-1}(1-\alpha)}{(1+\sigma\Phi^{-1}(1-\alpha)-\mu)(\mu-1+\sigma\Phi^{-1}(1-\alpha))(\gamma+\nu\Phi^{-1}(1-\alpha))}.
\end{align*}
\end{proposition}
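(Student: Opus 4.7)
The plan is to work from the alternative expression for $R_0^w$ given in Remark~\ref{rem:equivR0}, and exploit the fact that $\sqrt{aw^2+bw+c}$ is convex precisely when the discriminant $b^2-4ac$ is non-positive. Set $q:=\Phi^{-1}(1-\alpha)$, and substitute $(\mu_w,\sigma_w)=(w\mu+1-w,w\sigma)$ into Proposition~\ref{prop:capital_req_normal}. Under the hypothesis $\gamma>\nu q$, the reformulation in Remark~\ref{rem:equivR0} gives
\begin{equation*}
R_0^w \;=\; \frac{\gamma^2-\nu^2 q^2}{g(w)},\qquad g(w)=\mu_w\gamma - q\sqrt{A(w)},
\end{equation*}
with $A(w):=\sigma_w^2(\gamma^2-\nu^2 q^2)+\mu_w^2\nu^2$. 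Note $\gamma^2-\nu^2q^2>0$ and $g(w)>0$ whenever $R_0^w>0$.

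For strict convexity, I expand $A(w)=aw^2+bw+c$ with
\begin{equation*}
a=\sigma^2(\gamma^2-\nu^2q^2)+\nu^2(\mu-1)^2,\quad b=2\nu^2(\mu-1),\quad c=\nu^2,
\end{equation*}
and compute $4ac-b^2=4\nu^2\sigma^2(\gamma^2-\nu^2q^2)>0$. A direct second-derivative computation shows that $\sqrt{A(w)}$ is strictly convex exactly when $4ac-b^2>0$, so $g(w)$ is strictly concave. Since $g$ is positive, $w\mapsto 1/g(w)$ and hence $w\mapsto R_0^w$ is strictly convex.

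For the monotonicity statement I rewrite $R_0^w<R_0^0$ as $g(w)>g(0)=\gamma-q\nu>0$, i.e.\ $(\mu_w-1)\gamma+q\nu > q\sqrt{A(w)}$. Both sides are non-negative, so I may square. After expanding and collecting, the inequality factorises as
\begin{equation*}
(\gamma-q\nu)\bigl[\beta_w^2(\gamma+q\nu)+2\beta_w q\nu - q^2\sigma_w^2(\gamma+q\nu)\bigr]>0,
\end{equation*}
with $\beta_w=w(\mu-1)$. Since $\gamma-q\nu>0$, after dividing by $w>0$ the problem reduces to
\begin{equation*}
w(\mu-1-q\sigma)(\mu-1+q\sigma)(\gamma+q\nu)+2(\mu-1)q\nu>0.
\end{equation*}
If $\mu\ge 1+q\sigma$ every summand is non-negative and the second strictly positive, so the inequality holds on all of $(0,1]$, giving $\widehat w=1$. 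Otherwise the coefficient of $w$ is negative while the constant term is positive, and solving linearly for $w$ reproduces exactly the formula for $\widehat w$ stated in the proposition.

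The only non-routine step is the factorisation in the third paragraph: once one spots the factor $\gamma-q\nu$ hiding in the squared inequality, the remaining bracket reorganises cleanly into the two factors $(\mu-1\pm q\sigma)$ and the problem becomes transparent. Everything else is either direct substitution or the standard fact linking convexity of $\sqrt{A}$ to the sign of $\operatorname{disc}(A)$.
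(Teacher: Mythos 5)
Your proof is correct. The convexity half is the paper's argument almost verbatim: both pass to the reciprocal representation $R_0^w=(\gamma^2-\nu^2q^2)/h(w)$ from Remark~\ref{rem:equivR0} (with $q:=\Phi^{-1}(1-\alpha)$), reduce strict concavity of $h$ to the sign of $2AA''-(A')^2=4ac-b^2$ for the quadratic $A$ under the square root, and compute $4ac-b^2=4\nu^2\sigma^2(\gamma^2-\nu^2q^2)>0$. For the threshold the two arguments diverge in the final step: the paper solves the \emph{equation} $R_0^w=R_0^0$, observes that its only positive root is $\widehat w$ (or that none exists when $\mu\geq 1+\sigma q$), and then combines this with the strict convexity just proved and the sign of $\tfrac{d}{dw}R_0^w$ at $w=0$ to place $R_0^w$ below $R_0^0$ on $(0,\widehat w)$; you instead turn the \emph{inequality} $R_0^w<R_0^0$ into the equivalent condition $w(\mu-1-q\sigma)(\mu-1+q\sigma)(\gamma+q\nu)+2(\mu-1)q\nu>0$ by squaring (legitimate since $\mu>1$ makes both sides nonnegative) and extracting the factor $\gamma-q\nu>0$ — the same polynomial identity the paper uses, but tracked as an inequality. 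Your variant makes the threshold claim logically independent of the convexity claim and in fact characterizes $\{w\in(0,1]:R_0^w<R_0^0\}$ exactly as $(0,\widehat w)$; the paper's variant avoids the squaring-equivalence bookkeeping by reusing convexity. Both are complete; the difference is cosmetic rather than structural.
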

Note that the quantity $\widehat{w}$ is of particular interest, since in view of Proposition \ref{prop:icxV0} it represents the limit weight of risky assets until which the overall capital requirement is not larger than $R_0^0$ (the one for purely risk-less assets), and the needed premium $V_0^w$ is smaller than the one for $w=0$. 
If $\max(1,\sigma\Phi^{-1}(1-\alpha))<\mu<1+\sigma\Phi^{-1}(1-\alpha)$, then it is easily verified that $\nu\mapsto \widehat{w}$ is strictly increasing. Hence, more insurance risk allows for more risky investment without increasing the total capital requirement above the level corresponding to purely risk-less investment. \\

The capital $C_0$ given by \eqref{C0coc} can be computed explicitly in the Gaussian setting. 
Together with the explicit expression for $R_0$ in Proposition \ref{prop:capital_req_normal} this means that all the quantities $R_0,C_0,V_0$ can be computed explicitly. 

\begin{proposition}\label{prop:C0V0_normal_model}
Suppose that $X_1\sim N(\gamma,\nu^2)$ and $Z_1\sim N(\mu,\sigma^2)$ are independent with $\mu,\gamma,\sigma,\nu>0$. 
If there exists an $R_0>0$ solving $\VaR_{\alpha}(R_0Z_1-X_1)=0$, then 
\begin{align*}
C_0&=\frac{R_0\mu-\gamma}{1+\eta}(1+\delta(\alpha))\quad \text{and} \\
V_0&=\frac{\gamma(1+\delta(\alpha))}{1+\eta}+R_0\frac{1+\eta-\mu(1+\delta(\alpha))}{1+\eta},
\end{align*}
where 
\begin{align*}
	0<\delta(\alpha):=\frac{\phi(\Phi^{-1}(1-\alpha))}{\Phi^{-1}(1-\alpha)}-\alpha<\frac{\alpha}{\Phi^{-1}(1-\alpha)^2}.
\end{align*}
\end{proposition}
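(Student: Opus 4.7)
The plan is to reduce both formulas to the standard truncated-mean identity for a normal random variable, exploiting the defining equation for $R_0$, and to derive the bounds on $\delta(\alpha)$ from the classical Mills ratio inequalities.

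First, I would note that by independence $Y:=R_0Z_1-X_1\sim N(m,s^2)$ with $m=R_0\mu-\gamma$ and $s^2=R_0^2\sigma^2+\nu^2$. The assumption that $R_0>0$ solves $\VaR_\alpha(Y)=0$, combined with the VaR formula \eqref{var} (equivalently, with \eqref{eqr}), gives
\begin{align*}
m = s\,\Phi^{-1}(1-\alpha),
\end{align*}
so in particular $m>0$ and $m/s=\Phi^{-1}(1-\alpha)$.

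Next, I would apply the classical truncated-mean identity $\E[Y^+]=m\,\Phi(m/s)+s\,\phi(m/s)$ (which follows from splitting $Y=m+sN$ for $N\sim N(0,1)$ and computing $\E[(m+sN)\mathds{1}_{\{N>-m/s\}}]$ via $\phi'(x)=-x\phi(x)$). Substituting $\Phi(m/s)=1-\alpha$ and $s=m/\Phi^{-1}(1-\alpha)$ yields
\begin{align*}
\E[Y^+]=m\Big(1-\alpha+\frac{\phi(\Phi^{-1}(1-\alpha))}{\Phi^{-1}(1-\alpha)}\Big)=m\big(1+\delta(\alpha)\big).
\end{align*}
Dividing by $1+\eta$ produces the stated expression for $C_0=(R_0\mu-\gamma)(1+\delta(\alpha))/(1+\eta)$, and then $V_0=R_0-C_0$ rearranges directly to the claimed formula.

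Finally, for the bounds on $\delta(\alpha)$, let $q:=\Phi^{-1}(1-\alpha)>0$, so that $\alpha=1-\Phi(q)=\int_q^\infty\phi(x)\,dx$ and $\delta(\alpha)=\phi(q)/q-\alpha$. The inequality $\delta(\alpha)>0$ is the standard upper Mills ratio bound $1-\Phi(q)<\phi(q)/q$, which follows from $\phi(x)\le (x/q)\phi(x)=-\phi'(x)/q$ for $x\ge q$ and integration from $q$ to $\infty$. The inequality $\delta(\alpha)<\alpha/q^2$ rearranges to the lower Mills ratio bound $q\phi(q)/(q^2+1)<1-\Phi(q)$, which follows by differentiating $\phi(x)/x$ to obtain $(\phi(x)/x)'=-\phi(x)(1+1/x^2)$ and integrating from $q$ to $\infty$. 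The main, very mild, obstacle is just recalling these two Mills ratio inequalities; everything else is direct bookkeeping built on the key identity $m=s\Phi^{-1}(1-\alpha)$.
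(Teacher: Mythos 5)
Your proposal is correct and follows essentially the same route as the paper: both reduce $C_0$ to the truncated-mean identity for $Y=R_0Z_1-X_1\sim N(m,s^2)$ combined with the key relation $m=s\,\Phi^{-1}(1-\alpha)$ from the VaR constraint, then obtain $V_0=R_0-C_0$ and the bounds on $\delta(\alpha)$ from the Mills ratio inequalities. The only cosmetic difference is that the paper packages the truncated-mean computation as a small lemma valid for any translation-invariant, positively homogeneous $\rho$ (so it can be reused verbatim for the Expected Shortfall version), whereas you specialize directly to $\VaR_\alpha$.
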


\begin{remark}\label{rem:llo_gaussian}\normalfont
From Proposition \ref{prop:C0V0_normal_model} it follows that the value of the limited liability option (cf.~Remark \ref{rem:llo}) in the Gaussian model and $R_0$ determined by the risk measure $\VaR_{\alpha}$ is 
\begin{align*}
\delta(\alpha)\frac{R_0\mu-\gamma}{1+\eta}. 
\end{align*}
For $\alpha$ small, e.g.~$\alpha=0.005$, the value of the limited liability option is very small due to the light Gaussian tails.  
\hfill $\diamond$
\end{remark}

	The following result shows that investment in the (sufficiently attractive) risky asset always makes the payoff for a capital provider more attractive compared to the case with only risk-less investment. Hence, the contribution $C_0$ to the financing of the capital requirement should increase if risky investments are allowed.  
	
\begin{proposition}\label{prop:C0greater_normal_model}
Suppose that $X_1\sim N(\gamma,\nu^2)$ and $S_1\sim N(\mu,\sigma^2)$ are independent with $\gamma,\sigma,\nu>0$ and  
$\mu > \sigma  \Phi ^{-1}(1-\alpha )$.  
For $w\in [0,1]$, let $R_0^{w}$ be the positive solution to 
\begin{align*}
\VaR_{\alpha}(R_0^w(wS_1+1-w)-X_1)=0.
\end{align*}  
Then $C_{0}^{0} < C_{0}^{w}$ for all $w \in (0,1]$. 
\end{proposition}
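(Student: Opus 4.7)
The plan is to reduce the claim to a strict inequality between $R_0^w\mu_w$ and $R_0^0$, and then read off that inequality directly from the defining equation of $R_0$ in Proposition \ref{prop:capital_req_normal}. Concretely, first I would apply Proposition \ref{prop:C0V0_normal_model} to the Gaussian $Z_1^w = wS_1+1-w$ (whose parameters are $\mu_w = w\mu+1-w$ and $\sigma_w = w\sigma$) to obtain
\[
C_0^w = \frac{R_0^w\mu_w-\gamma}{1+\eta}\bigl(1+\delta(\alpha)\bigr), \qquad C_0^0 = \frac{R_0^0-\gamma}{1+\eta}\bigl(1+\delta(\alpha)\bigr),
\]
using $\mu_0=1$. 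Since $1+\delta(\alpha)>0$ and $\alpha<1/2$ ensures $\Phi^{-1}(1-\alpha)>0$, it is enough to prove $R_0^w\mu_w > R_0^0$ for every $w\in(0,1]$.

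Next, I would use the zero-condition from \eqref{eqr}. For any $w\in[0,1]$, the assumption $\mu>\sigma\Phi^{-1}(1-\alpha)$ together with the elementary inequality $\mu_w > \sigma_w\Phi^{-1}(1-\alpha)$ (which follows since $w(\mu-\sigma\Phi^{-1}(1-\alpha))>0 \geq w-1$) puts us into case (i) of Proposition \ref{prop:capital_req_normal}, so $R_0^w>0$ exists uniquely and satisfies
\[
R_0^w\mu_w = \gamma + \Phi^{-1}(1-\alpha)\sqrt{(R_0^w)^2\sigma_w^2 + \nu^2}.
\]
Specialising this to $w=0$ (where $\sigma_0=0$) recovers $R_0^0 = \gamma + \Phi^{-1}(1-\alpha)\nu$.

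Finally, I would compare the two expressions. For $w\in(0,1]$ both $R_0^w$ and $\sigma_w$ are strictly positive, so $(R_0^w)^2\sigma_w^2>0$ and therefore $\sqrt{(R_0^w)^2\sigma_w^2+\nu^2}>\nu$. Multiplying by $\Phi^{-1}(1-\alpha)>0$ and adding $\gamma$ yields $R_0^w\mu_w > R_0^0$, which combined with the first paragraph gives $C_0^w>C_0^0$.

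I do not anticipate a genuine obstacle here: the proof is essentially a one-line observation once the VaR equation is rewritten in the form $R_0\mu = \gamma + \Phi^{-1}(1-\alpha)\sqrt{R_0^2\sigma^2+\nu^2}$, which isolates precisely the combination $R_0\mu$ appearing in the formula for $C_0$ in Proposition \ref{prop:C0V0_normal_model}. The only point worth double-checking, as indicated above, is that the hypothesis $\mu>\sigma\Phi^{-1}(1-\alpha)$ propagates to $\mu_w>\sigma_w\Phi^{-1}(1-\alpha)$ for all $w\in[0,1]$, so that $R_0^w$ is well defined and positive throughout the comparison.
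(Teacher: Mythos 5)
Your proposal is correct and follows essentially the same route as the paper: both reduce the claim via Proposition \ref{prop:C0V0_normal_model} to showing $R_0^w\mu_w > R_0^0$, and both obtain this from the rewritten VaR equation $R_0^w\mu_w = \gamma + \Phi^{-1}(1-\alpha)\sqrt{(R_0^w)^2\sigma_w^2+\nu^2}$ together with $\sqrt{(R_0^w)^2\sigma_w^2+\nu^2}>\nu$ for $w>0$. Your additional check that $\mu_w>\sigma_w\Phi^{-1}(1-\alpha)$ propagates from the hypothesis is a small, correct piece of bookkeeping the paper leaves implicit.
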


If $R_0^w<R_0^0$ and $C_0^w>C_0^0$, then $V_0^w:=R_0^w-C_0^w<R_0^0-C_0^0=:V_0^0$.  
Hence, by combining Propositions \ref{prop:R0_is_convex} and \ref{prop:C0greater_normal_model} we immediately obtain the following result. 

\begin{corollary}\label{cor:Vw_smaller_than_V0}
Suppose that $X_1\sim N(\gamma,\nu^2)$ and $S_1\sim N(\mu,\sigma^2)$ are independent with $\sigma,\nu>0$,  
$\gamma>\nu\Phi^{-1}(1-\alpha)$ and $\mu>\max(1,\sigma\Phi^{-1}(1-\alpha))$.  
For $w\in [0,1]$, let $R_0^{w}$ be the positive solution to 
\begin{align*}
\VaR_{\alpha}(R_0^w(wS_1+1-w)-X_1)=0.
\end{align*}  
Then $V_0^{w}\leq V_0^0$ for all $w\in (0,\widehat{w})$, where $\widehat{w}$ is given in Proposition \ref{prop:R0_is_convex}. 
\end{corollary}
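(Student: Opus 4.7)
The plan is to obtain the result as an immediate consequence of Propositions \ref{prop:R0_is_convex} and \ref{prop:C0greater_normal_model}, using the defining identity $V_0^w = R_0^w - C_0^w$. The strategy is to establish strict inequalities in opposite directions for the two summands and subtract.

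First, I would check that the hypotheses of the corollary imply those of Proposition \ref{prop:R0_is_convex}: the corollary assumes $\sigma, \nu > 0$, $\gamma > \nu \Phi^{-1}(1-\alpha)$, and $\mu > \max(1, \sigma \Phi^{-1}(1-\alpha))$, which is exactly what Proposition \ref{prop:R0_is_convex} demands. That proposition then guarantees, with $\widehat{w}$ as defined there, that
\begin{equation*}
R_0^w < R_0^0 \quad \text{for all } w \in (0, \widehat{w}).
\end{equation*}
Second, I would verify that the hypotheses of Proposition \ref{prop:C0greater_normal_model} are met: it requires only $\gamma, \sigma, \nu > 0$ and $\mu > \sigma \Phi^{-1}(1-\alpha)$, which are weaker than the corollary's assumptions. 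Hence
\begin{equation*}
C_0^0 < C_0^w \quad \text{for all } w \in (0, 1],
\end{equation*}
and in particular on $(0, \widehat{w}) \subseteq (0, 1]$.

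Combining the two inequalities by subtraction, for every $w \in (0, \widehat{w})$,
\begin{equation*}
V_0^w = R_0^w - C_0^w < R_0^0 - C_0^0 = V_0^0,
\end{equation*}
which yields the claim (with strict inequality on the open interval, extending to the weak inequality stated in the corollary by continuity at the endpoints, which is immediate from the explicit formulas in Proposition \ref{prop:C0V0_normal_model}).

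There is essentially no analytic obstacle here: all the substantive work has been done in the two referenced propositions. The only step that requires attention is the bookkeeping of hypotheses, namely checking that the corollary's assumption set is strong enough to invoke both results simultaneously, which it is by design.
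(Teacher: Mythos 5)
Your proposal is correct and follows exactly the paper's own argument: the paper likewise deduces the corollary by combining Proposition \ref{prop:R0_is_convex} (giving $R_0^w<R_0^0$ on $(0,\widehat{w})$) with Proposition \ref{prop:C0greater_normal_model} (giving $C_0^w>C_0^0$) and subtracting via $V_0^w=R_0^w-C_0^w$. The only superfluous step is your appeal to continuity at the endpoints: the claim is only over the open interval $(0,\widehat{w})$, where the strict inequality already implies the stated weak one.
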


Note that the statement of Corollary \ref{cor:Vw_smaller_than_V0} is stronger than what we would get by combining Propositions \ref{prop:icxV0} and \ref{prop:R0_is_convex}, since we do not need any condition involving the cost-of-capital rate $\eta$. 
The statement of Corollary \ref{cor:Vw_smaller_than_V0} is a statement for $w$ sufficiently small where $R_0^w<R_0^0$. For $w$ larger with $R_0^w>R_0^0$ there is no hope for a simple statement to hold regardless of the parameter $\eta$. Note that if we let $\eta\to\infty$, then $V_0^w\to R_0^w$ and $R_0^w$ is typically not monotone, although decreasing for $w$ sufficiently small. However, by evaluating $V_0^w$ numerically we do observe that $V_0^{w}\leq V_0^0$ for all $w$ for realistic parameter values, see e.g.~Figure \ref{fig1} in Section \ref{sec5} for an illustration. We emphasize that the statement about $V_0^w$ for realistic parameter values implicitly assumes that $\mu\geq 1$ is not too small. That is, for $\mu=1$ it follows immediately from Proposition \ref{prop:C0V0_normal_model} that 
\begin{align*}
V_0^w=\frac{\gamma(1+\delta(\alpha))}{1+\eta}+R_0^w\frac{\eta-\delta(\alpha)}{1+\eta}.
\end{align*} 
So for $\mu=1$ we see that $V_0^{w}>V_0^0$ is equivalent to $R_0^{w}>R_0^0$. From Remark \ref{rem:equivR0} it follows that $R_0^{w}>R_0^0$ holds for all $w\in (0,1]$ if $\mu=1$ and $\gamma>\nu\Phi^{-1}(1-\alpha)$. 

\begin{remark}\normalfont \label{rem3.5}
In this paper we restrict our attention to the setting where $S_1$ and $X_1$ are independent. If the correlation coefficient $\Cor(S_1,X_1)>0$, then risky investments would not only generate a more favorable expectation (assuming $\mu>1$) but also partly hedge the insurance liability. One may wonder whether risky investments could be advisable even in the case $\Cor(S_1,X_1)<0$. In this case the expressions for $V_0$ and $C_0$ in Proposition \ref{prop:C0V0_normal_model} remain the same, but the expression for $R_0$ changes. With $c:=\Cor(S_1,X_1)\leq 0$, and under the assumption that $\mu>\sigma\Phi^{-1}(1-\alpha)$ and $\gamma\neq\nu\Phi^{-1}(1-\alpha)$,    
\begin{align*}
C^w_0&=\frac{R^w_0\mu_w-\gamma}{1+\eta}(1+\delta(\alpha)), \\
V^w_0&=\frac{\gamma(1+\delta(\alpha))}{1+\eta}+R^w_0\frac{1+\eta-\mu_w(1+\delta(\alpha))}{1+\eta}, \\
R^w_0&=\frac{\gamma^2-\nu^2\Phi^{-1}(1-\alpha)^2}{h(w)}, 
\end{align*}
where $\mu_w:=w\mu+1-w$, $\sigma_w:=w\sigma$, and 
\begin{align*}
h(w)&=\mu_w\gamma-c\sigma_w\nu\Phi^{-1}(1-\alpha)^2 
-\Big((\mu_w\gamma-c\sigma_w\nu\Phi^{-1}(1-\alpha)^2)^2 \\
&\quad-(\mu_w^2-\sigma_w^2\Phi^{-1}(1-\alpha)^2)(\gamma^2-\nu^2\Phi^{-1}(1-\alpha)^2)\Big)^{1/2}.
\end{align*}
The only difference from the case $c=0$ is that a term involving $c$ appears twice in the expression for $h(w)$.  
Straightforward computations give   
\begin{align*}
\frac{dR^w_0}{dw}(0)&=-(\gamma^2-\nu^2\Phi^{-1}(1-\alpha))\frac{h'(0)}{h^2(0)}\\
&=-(\gamma+\nu\Phi^{-1}(1-\alpha))(\mu-1+c\sigma\Phi^{-1}(1-\alpha)).
\end{align*}
We conclude that as long as $\mu>1-c\sigma\Phi^{-1}(1-\alpha)$, small positions $w>0$ in the risky asset will still imply $R_0^w<R_0^0$.  
\hfill $\diamond$
\end{remark}

\subsection{Expected Shortfall}
In the case of normal distributions, the expected shortfall of $Y\sim N(\mu,\sigma^2)$ at safety level $\alpha$ can simply be expressed as 
\begin{align*}
	\text{ES}_{\alpha }(Y) = -\mu + \sigma \frac{\phi (\Phi ^{-1}(1-\alpha ) ) }{\alpha }, 
\end{align*}
so the constant $\Phi ^{-1}(1-\alpha)$ for the Value-at-Risk in \eqref{var} is just replaced by another constant 
\begin{align*}
\psi=\frac{1}{\alpha}\int_0^{\alpha}\Phi^{-1}(1-\beta)d\beta
=\frac{1}{\alpha}\int_{\Phi^{-1}(1-\alpha)}^{\infty}z\phi(z)dz=\frac{\phi(\Phi^{-1}(1-\alpha))}{\alpha}.
\end{align*}
Correspondingly, we can directly adapt the results previously obtained for 
the Value-at-Risk. For instance, Proposition \ref{prop:capital_req_normal} turns into the following result. 

\begin{proposition}\label{prop:R0_ES}
	Suppose that $X_1\sim N(\gamma,\nu^2)$ and $Z_1\sim N(\mu,\sigma^2)$ are independent with $\mu,\gamma,\sigma,\nu>0$. Then, for $\rho(\cdot) = \ES_{\alpha}(\cdot)$, 
	\begin{itemize}
		\item[(i)]
		if $\mu>\sigma\psi$, then  
		\begin{align}\label{er0ES}
			R_0=\frac{\mu\gamma+\psi\sqrt{\gamma^2\sigma^2+\mu^2\nu^2-\sigma^2\nu^2\psi^2}}{\mu^2-\sigma^2\psi^2}
		\end{align}
		is the unique $R_0>0$ solving Equation \eqref{r0}. 
		\item[(ii)]
		if $\mu\le \sigma\psi$, then Equation \eqref{r0} has no positive solution $R_0$. 
	\end{itemize}
\end{proposition}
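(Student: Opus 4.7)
The plan is to observe that Proposition \ref{prop:R0_ES} is a direct transcription of Proposition \ref{prop:capital_req_normal}, with the scalar $\Phi^{-1}(1-\alpha)$ everywhere replaced by $\psi=\phi(\Phi^{-1}(1-\alpha))/\alpha$. The reason is structural: for a normal random variable $Y\sim N(m,s^2)$ one has $\VaR_\alpha(Y)=-m+s\Phi^{-1}(1-\alpha)$ and $\ES_\alpha(Y)=-m+s\psi$, so the two risk measures coincide on Gaussians up to this single multiplicative constant. My first step is therefore to record that, by independence and Gaussianity, $R_0Z_1-X_1\sim N(R_0\mu-\gamma,R_0^2\sigma^2+\nu^2)$ and hence
\begin{align*}
\ES_\alpha(R_0Z_1-X_1)=\gamma-R_0\mu+\psi\sqrt{R_0^2\sigma^2+\nu^2}=:f(R_0),
\end{align*}
which is the exact ES analogue of \eqref{eqr}.

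With this identity in hand, setting $f(R_0)=0$ and noting that $f(0)=\gamma+\psi\nu>0$ forces any positive root to satisfy $R_0\mu\geq\gamma$, so squaring is a reversible manipulation and yields the quadratic $(\mu^2-\sigma^2\psi^2)R_0^2-2\mu\gamma R_0+(\gamma^2-\psi^2\nu^2)=0$, whose discriminant simplifies to $4\psi^2\bigl(\nu^2(\mu^2-\sigma^2\psi^2)+\sigma^2\gamma^2\bigr)$. Under the hypothesis $\mu>\sigma\psi$ of (i), both the leading coefficient and the discriminant are positive, and the larger root is precisely \eqref{er0ES}; it is automatically positive. For uniqueness I would differentiate: $f'(R_0)=-\mu+\psi\sigma^2R_0/\sqrt{R_0^2\sigma^2+\nu^2}$ is strictly bounded above by $-\mu+\psi\sigma<0$, so $f$ is strictly decreasing on $[0,\infty)$ and \eqref{er0ES} is the only positive zero.

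For (ii) I would split the analysis. When $\mu=\sigma\psi$, the identity $f(R_0)=\gamma+\psi\bigl(\sqrt{R_0^2\sigma^2+\nu^2}-R_0\sigma\bigr)>\gamma>0$ rules out a positive zero immediately, since $\nu>0$. When $\mu<\sigma\psi$, the function $f$ has a unique interior critical point $R_0^\ast=\mu\nu/(\sigma\sqrt{\sigma^2\psi^2-\mu^2})$; using the relation $\sqrt{R_0^{\ast 2}\sigma^2+\nu^2}=\psi R_0^\ast\sigma^2/\mu$ implied by $f'(R_0^\ast)=0$, I would obtain $f(R_0^\ast)=\gamma+R_0^\ast(\sigma^2\psi^2-\mu^2)/\mu>0$, showing that $f$ is strictly positive on $[0,\infty)$ and hence that \eqref{r0} has no positive solution. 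The only mild obstacle throughout is bookkeeping: verifying that each positivity test in the VaR argument still goes through literally after substituting $\Phi^{-1}(1-\alpha)\mapsto\psi$, and dispatching the boundary case $\mu=\sigma\psi$ that is not explicitly present in the VaR treatment.
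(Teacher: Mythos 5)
Your proof is correct and follows essentially the same route as the paper, which simply observes that $\ES_\alpha$ of a Gaussian equals $\VaR_\alpha$ with $\Phi^{-1}(1-\alpha)$ replaced by $\psi$ and then reruns the proof of Proposition \ref{prop:capital_req_normal} (quadratic from squaring, sign analysis of the leading coefficient). The only stylistic differences are that you establish uniqueness via strict monotonicity of $f$ rather than the paper's convexity-plus-linear-bounds argument, and that your critical-point computation in case (ii) is more work than necessary: the bound $\sqrt{R_0^2\sigma^2+\nu^2}\geq R_0\sigma$ already gives $f(R_0)\geq\gamma+R_0(\sigma\psi-\mu)\geq\gamma>0$ for $\mu\leq\sigma\psi$, covering both subcases at once.
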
 

Proposition \ref{prop:R0_is_convex} is easily adjusted to $\ES_{\alpha}$ instead of $\VaR_{\alpha}$. Replacing $\Phi^{-1}(1-\alpha)$ by $\psi$ gives the following result.

\begin{proposition}\label{prop:R0_is_convex_ES}
Suppose that $X_1\sim N(\gamma,\nu^2)$ and $S_1\sim N(\mu,\sigma^2)$ are independent with   
$\nu>0,\sigma>0$, $\gamma>\nu\psi$ and $\mu>\max(1,\sigma\psi)$. 
For $w\in [0,1]$, let $R_0^{w}$ be the positive solution to 
\begin{align*}
\ES_{\alpha}(R_0^w(wS_1+1-w)-X_1)=0.
\end{align*}  
Then $w\mapsto R_0^{w}$ is strictly convex. Moreover, $R_0^{w}<R_0^0$ for all $w\in (0,\widehat{w})$, where $\widehat{w}=1$ if $\mu\geq 1+\sigma\psi$, and otherwise 
\begin{align*}
\widehat{w}=\frac{2(\mu-1)\nu\psi}{(1+\sigma\psi-\mu)(\mu-1+\sigma\psi)(\gamma+\nu\psi)}.
\end{align*}
\end{proposition}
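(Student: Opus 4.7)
The plan is to observe that the entire setup is isomorphic to the Value-at-Risk case via the substitution $\Phi^{-1}(1-\alpha)\mapsto\psi$, so the proof of Proposition \ref{prop:R0_is_convex} carries over with only cosmetic changes. The key algebraic fact is that for $Y\sim N(m,s^2)$ one has $\ES_{\alpha}(Y)=-m+s\psi$, so by independence of $X_1$ and $S_1$ and normality,
\[
\ES_{\alpha}(R_0^w(wS_1+1-w)-X_1)
 = -(R_0^w\mu_w-\gamma)+\psi\sqrt{(R_0^w)^2\sigma_w^2+\nu^2},
\]
where $(\mu_w,\sigma_w)=(w\mu+1-w,\,w\sigma)$ as in the paragraph preceding Proposition \ref{prop:R0_is_convex}. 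Setting this to zero produces an equation of exactly the same form as \eqref{eqr} equated to zero, but with $\psi$ in place of $\Phi^{-1}(1-\alpha)$.

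First I would verify the hypotheses for Proposition \ref{prop:R0_ES} to apply uniformly in $w$: the assumption $\mu>\max(1,\sigma\psi)$ yields
\[
\mu_w-\sigma_w\psi = w(\mu-\sigma\psi)+(1-w)>0
\]
for every $w\in[0,1]$, and $\gamma>\nu\psi$ ensures positivity of the discriminant $\gamma^2\sigma_w^2+\mu_w^2\nu^2-\sigma_w^2\nu^2\psi^2$. Hence existence and uniqueness of $R_0^w>0$ follow, with the closed form
\[
R_0^w=\frac{\mu_w\gamma+\psi\sqrt{\gamma^2\sigma_w^2+\mu_w^2\nu^2-\sigma_w^2\nu^2\psi^2}}{\mu_w^2-\sigma_w^2\psi^2}.
\]

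Strict convexity of $w\mapsto R_0^w$ then follows from the same differentiation argument used in Proposition \ref{prop:R0_is_convex}: every step of that computation depends on the constant $\Phi^{-1}(1-\alpha)$ only through its positivity and through keeping the radicand positive, both of which hold here with $\psi$ in place of $\Phi^{-1}(1-\alpha)$. The threshold $\widehat{w}$ is then obtained by solving $R_0^w=R_0^0$ using the alternative form from Remark \ref{rem:equivR0} (available because $\gamma>\nu\psi$), which reduces to a linear equation in $w$ whose solution is the stated expression, provided it lies in $(0,1]$. When $\mu\geq 1+\sigma\psi$, the negativity of the derivative at $w=0$ together with strict convexity shows that $R_0^w$ cannot rejoin $R_0^0$ on $(0,1]$, yielding $\widehat{w}=1$.

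The main obstacle, such as it is, amounts to verifying that no step in the original proof of Proposition \ref{prop:R0_is_convex} uses any property of $\Phi^{-1}(1-\alpha)$ beyond its positivity. Since $\psi>0$ (indeed $\psi>\Phi^{-1}(1-\alpha)$), the transposition is mechanical and no additional analytic input is required.
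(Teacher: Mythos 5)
Your proposal is correct and matches the paper's own treatment: the paper proves this result precisely by noting that $\ES_{\alpha}$ of a normal variable has the same form as $\VaR_{\alpha}$ with $\Phi^{-1}(1-\alpha)$ replaced by $\psi$, so the proof of Proposition \ref{prop:R0_is_convex} transfers verbatim. Your additional checks (positivity of $\mu_w-\sigma_w\psi$ and of the radicand under the stated hypotheses) correctly confirm that the substitution is indeed purely mechanical.
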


Proposition \ref{prop:C0V0_normal_model} is also easily adjusted to $\ES_{\alpha}$ instead of $\VaR_{\alpha}$. Replacing $1+\delta(\alpha)$ in Proposition \ref{prop:C0V0_normal_model} by $\Phi (\psi) + {\phi(\psi)}/{\psi}$ gives the following result. 

\begin{proposition}\label{prop:C0_V0_ES}
	Suppose that $X_1\sim N(\gamma,\nu^2)$ and $Z_1\sim N(\mu,\sigma^2)$ are independent with $\mu,\gamma,\sigma,\nu>0$. 
	If there exists an $R_0>0$ solving $\ES_{\alpha}(R_0Z_1-X_1)=0$, then 
	\begin{align*}
	C_{0} &= (R_0 \mu  - \gamma  ) \frac{\Phi (\psi  ) + \phi (\psi  ) /{\psi  }  }{1 + \eta  },\\
	V_{0} &= \gamma \frac{\Phi (\psi) + {\phi(\psi)}/{\psi}}{1+\eta} + R_0 \frac{1+\eta -\mu(\Phi (\psi) + {\phi(\psi)}/{\psi})}{1+\eta}.
\end{align*}
\end{proposition}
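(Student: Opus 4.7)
The plan is to mirror the proof of Proposition \ref{prop:C0V0_normal_model} almost verbatim, exploiting the fact that in that proof the constant $\Phi^{-1}(1-\alpha)$ entered only through the identity it provides for the standardized mean of the net position, namely $(R_0\mu-\gamma)/\sqrt{R_0^2\sigma^2+\nu^2}=\Phi^{-1}(1-\alpha)$. Under the present assumptions, the analogous identity is available with $\Phi^{-1}(1-\alpha)$ replaced by $\psi$, and the rest of the argument is then purely cosmetic.

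First I would set $Y:=R_0Z_1-X_1$ and note that independence plus normality give $Y\sim N(m,s^2)$ with $m:=R_0\mu-\gamma$ and $s^2:=R_0^2\sigma^2+\nu^2$. The condition $\ES_{\alpha}(Y)=0$ then becomes $-m+s\psi=0$, so
\begin{align*}
\frac{m}{s}=\psi,\qquad m>0,
\end{align*}
the positivity of $m$ being a consequence of $\psi,s>0$ and the hypothesized existence of $R_0>0$.

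Next I would invoke the standard Gaussian identity
\begin{align*}
\E[Y^+]=m\,\Phi(m/s)+s\,\phi(m/s),
\end{align*}
substitute $m/s=\psi$ into the arguments of $\Phi$ and $\phi$, and then use $s=m/\psi$ in the second term to pull out a common factor of $m$:
\begin{align*}
\E[Y^+]=m\Phi(\psi)+s\phi(\psi)=(R_0\mu-\gamma)\Bigl(\Phi(\psi)+\frac{\phi(\psi)}{\psi}\Bigr).
\end{align*}
Dividing by $1+\eta$ yields the claimed formula for $C_0$, and $V_0=R_0-C_0$ produces the formula for $V_0$ after a one-line rearrangement.

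There is essentially no obstacle here: the only substantive content is the passage from the acceptability constraint to the identity $m/s=\psi$, and this simply reflects that $\ES_{\alpha}$ applied to a $N(m,s^2)$ variable is $-m+s\psi$. Compared to the VaR case, the \emph{numerical} size of the correction changes (since $\psi>\Phi^{-1}(1-\alpha)$), but the structure of the proof and its validity do not, so no genuine difficulty arises.
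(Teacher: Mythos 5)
Your proof is correct and follows essentially the same route as the paper, which obtains the result by applying its Lemma \ref{lem:C0lemma} (valid for any translation-invariant, positively homogeneous $\rho$) with $\rho(G)=\ES_{\alpha}(G)=\psi$; your direct computation of $\E[Y^+]$ for $Y\sim N(m,s^2)$ with $m/s=\psi$ is exactly that lemma written out for this case. No gaps.
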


\noindent Finally, Corollary \ref{cor:Vw_smaller_than_V0} has the following version for $\ES_{\alpha}$ instead of $\VaR_{\alpha}$. 

\begin{corollary}
Suppose that $X_1\sim N(\gamma,\nu^2)$ and $S_1\sim N(\mu,\sigma^2)$ are independent with $\sigma,\nu>0$,  
$\gamma>\nu\psi$ and $\mu>\max(1,\sigma\psi)$.  
For $w\in [0,1]$, let $R_0^{w}$ be the positive solution to 
\begin{align*}
\ES_{\alpha}(R_0^w(wS_1+1-w)-X_1)=0.
\end{align*}  
Then $V_0^{w}\leq V_0^0$ for all $w\in (0,\widehat{w})$, where $\widehat{w}$ is given in Proposition \ref{prop:R0_is_convex_ES}.
\end{corollary}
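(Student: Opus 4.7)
The plan is to follow the argument of Corollary \ref{cor:Vw_smaller_than_V0} verbatim, replacing the constant $\Phi^{-1}(1-\alpha)$ by $\psi=\phi(\Phi^{-1}(1-\alpha))/\alpha$ throughout. The structural reason this works is that $\ES_\alpha(N(\mu,\sigma^2)) = -\mu + \sigma\psi$ has the same algebraic form as $\VaR_\alpha(N(\mu,\sigma^2))=-\mu+\sigma\Phi^{-1}(1-\alpha)$, so every explicit formula in the Value-at-Risk case admits a direct Expected Shortfall counterpart via this substitution. Propositions \ref{prop:R0_ES}, \ref{prop:R0_is_convex_ES} and \ref{prop:C0_V0_ES} already record the corresponding adaptations of the closed form for $R_0$, the strict convexity and initial decrease of $w\mapsto R_0^w$, and the closed form for $C_0$, respectively.

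The one ingredient not yet stated in the paper is an Expected Shortfall analogue of Proposition \ref{prop:C0greater_normal_model}, namely $C_0^w > C_0^0$ for all $w \in (0,1]$. I would derive it as follows. From Proposition \ref{prop:C0_V0_ES} applied with parameters $(\mu_w,\sigma_w)=(w\mu+1-w,\, w\sigma)$,
\[
C_0^w = \frac{R_0^w\mu_w - \gamma}{1+\eta}\Big(\Phi(\psi) + \frac{\phi(\psi)}{\psi}\Big),
\]
so, since $\mu_0=1$, the inequality $C_0^w > C_0^0$ reduces to $R_0^w\mu_w > R_0^0 = \gamma + \nu\psi$. I expect this to be the main calculation: substitute the explicit expression \eqref{er0ES} (with $(\mu,\sigma)$ replaced by $(\mu_w,\sigma_w)$), clear denominators using the hypothesis $\mu > \sigma\psi$ (strengthened here to $\mu>\max(1,\sigma\psi)$, which keeps $\mu_w^2-\sigma_w^2\psi^2>0$), and verify the resulting polynomial inequality in $w$. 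The computation is line-for-line identical to the Value-at-Risk proof of Proposition \ref{prop:C0greater_normal_model}, with $\psi$ in place of $\Phi^{-1}(1-\alpha)$.

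Combining this ingredient with Proposition \ref{prop:R0_is_convex_ES} yields, for each $w\in(0,\widehat{w})$, both $R_0^w < R_0^0$ and $C_0^w > C_0^0$. Subtracting then gives $V_0^w = R_0^w - C_0^w < R_0^0 - C_0^0 = V_0^0$, which is the claim. The main obstacle is essentially only the bookkeeping of the substitution $\Phi^{-1}(1-\alpha)\mapsto\psi$ and the verification that the hypotheses $\gamma>\nu\psi$ and $\mu>\max(1,\sigma\psi)$ play the role previously played by $\gamma>\nu\Phi^{-1}(1-\alpha)$ and $\mu>\max(1,\sigma\Phi^{-1}(1-\alpha))$; no new conceptual ideas are required beyond those used in the Value-at-Risk version.
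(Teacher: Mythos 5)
Your proof is correct and matches the paper's (implicit) argument for this corollary: combine $R_0^w<R_0^0$ on $(0,\widehat{w})$ from Proposition \ref{prop:R0_is_convex_ES} with the Expected Shortfall analogue of Proposition \ref{prop:C0greater_normal_model} to get $C_0^w>C_0^0$, and subtract. One small simplification of your "main calculation": establishing $R_0^w\mu_w>R_0^0$ does not require substituting \eqref{er0ES} and clearing denominators --- exactly as in the paper's proof of Proposition \ref{prop:C0greater_normal_model}, the defining equation $0=\gamma-R_0^w\mu_w+\psi\sqrt{(R_0^w)^2\sigma_w^2+\nu^2}$ yields $R_0^w\mu_w-R_0^0=\psi\big(\sqrt{(R_0^w)^2\sigma_w^2+\nu^2}-\nu\big)>0$ in one line.
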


\section{Numerical illustrations}\label{sec5}
Let us now put the results of the previous sections into concrete numerical conclusions for realistic parameter choices. Throughout this section we assume the cost-of-capital rate to be $\eta=0.06$ (see Section \ref{sec1} for a discussion of that parameter).

\subsection{The Gaussian model}
Assuming that both the insurance risk $X_1$ and the financial risk $Z_1$ are normally distributed, we can use the explicit formulas from Section \ref{sec3} to study the effect of the choice of $w$ on the resulting requirements for insurance premium $V_0$, solvency capital requirement $C_0$ and their sum $R_0$. Let us focus on the case of Value-at-Risk with safety level $\alpha = 0.005$, which is the risk measure used in Solvency II. Let us further assume  $\mu = 1.05$, $\sigma =0.2$, which may be considered a realistic return and volatility for an institutional investor, as well as the variations $\sigma=0.1$ and $\sigma=0.3$. For the insurance risk $X_1$, we assume a mean of $\gamma  = 1$ (cf.\ Remark \ref{rem2}) and a standard deviation of $\nu =0.3$ (which corresponds to a coefficient of variation of the total claim size of 0.3), together with a few variations of $\nu$. Figure \ref{fig1} plots the resulting values of $R_{0}$, $C_{0}$ and $V_{0}$ as functions of $w$ (recall that $w=0$ corresponds to purely risk-less investment and $w=1$ corresponds to purely risky investment, cf.\ \eqref{convc}). 

\begin{figure}[h]
	\centering
	\begin{subfigure}{0.32\textwidth}
		\includegraphics[width=\textwidth]{./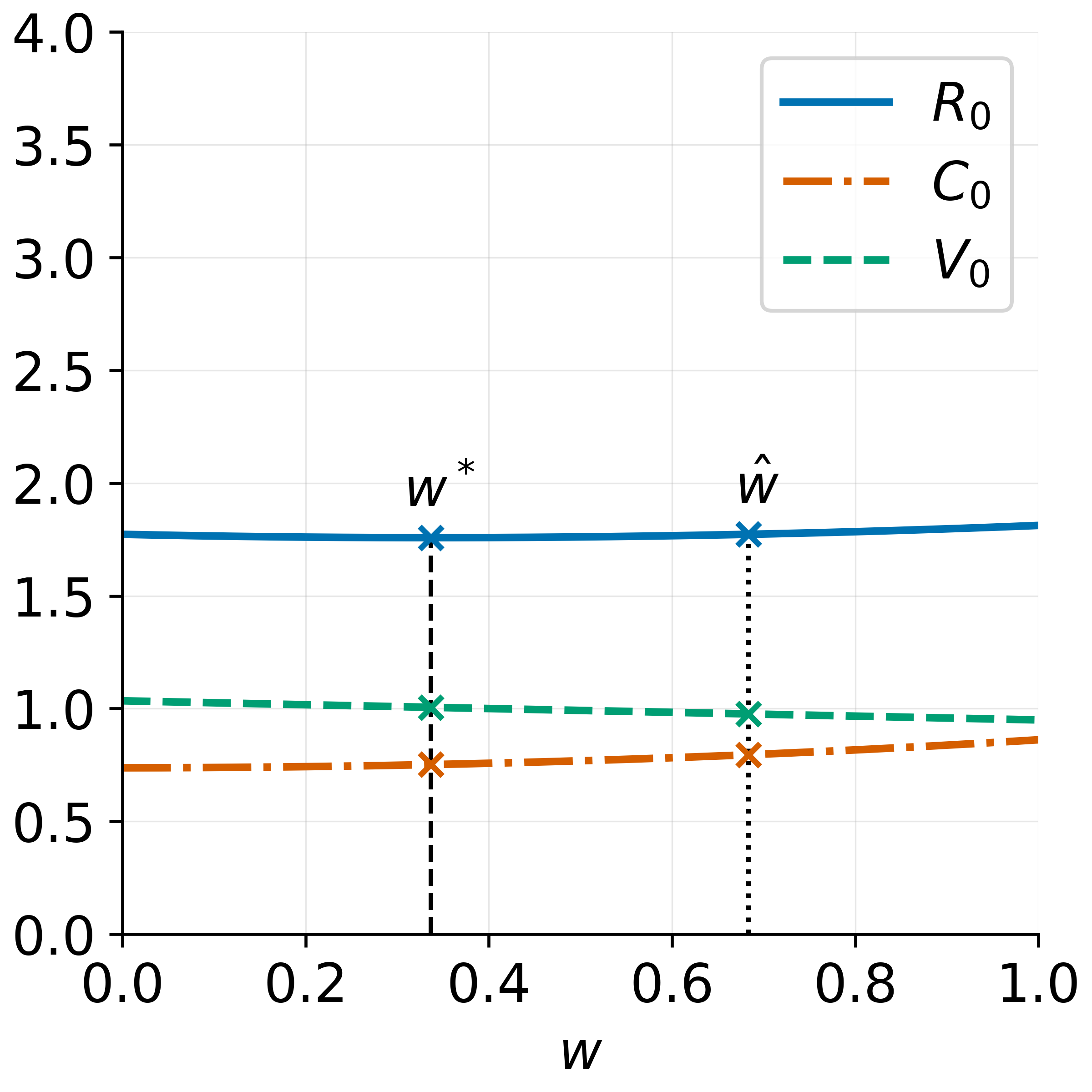}
		\caption{$\sigma=0.1$}
		\label{fig:first}
	\end{subfigure}
	\hfill
	\begin{subfigure}{0.32\textwidth}
		\includegraphics[width=\textwidth]{./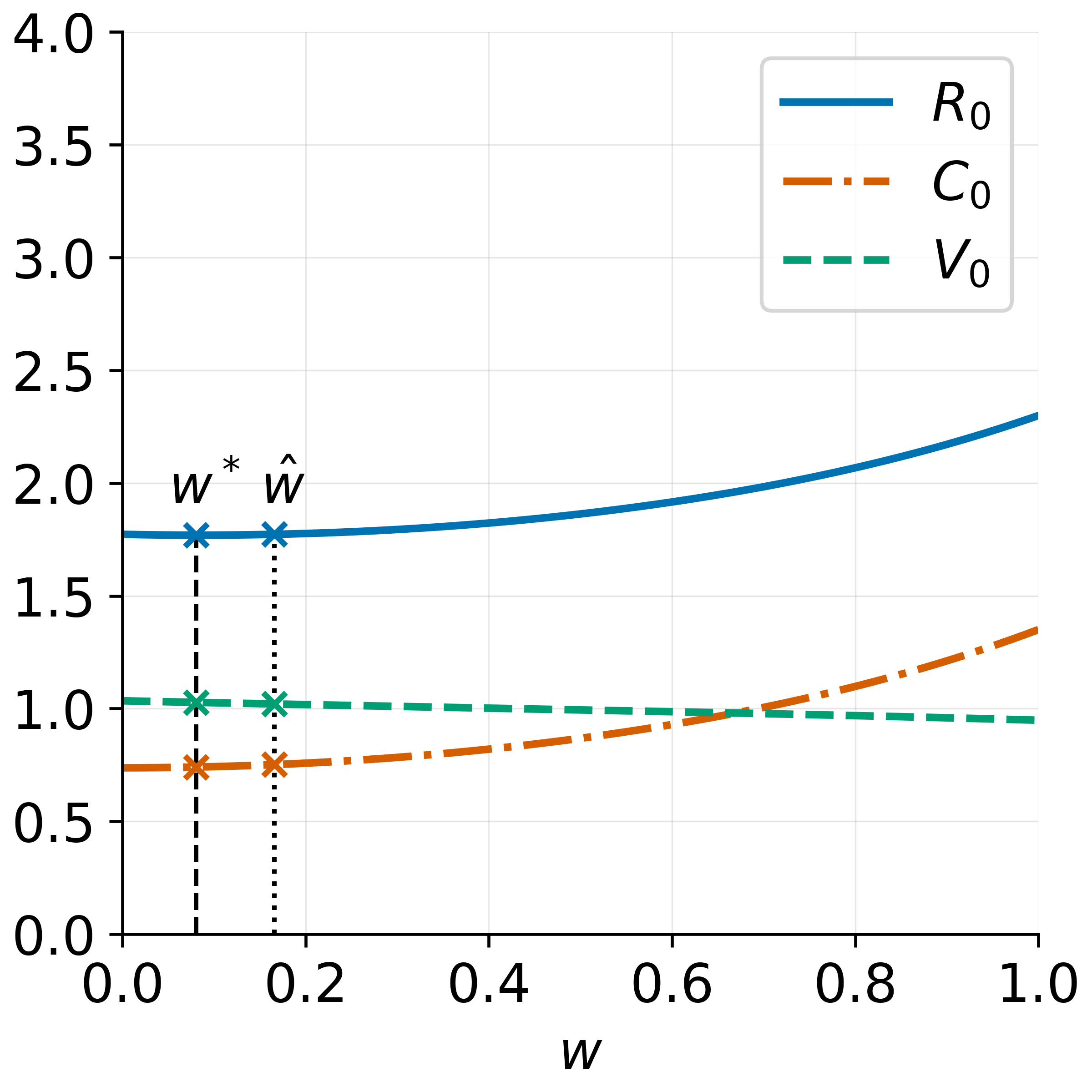}
		\caption{$\sigma=0.2$}
		\label{fig:second}
	\end{subfigure}
	\hfill
	\begin{subfigure}{0.32\textwidth}
		\includegraphics[width=\textwidth]{./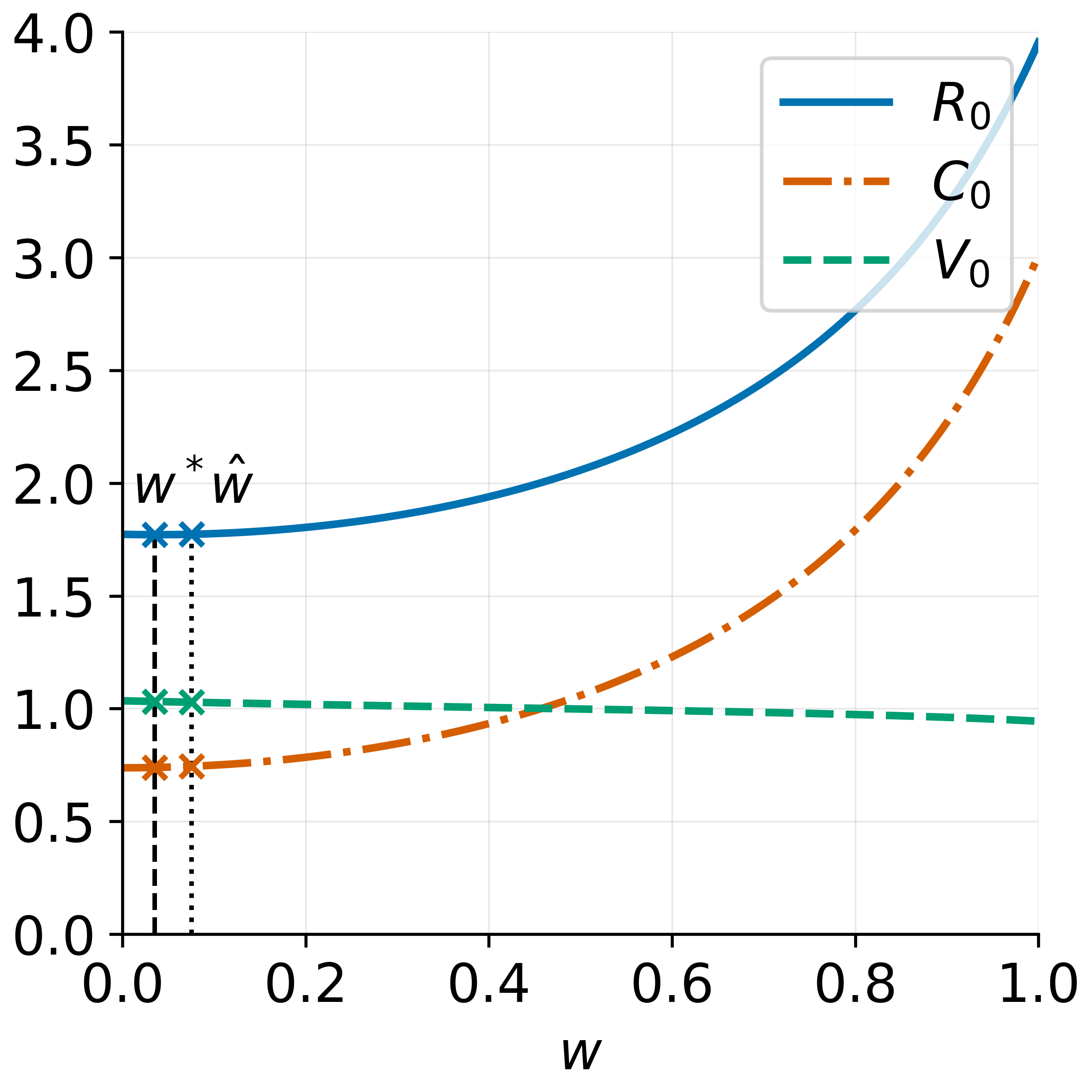}
		\caption{$\sigma=0.3$}
		\label{fig:third}
	\end{subfigure}
	\caption{$R_{0}^{w}, C_{0}^{w}$ and $V_{0}^{w}$ for the Gaussian model with $\mu = 1.05, \gamma  = 1,\nu =0.3 $, $\rho=\VaR_{0.005}$ and various values of $\sigma$}
	\label{fig1}
\end{figure}

\begin{figure}[h]
	\centering
	\begin{subfigure}{0.32\textwidth}
		\includegraphics[width=\textwidth]{./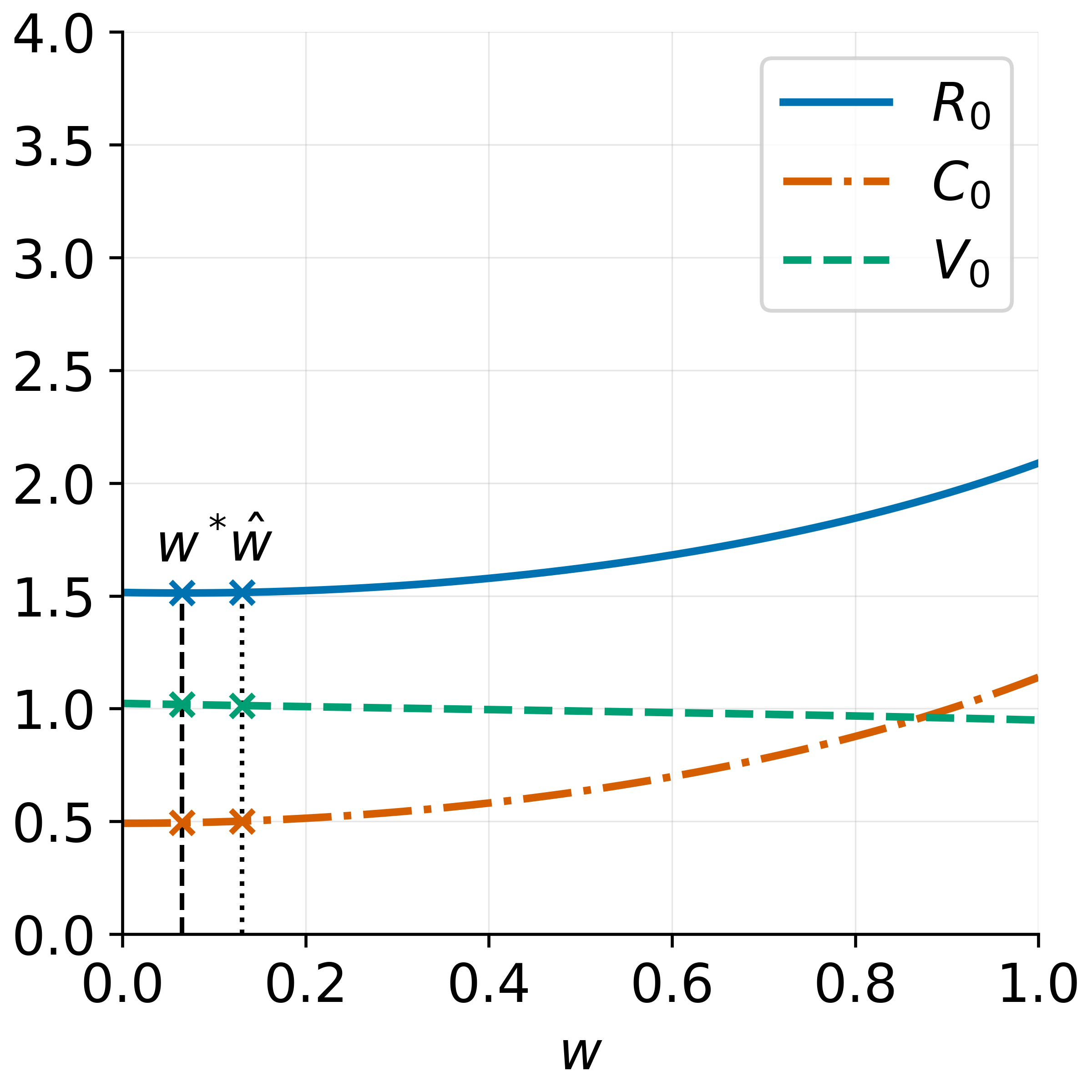}
		\caption{$\nu=0.2$}
		\label{fig:first1}
	\end{subfigure}
	\hfill
	\begin{subfigure}{0.32\textwidth}
		\includegraphics[width=\textwidth]{./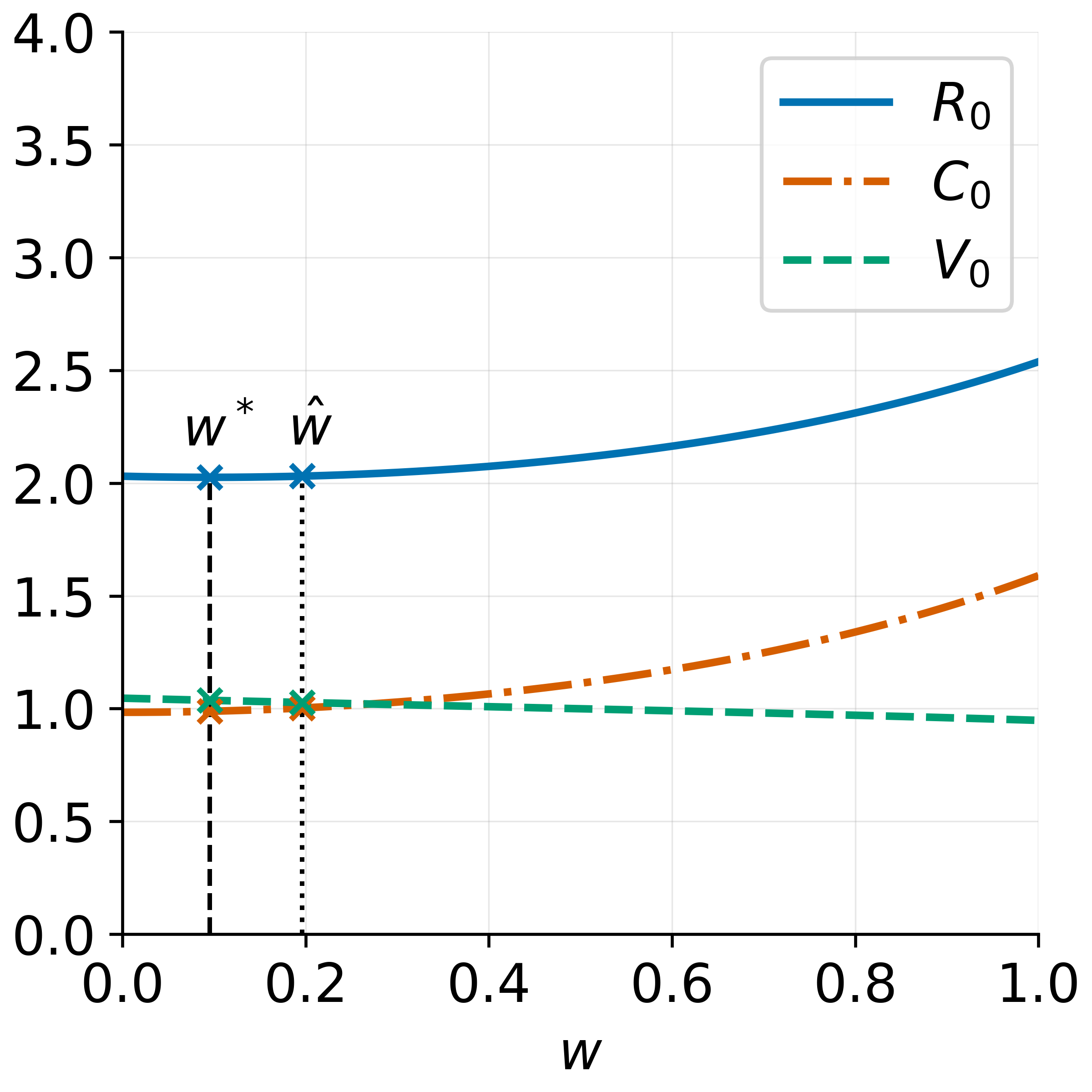}
		\caption{$\nu=0.4$}
		\label{fig:second2}
	\end{subfigure}
	\hfill
	\begin{subfigure}{0.32\textwidth}
		\includegraphics[width=\textwidth]{./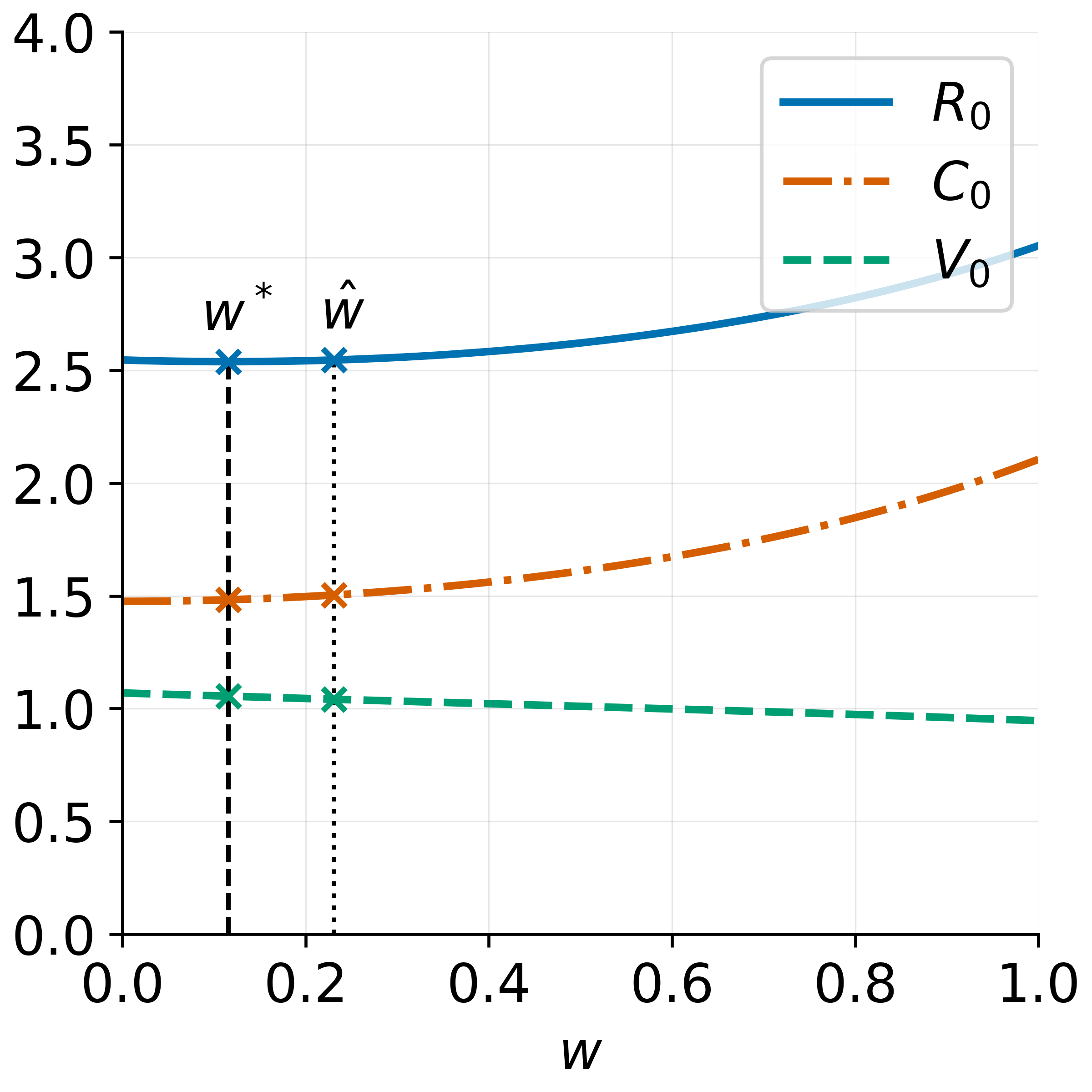}
		\caption{$\nu=0.6$}
		\label{fig:third3}
	\end{subfigure}
	\caption{$R_{0}^{w}, C_{0}^{w}$ and $V_{0}^{w}$ for the Gaussian model with $\mu = 1.05, \sigma=0.2,\gamma  = 1$, $\rho=\VaR_{0.005}$ and various values of $\nu$}
	\label{fig2}
\end{figure}
Note that while the needed insurance premium $V_0^{w}$ reduces with increasing risky asset allocation for all $w$, the needed solvency requirement $C_0^{w}$ from shareholders increases correspondingly, and the minimal overall capital requirement $R_0^{w}$ for $\sigma=0.2$ is achieved for $w^*=0.083$, e.g., only 8.3\% of the assets are invested in the risky asset, which then constitutes the most capital-efficient solution under the present model assumptions. One sees how this value $w^*$ decreases for increasing volatility in the risky asset, which matches the intuition. It is also insightful to see that for larger values of $w$ the solvency requirement $C_0^w$ starts to dominate the needed insurance premium $V_0$, more prominently for larger values of $\sigma$. The value $\widehat{w}$ until which the overall capital requirement $R_0^{w}$ is smaller than $R_0^{0}$ indeed corresponds to the one obtained from Proposition \ref{prop:R0_is_convex}. Figure \ref{fig2} plots the corresponding results for fixed $\sigma=0.2$, but varying standard deviation $\nu$ of the insurance risk. One observes how the overall capital requirement $R_0^w$ increases with $\nu$, and how $C_0^w$ becomes larger than the needed insurance premium $V_0^{w}$ already for smaller values of $w$. 
One also observes from Figure \ref{fig2} that the riskier the insurance risk $X_1$ is (larger value $\nu$), the larger fraction $w^*$ of the initial capital should be invested in the risky asset in order to minimize the total capital requirement $R_0^w$. \\

Finally, at first sight it may seem surprising that the increase of $R_0^{w}$ for increasing standard deviation $\nu$ of the insurance risk is almost exclusively swallowed by the increase of $C_0^{w}$ and leaves the premium $V_0^{w}$ virtually unchanged. The explanation is that the increased standard deviation (for fixed $\E[X_1]$) raises the value of the limited liability option. Concretely, for any absolutely continuous random variable $X_1$ (not only Gaussian!), it follows from \eqref{eq:V0formula} that 
	$$
	C_0^{w}=R_0^{w}\frac{\E[Z_1]}{1+\eta}-\frac{1}{1+\eta}\E[(R_0^{w}Z_1\wedge X_1)]
	$$
	Therefore 
	\begin{align*}
		\frac{\partial C_0^{w}}{\partial R_0^{w}}&=\frac{\E[Z_1]}{1+\eta}-\frac{1}{1+\eta}\E[Z_1\mathds{1}_{\{R_0^{w}Z_1<X_1\}}]\\
		&=\frac{\E[Z_1]}{1+\eta}-\frac{1}{1+\eta}\E[Z_1 \mid R_0^{w}Z_1<X_1]\P(R_0^{w}Z_1<X_1)
	\end{align*}
	With $\rho=\VaR_{\alpha}$ we then get 
	\begin{align}
		\frac{\partial C_0^{w}}{\partial R_0^{w}}
		&=\frac{\E[Z_1]}{1+\eta}-\frac{1}{1+\eta}\E[Z_1 \mid R_0^{w}Z_1<X_1]\alpha \nonumber\\
		&\geq \frac{\E[Z_1]}{1+\eta}-\frac{\E[Z_1]\alpha}{1+\eta}=\frac{\E[Z_1](1-\alpha)}{1+\eta}.\label{bbou}
	\end{align}
	As $\E \left[ Z_1 \right]$ is close to $1 + \eta  $ and $\alpha$ is very small, most of the increase of the required capital is indeed absorbed by the shareholders' contribution $C_0^w$ (see also Remark \ref{rem:llo}).

\subsection{The Lognormal model}
Let us alternatively consider the lognormal model, which is of interest for two reasons. 
First, it allows to see the effects of heavier right tails for the insurance risk when compared to the normal model. 
Secondly, the lognormal assumption on $S_1$ corresponds to a Black-Scholes market assumption for the risky asset, which is not uncommon in this framework. We will choose the parameters of the lognormal model in such a way that the first two moments of both $X_1$ and $S_1$ match the ones from the normal model. That is, for $\mu=1.05$, $\gamma=1$ and each choice of $\sigma$ and $\nu$ we use 
\begin{align*}
s_s^2=\log\bigg(1+\frac{\sigma^2}{\mu^2}\bigg), \quad m_s=\log\bigg(\mu\bigg(1+\frac{\sigma^2}{\mu^2}\bigg)^{-1/2}\bigg)
\end{align*}
and 
\begin{align*}	
s_x^2=\log\big(1+\nu^2\big), \quad m_x=-\frac{1}{2}\log(1+\nu^2)
\end{align*} 
to determine the respective parameters of the lognormal distributions. In the absence of explicit expressions, the figures are then obtained by Monte Carlo simulation. 
All figures are based on two i.i.d.~samples of size 1'000'000  from the standard normal distribution, where the simulated variates are suitably transformed into lognormal variates using the parameters $m_s,m_x,s_s,s_x$. \\

\begin{figure}[h]
	\centering
	\begin{subfigure}{0.32\textwidth}
		\includegraphics[width=\textwidth]{./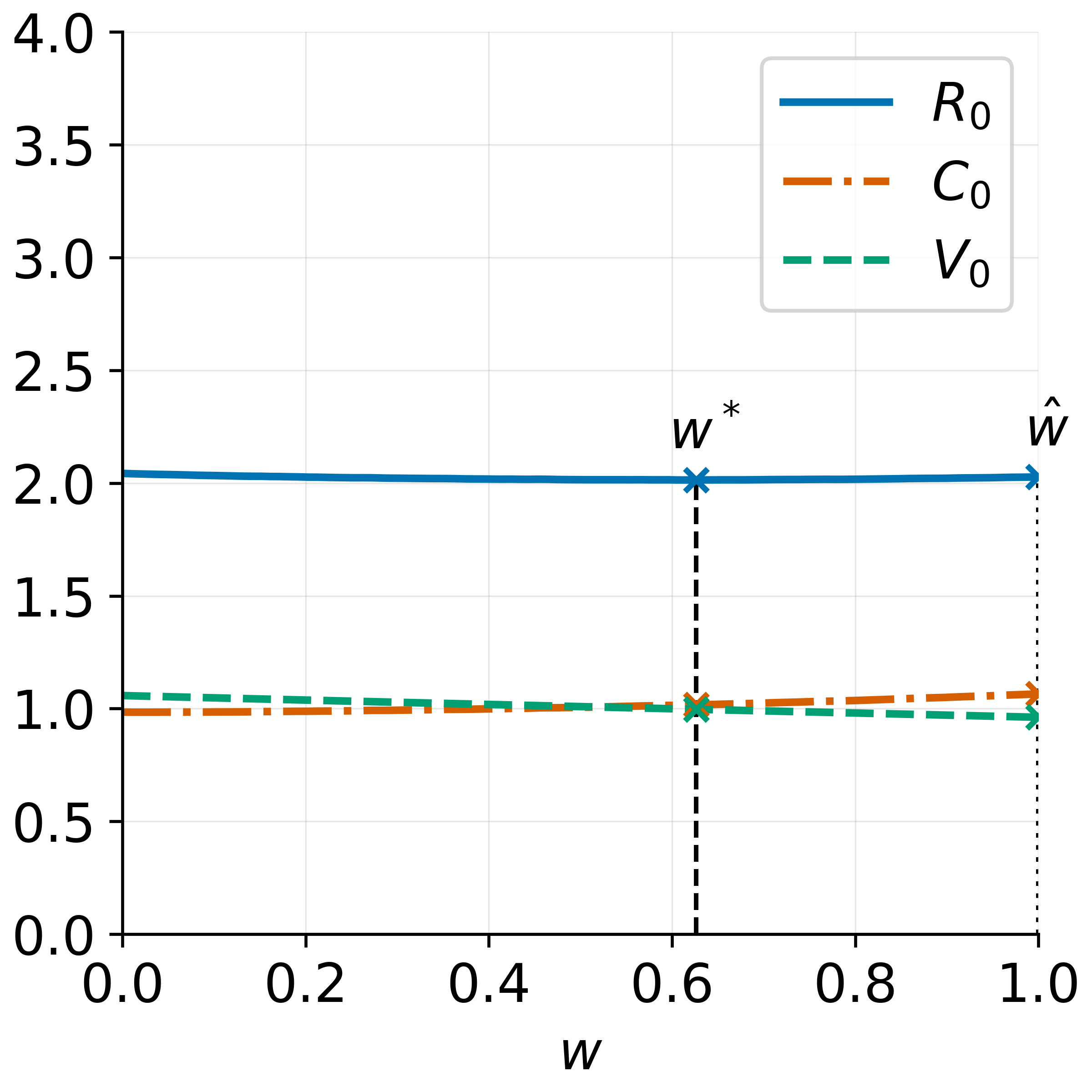}
		\caption{$\text{std}(S_1)=0.1$}
		\label{fig:firsta}
	\end{subfigure}
	\hfill
	\begin{subfigure}{0.32\textwidth}
		\includegraphics[width=\textwidth]{./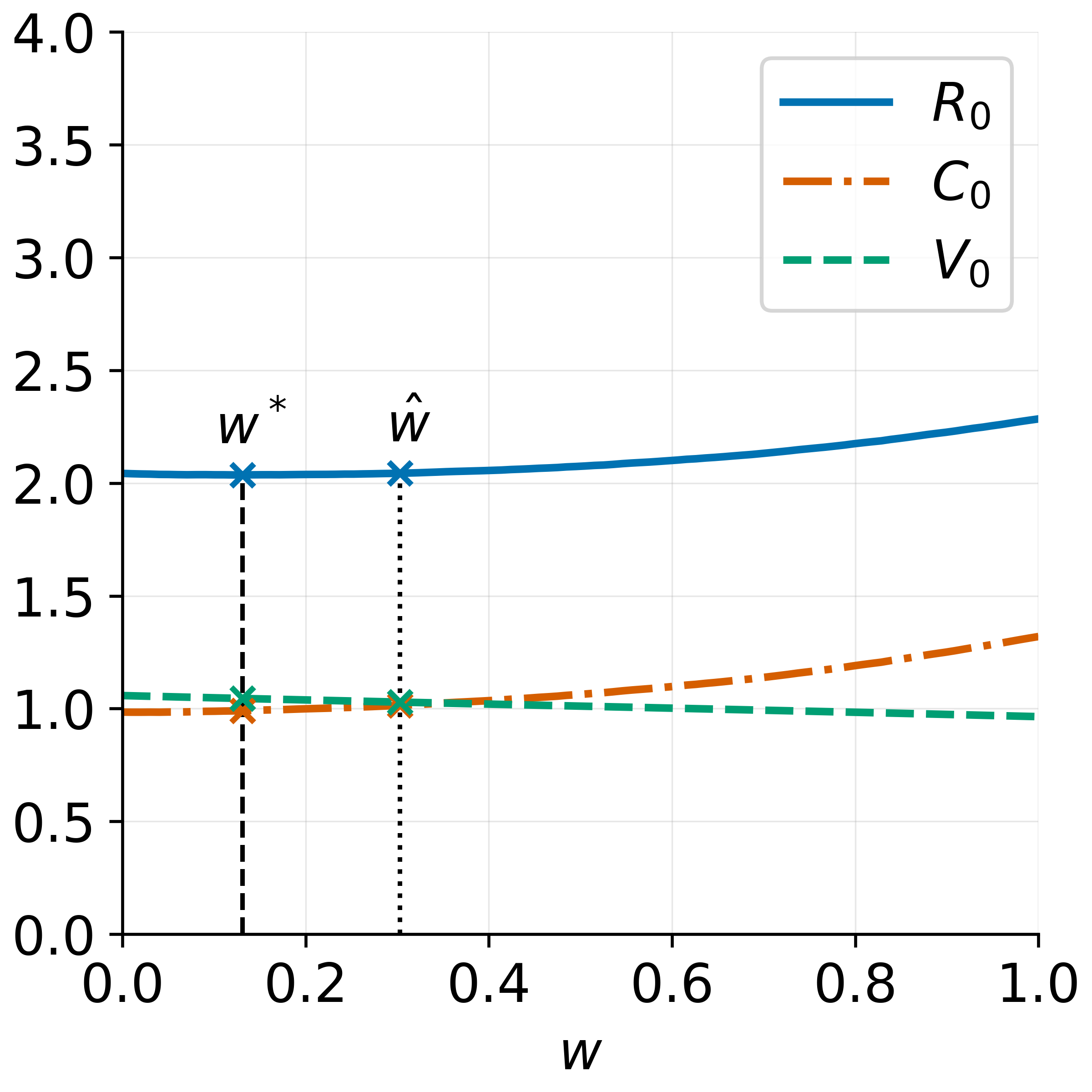}
		\caption{$\text{std}(S_1)=0.2$}
		\label{fig:seconda}
	\end{subfigure}
	\hfill
	\begin{subfigure}{0.32\textwidth}
		\includegraphics[width=\textwidth]{./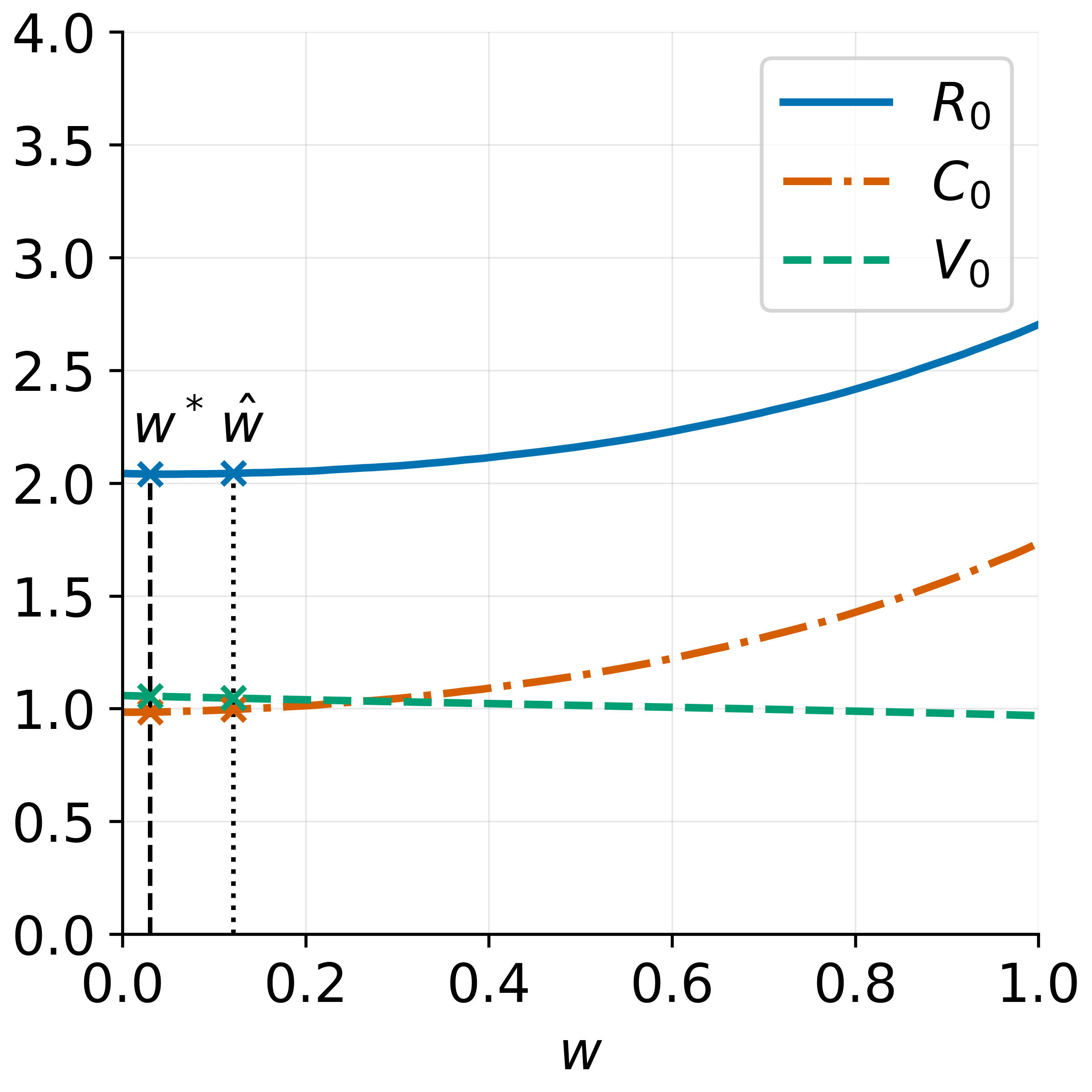}
		\caption{$\text{std}(S_1)=0.3$}
		\label{fig:thirda}
	\end{subfigure}
	\caption{$R_{0}^{w}, C_{0}^{w}$ and $V_{0}^{w}$ for the lognormal model with $\E[S_1] = 1.05$, $\E[X_1]  = 1$, $\text{std}(X)=0.3$, $\rho=\VaR_{0.005}$ and various values of $\text{std}(S_1)$}
	\label{fig3}
\end{figure}
\begin{figure}[h]
	\centering
	\begin{subfigure}{0.32\textwidth}
		\includegraphics[width=\textwidth]{./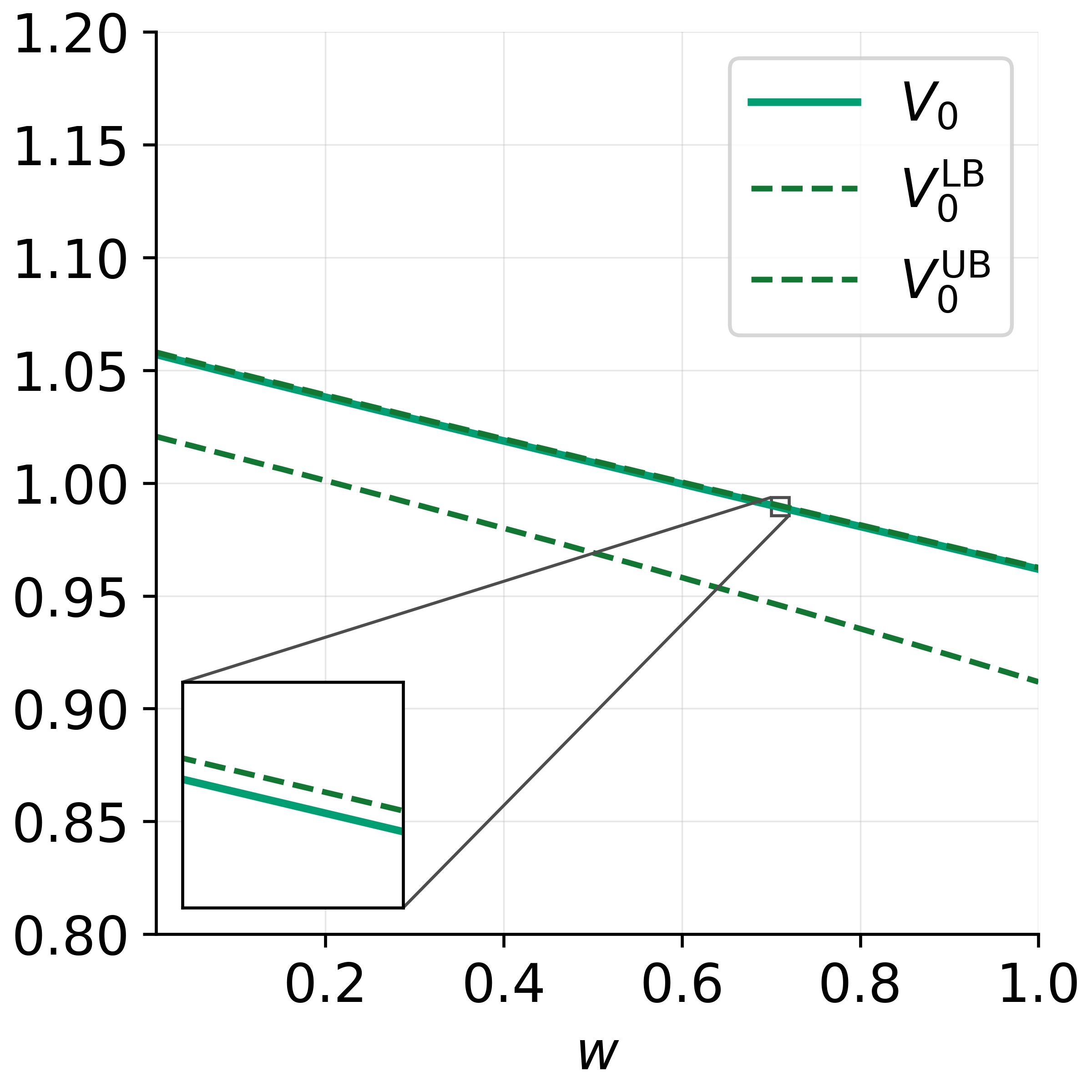}
		\caption{$\text{std}(S_1)=0.1$}
		\label{fig:firsta5}
	\end{subfigure}
	\hfill
	\begin{subfigure}{0.32\textwidth}
		\includegraphics[width=\textwidth]{./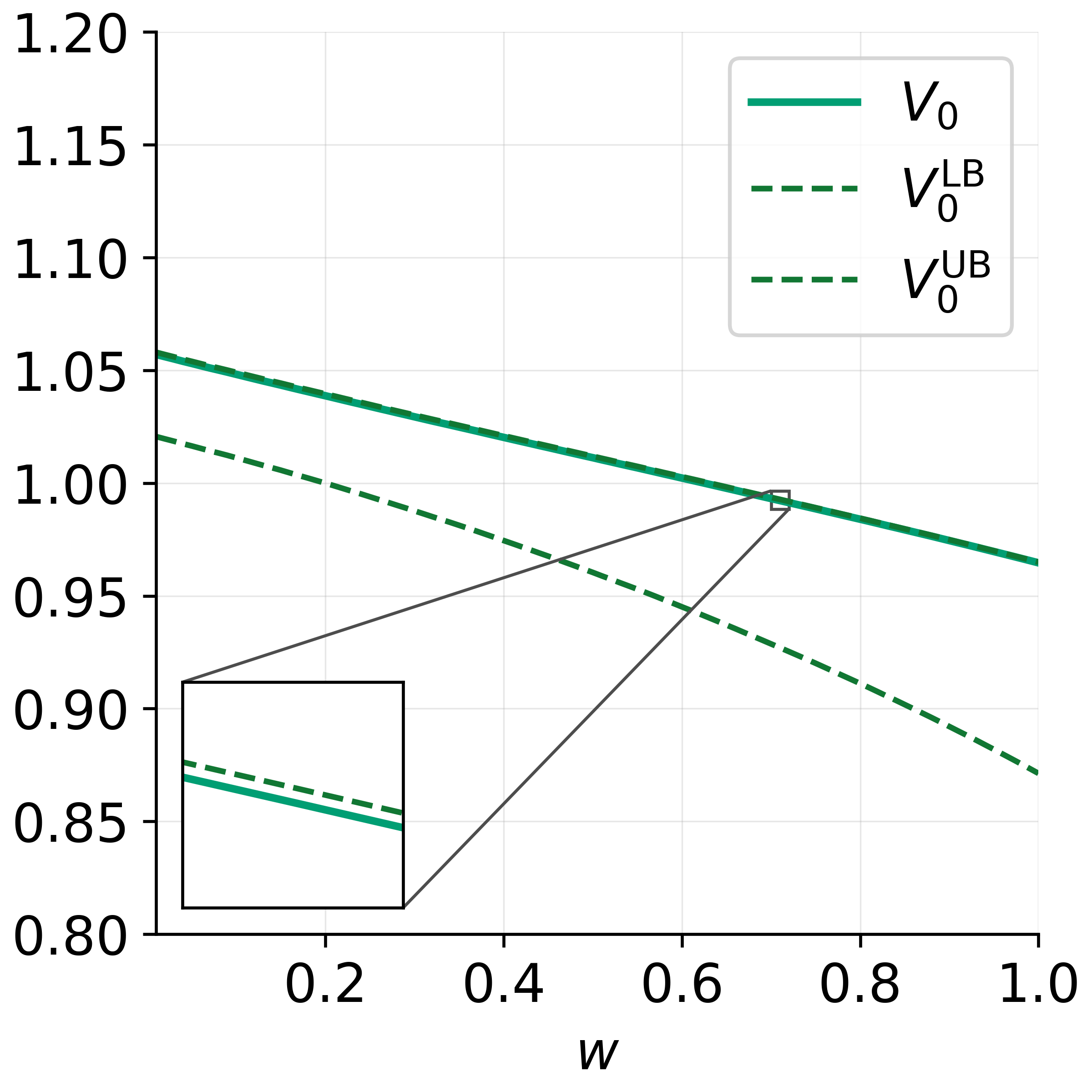}
		\caption{$\text{std}(S_1)=0.2$}
		\label{fig:seconda5}
	\end{subfigure}
	\hfill
	\begin{subfigure}{0.32\textwidth}
		\includegraphics[width=\textwidth]{./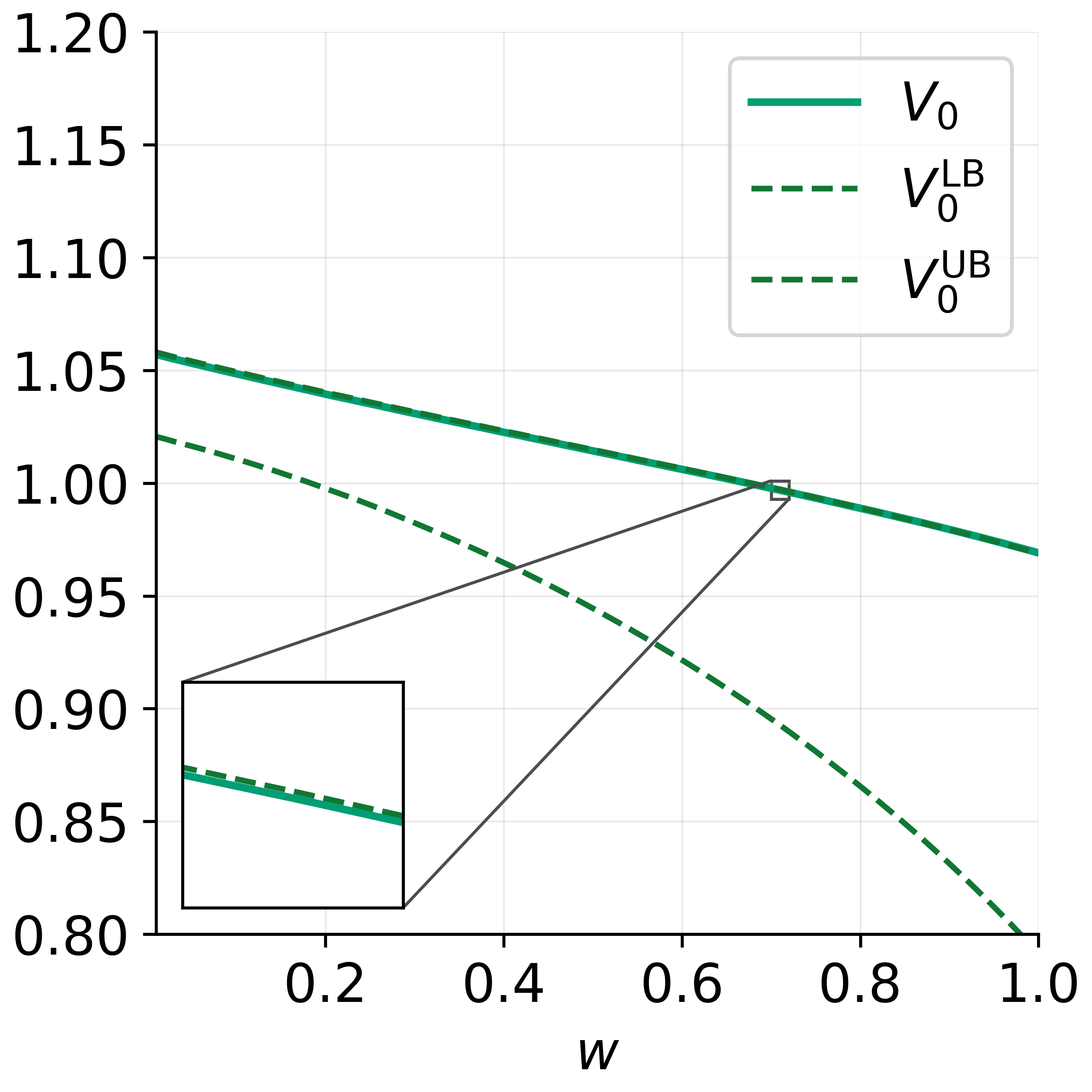}
		\caption{$\text{std}(S_1)=0.3$}
		\label{fig:thirda5}
	\end{subfigure}
	\caption{Upper and lower bounds for $V_{0}^{w}$ in Figure \ref{fig3}}
	\label{fig5}
\end{figure}

Figure \ref{fig3} is the analogue of Figure \ref{fig1} under lognormality for both insurance and financial risk. One sees that the heavier tail increases $R_0^w$, and $C_0^w>V_0^w$ already for smaller weights $w$. In all three situations depicted in Figure \ref{fig3}, both the optimal weight $w^*$ as well as the critical threshold $\widehat{w}$ are increased when compared to the same situation in Figure \ref{fig1} with identical first two moments of $S_1$ and $X_1$, which can also be attributed to the heavier tails. 
For large values of $w$, the safety loading $V_0^w-\E[X_1]$ in the premiums can become very small and even negative. 
The model-independent upper bound for $V_0^w$ from Remark \ref{rem:V0bounds} can be expressed as 
\begin{align*}
V_0^w \leq \E[X_1] + \frac{\eta (R_0^w-\E[X_1])-w(\E[S_1]-1)R_0^w}{1+\eta}
\end{align*}
from which it follows immediately that, for $\E[S_1]>1$,  
\begin{align*}
w>\frac{\eta}{\E[S_1]-1}\frac{R_0^w-\E[X_1]}{R_0^w}
\end{align*}
implies a negative safety loading for any model.

Figures \ref{fig5} zooms into the respective values of the needed premium $V_0^w$ and also depicts the bounds that were derived in Remark \ref{rem:V0bounds} (the upper bound is very tight, as a further magnification in the plots indicates). One observes that the lower bound is too coarse for practical purposes, but the upper bound is remarkably sharp in these cases. As the upper bound is obtained by ignoring the limited liability, this indicates that the value of the limited liability option is quite small (which here is due to the small value $\alpha=0.005$ and the fact that the right tail of $X_1-R_0Z_1$ beyond its $1-\alpha=0.995$ quantile value is not sufficiently heavy to significantly impact $V_0$). \\

Figure \ref{fig4} depicts the analogue of Figure \ref{fig2} under lognormality for both insurance and financial risk, in this case varying the standard deviation of the insurance risk. Again, each of these situations leads to larger overall capital requirement $R_0^w$ compared to the counterparts of Figure \ref{fig2}, and again the solvency capital requirement $C_0^w$ absorbs most of that increase, whereas the premium $V_0^w$ stays almost unchanged for increasing standard deviation. The latter is again due to the increased value of the limited liability option, cf.\ \eqref{bbou} and also Example \ref{ex:llo_pareto} for the case $w=0$.

Figure \ref{fig6} zooms into $V_0^w$ and depicts the bounds that were derived in Remark \ref{rem:V0bounds}. Again the sharpness of the upper bound is noteworthy. \\

\begin{figure}[h]
	\centering
	\begin{subfigure}{0.32\textwidth}
		\includegraphics[width=\textwidth]{./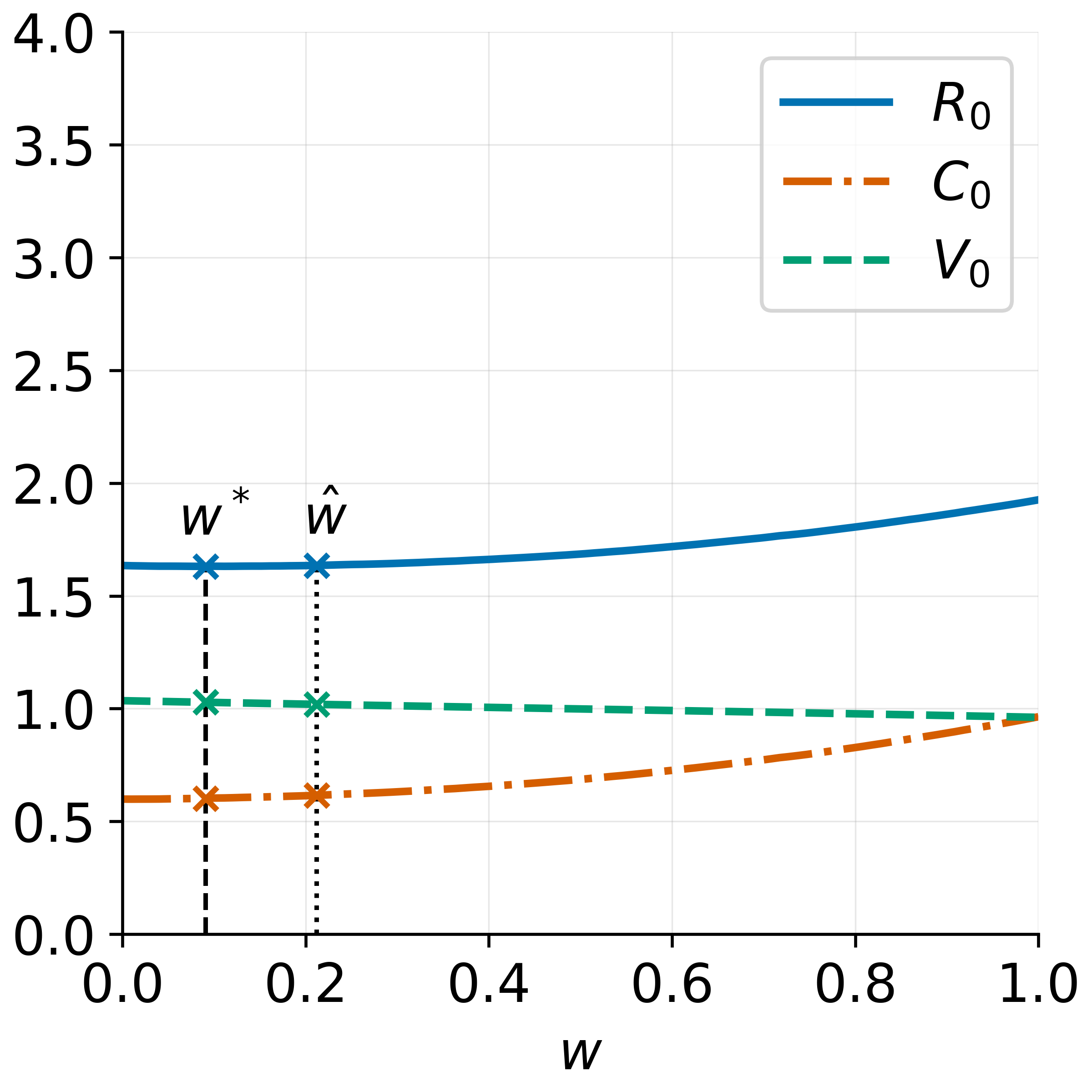}
		\caption{$\text{std}(X_1)=0.2$}
		\label{fig:firstb}
	\end{subfigure}
	\hfill
	\begin{subfigure}{0.32\textwidth}
		\includegraphics[width=\textwidth]{./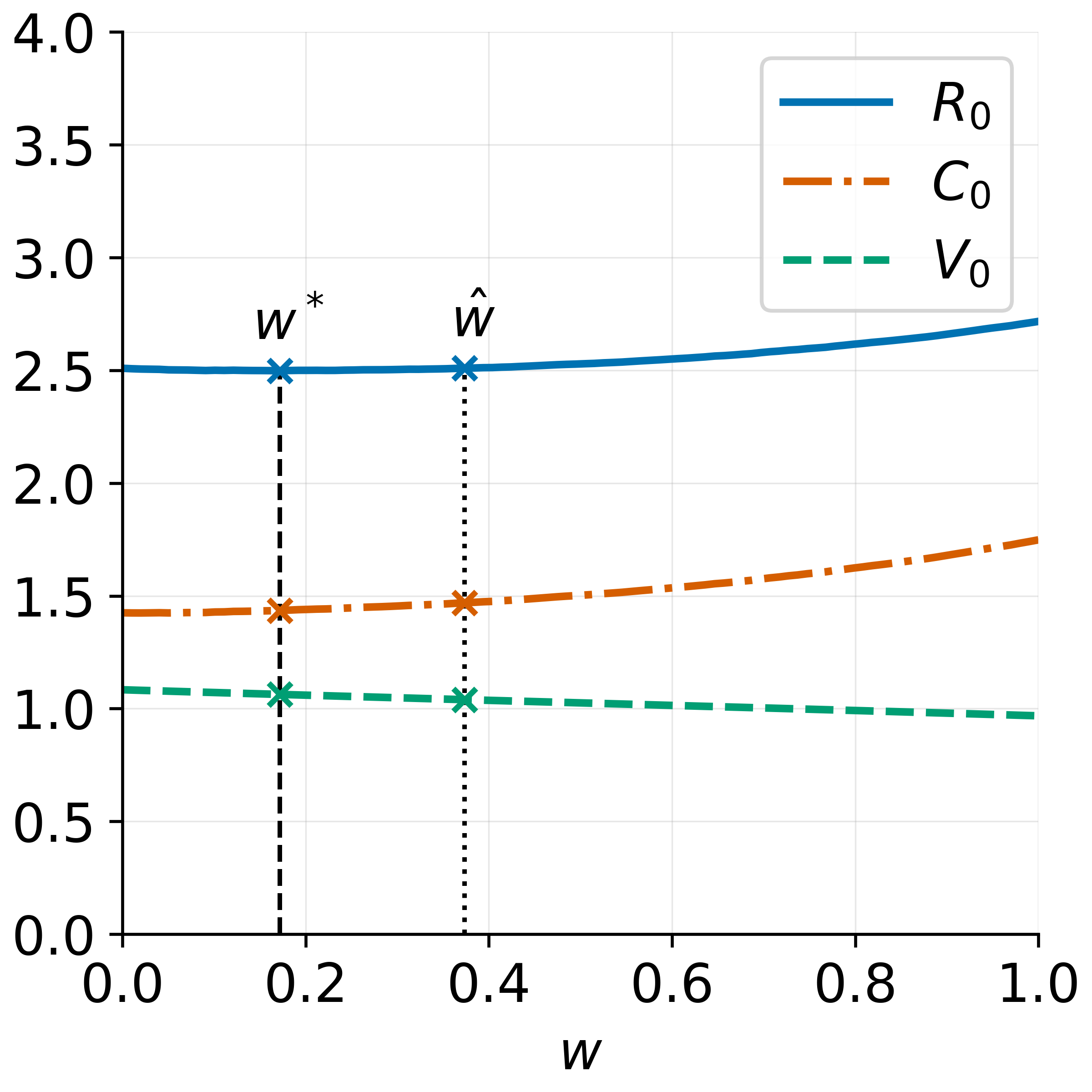}
		\caption{$\text{std}(X_1)=0.4$}
		\label{fig:secondb}
	\end{subfigure}
	\hfill
	\begin{subfigure}{0.32\textwidth}
		\includegraphics[width=\textwidth]{./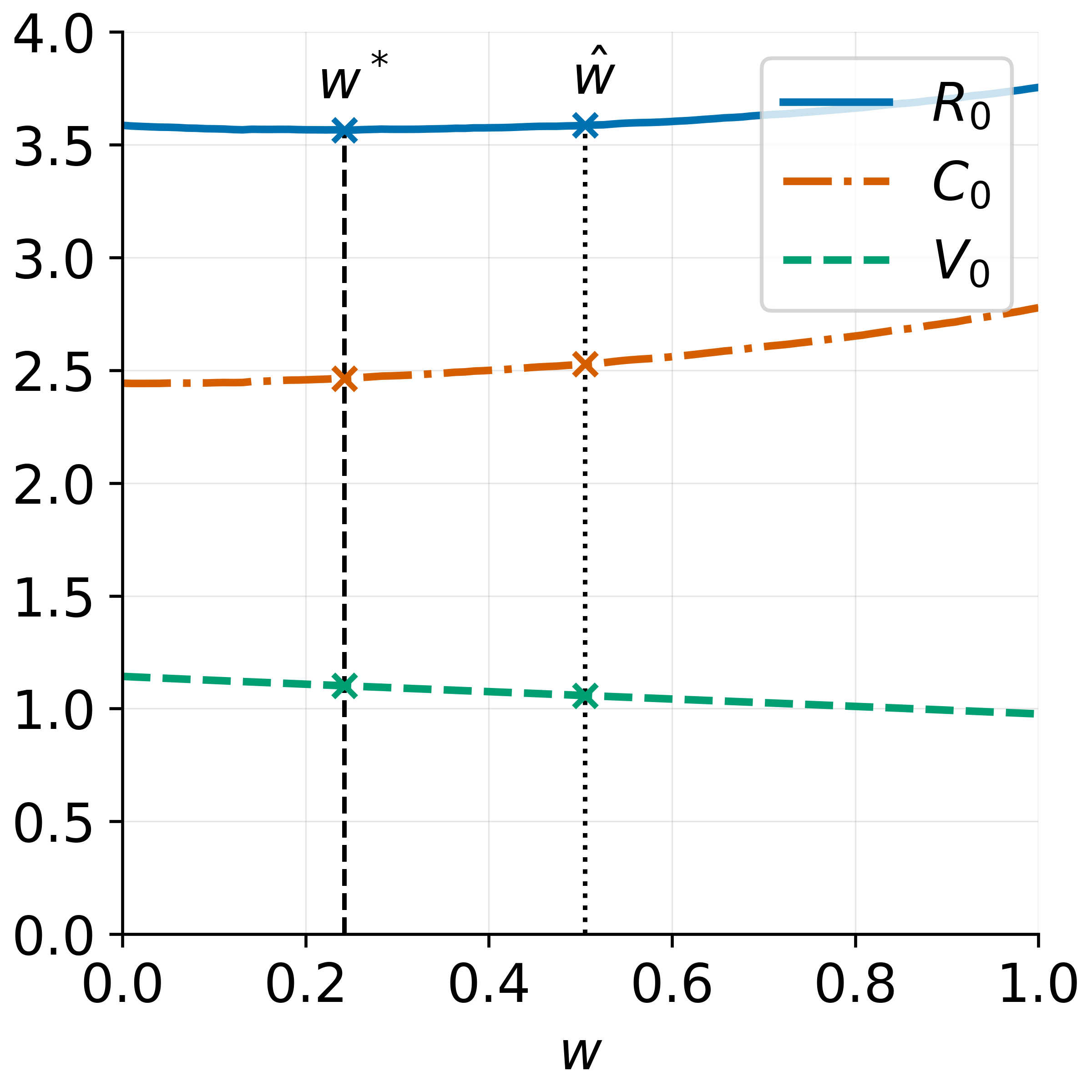}
		\caption{$\text{std}(X_1)=0.6$}
		\label{fig:thirdb}
	\end{subfigure}
	\caption{$R_{0}^{w}, C_{0}^{w}$ and $V_{0}^{w}$ for the lognormal model with $\E[S_1] = 1.05$, $\E[X_1]  = 1$, $\text{std}(S_1)=0.2$, $\rho=\VaR_{0.005}$ and various values of $\text{std}(X_1)$}
	\label{fig4}
\end{figure}

\begin{figure}[h]
	\centering
	\begin{subfigure}{0.32\textwidth}
		\includegraphics[width=\textwidth]{./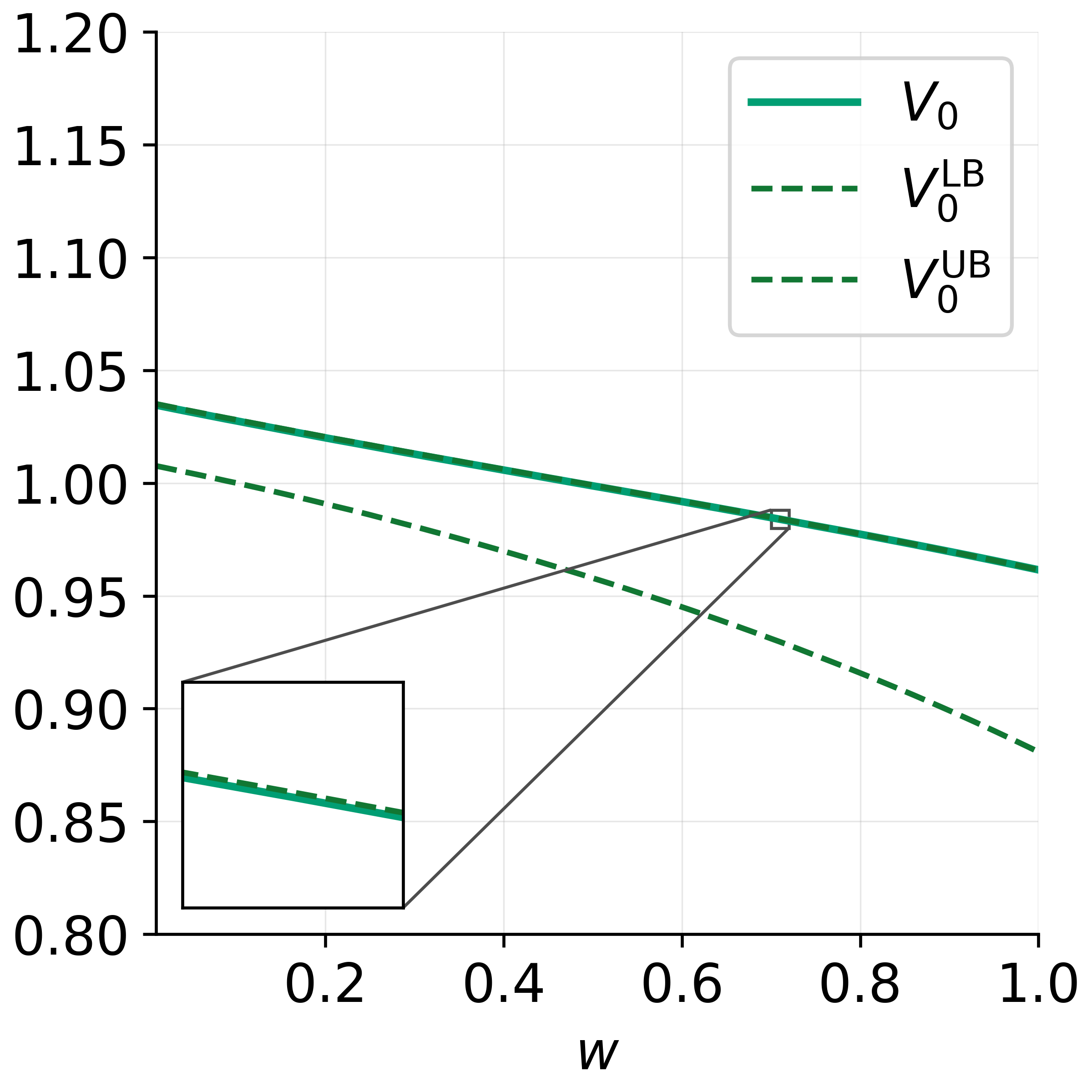}
		\caption{$\text{std}(X_1)=0.2$}
		\label{fig:firstb5}
	\end{subfigure}
	\hfill
	\begin{subfigure}{0.32\textwidth}
		\includegraphics[width=\textwidth]{./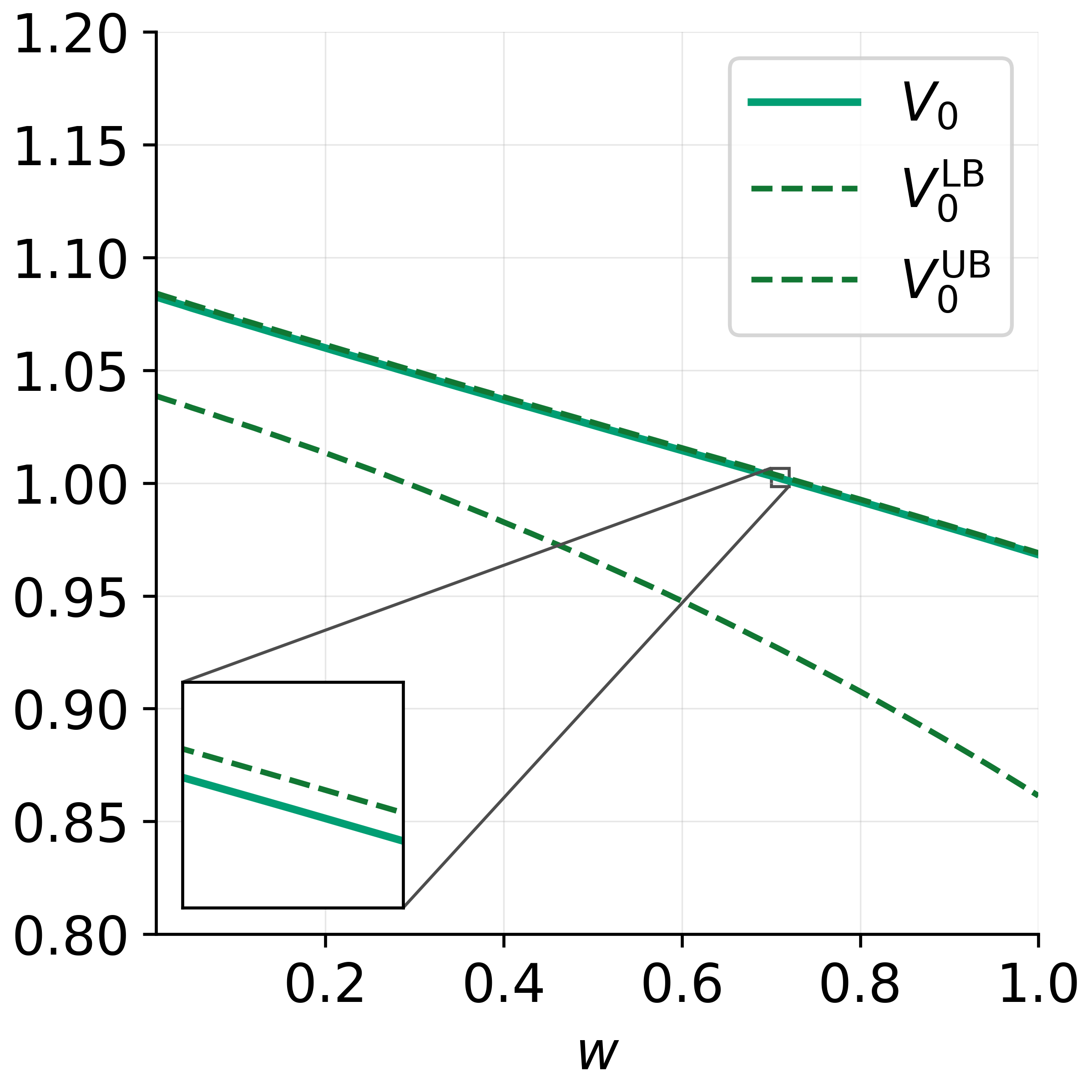}
		\caption{$\text{std}(X_1)=0.4$}
		\label{fig:secondb5}
	\end{subfigure}
	\hfill
	\begin{subfigure}{0.32\textwidth}
		\includegraphics[width=\textwidth]{./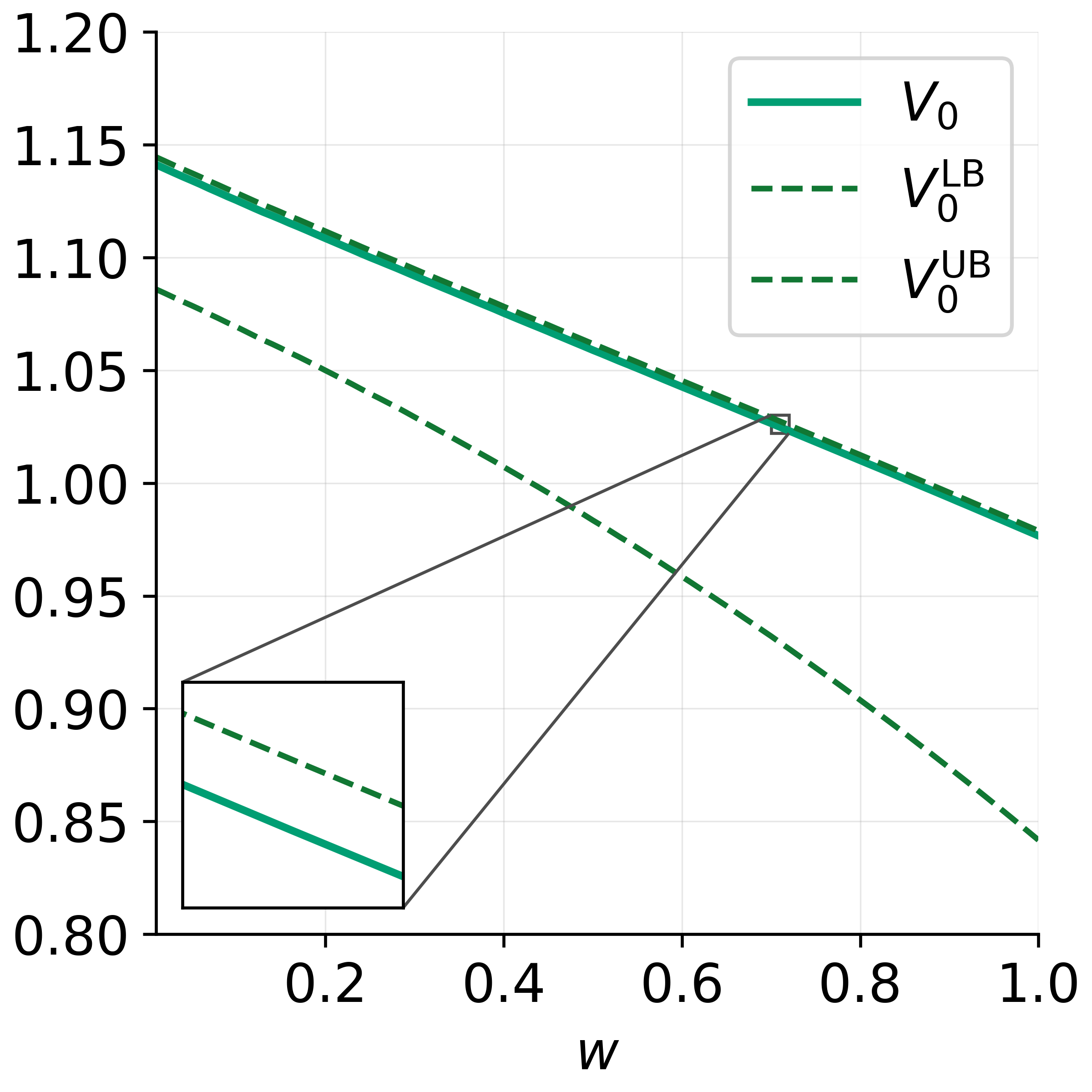}
		\caption{$\text{std}(X_1)=0.6$}
		\label{fig:thirdb5}
	\end{subfigure}
	\caption{Upper and lower bounds for $V_{0}^{w}$ in Figure \ref{fig4}}
	\label{fig6}
\end{figure}

Figures \ref{fig9} and \ref{fig10} depict the analogues of Figures \ref{fig3} and \ref{fig4} for the situation when the expected return for the risky asset is only 2\%. While the magnitudes of the quantities are very similar (visually almost indistinguishable to the case of larger $\mu$), one sees that the optimal weight $w^*$ and the critical threshold $\widehat{w}$ are considerably smaller with smaller expected invested return $\mu$, i.e. there is less incentive to invest a larger amount of the capital into the risky asset. 
\begin{figure}[h]
	\centering
	\begin{subfigure}{0.32\textwidth}
		\includegraphics[width=\textwidth]{./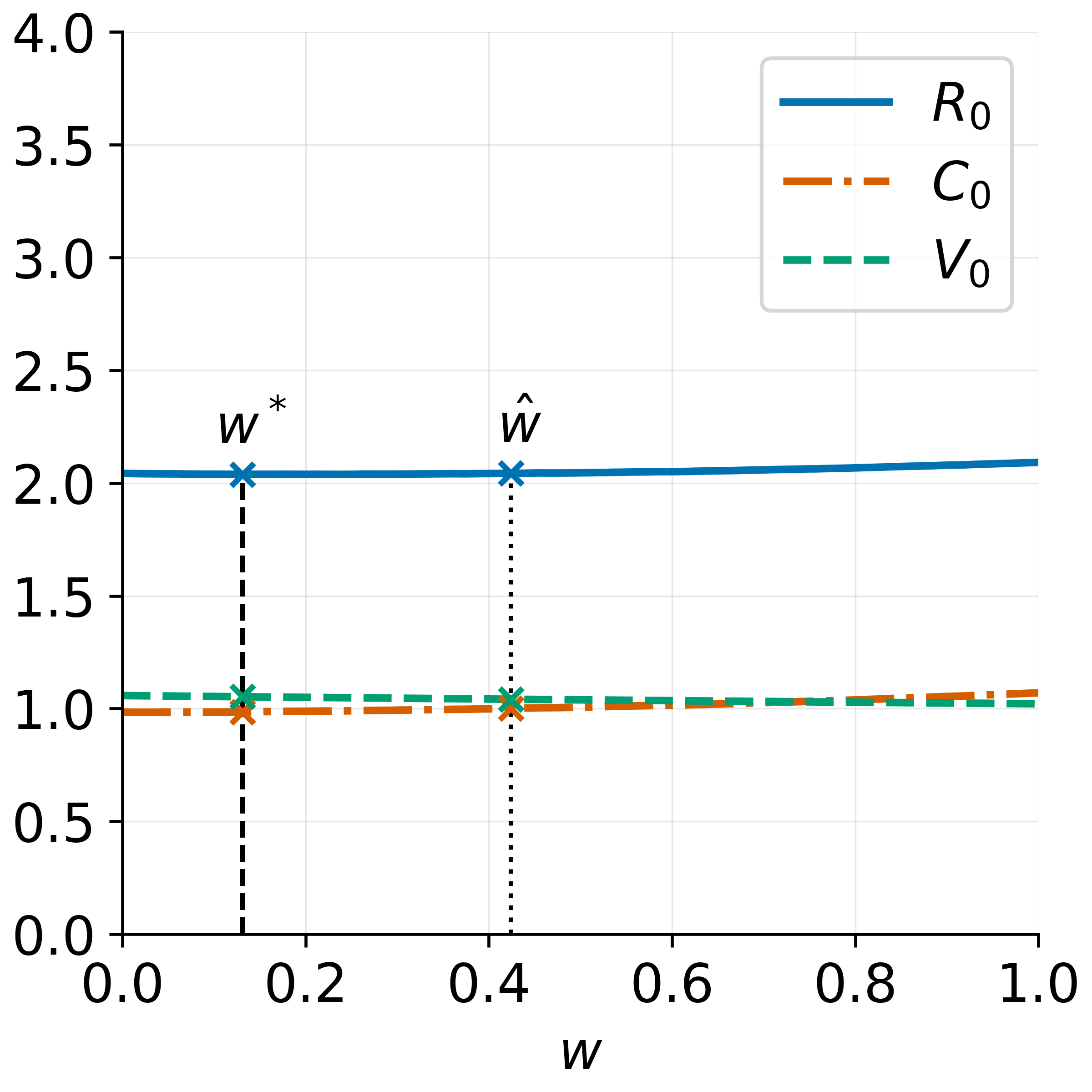}
		\caption{$\text{std}(S_1)=0.1$}
		\label{fig:firstd}
	\end{subfigure}
	\hfill
	\begin{subfigure}{0.32\textwidth}
		\includegraphics[width=\textwidth]{./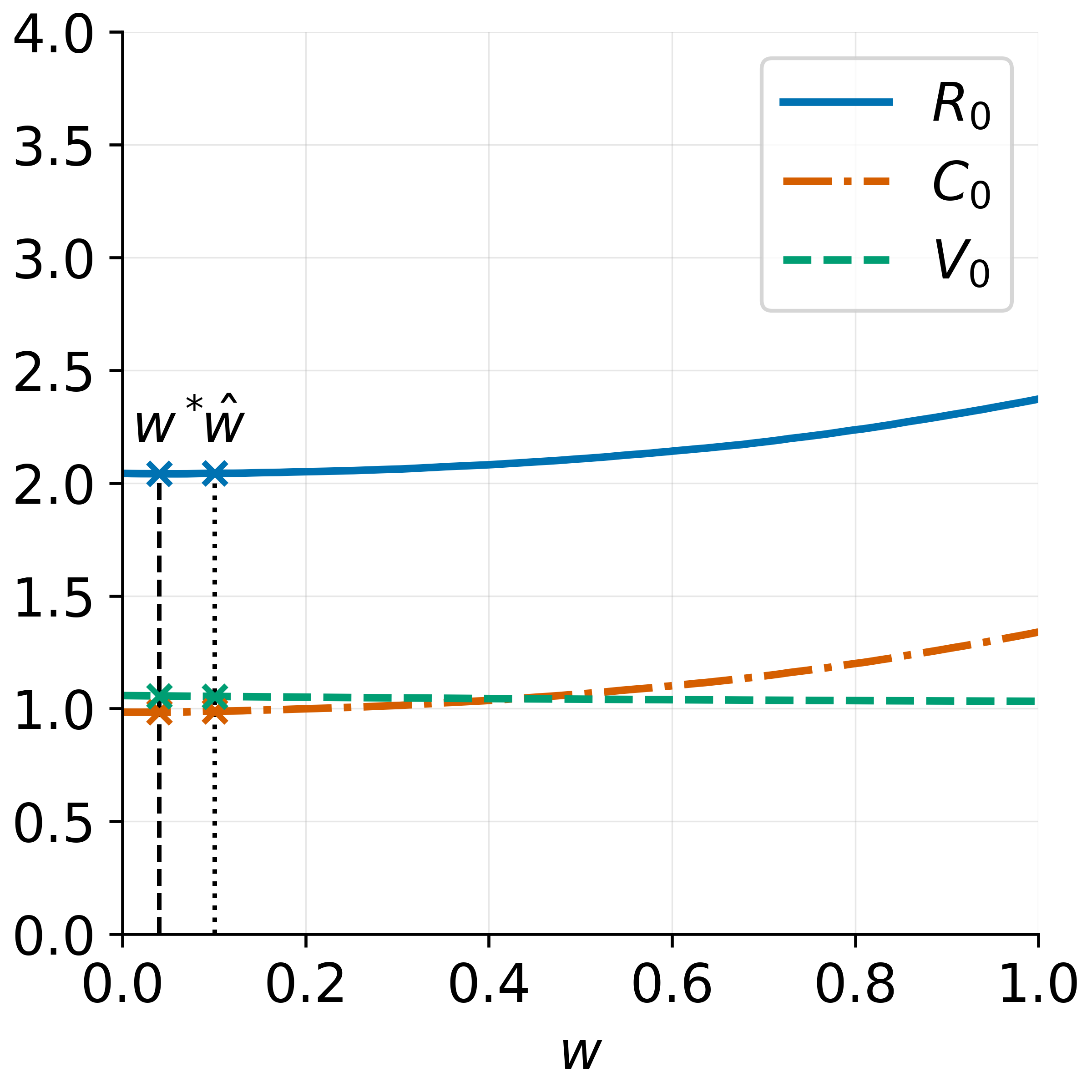}
		\caption{$\text{std}(S_1)=0.2$}
		\label{fig:secondd}
	\end{subfigure}
	\hfill
	\begin{subfigure}{0.32\textwidth}
		\includegraphics[width=\textwidth]{./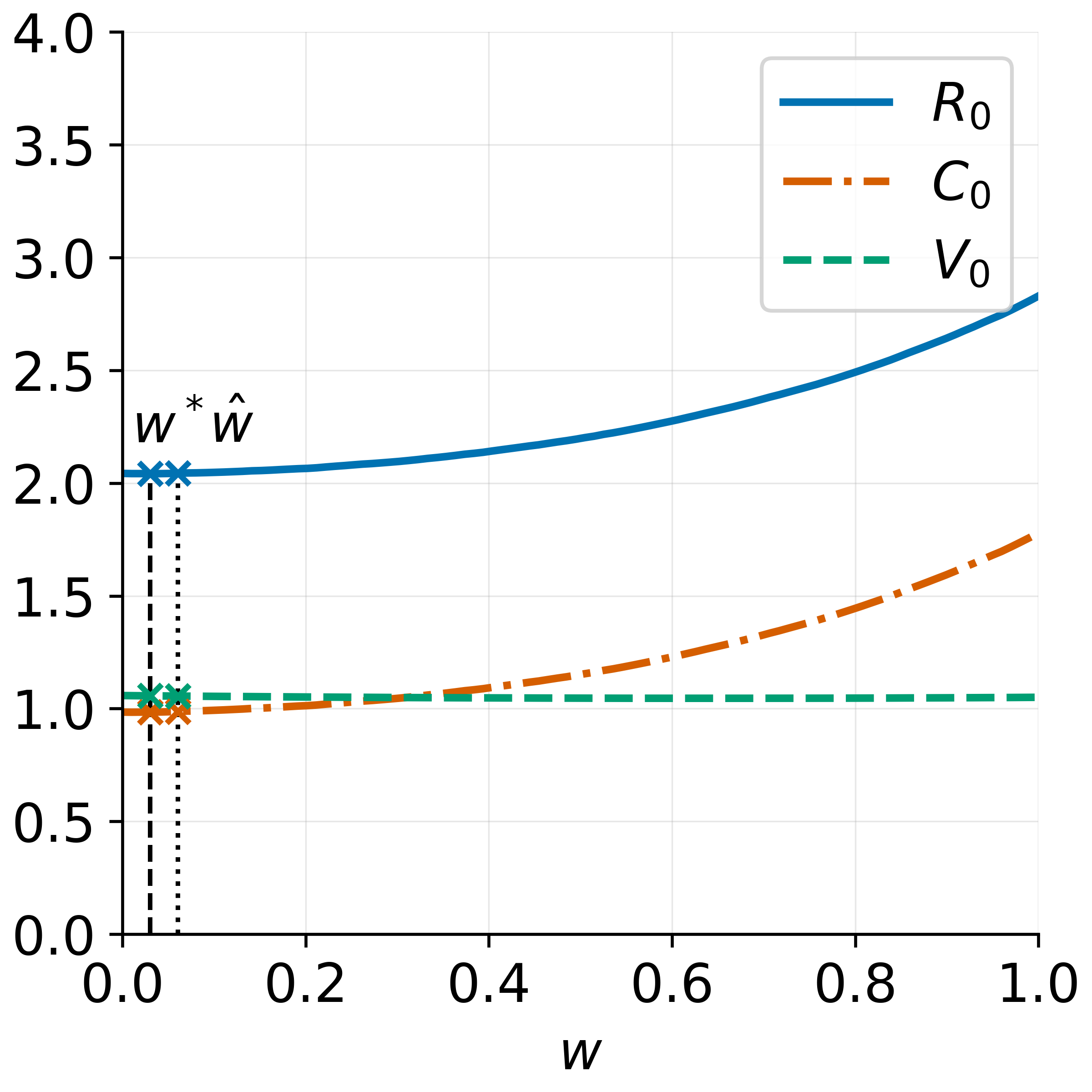}
		\caption{$\text{std}(S_1)=0.3$}
		\label{fig:thirdd}
	\end{subfigure}
	\caption{$R_{0}^{w}, C_{0}^{w}$ and $V_{0}^{w}$ for the lognormal model with $\E[S_1] = 1.02$, $\E[X_1] = 1$, $\text{std}(X)=0.3$, $\rho=\VaR_{0.005}$ and various values of $\text{std}(S_1)$}
	\label{fig9}
\end{figure}
\begin{figure}[h]
	\centering
	\begin{subfigure}{0.32\textwidth}
		\includegraphics[width=\textwidth]{./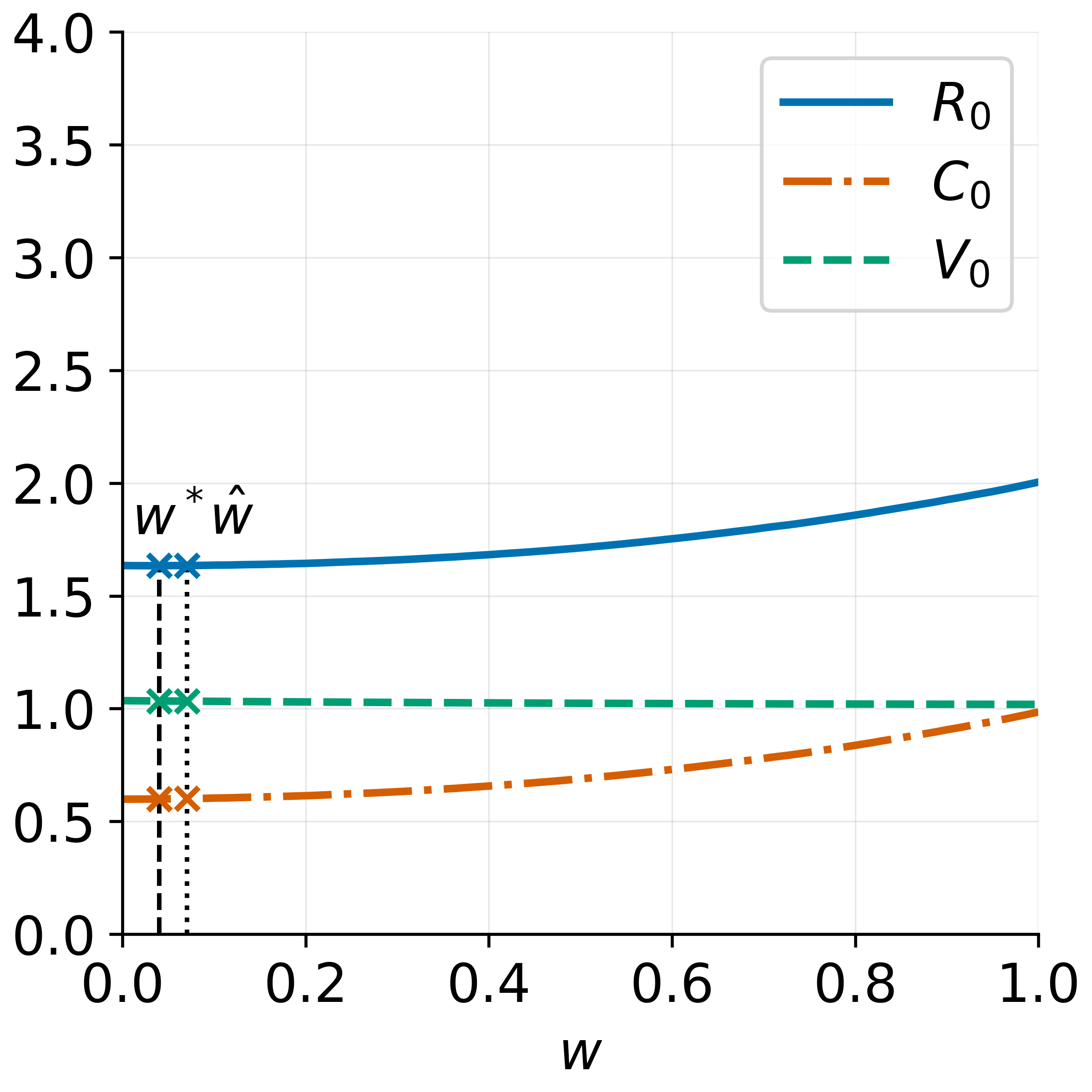}
		\caption{$\text{std}(X_1)=0.2$}
		\label{fig:firstd}
	\end{subfigure}
	\hfill
	\begin{subfigure}{0.32\textwidth}
		\includegraphics[width=\textwidth]{./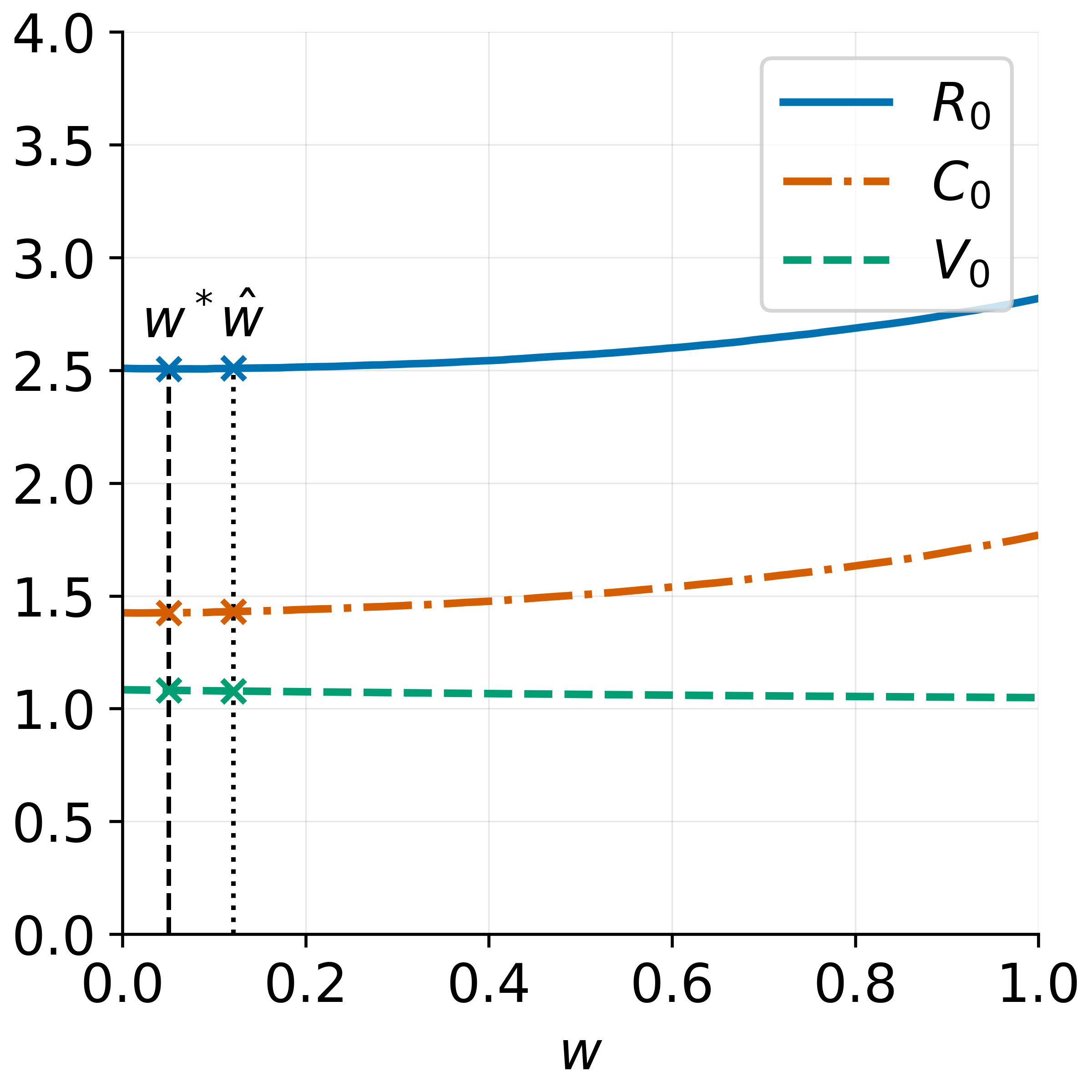}
		\caption{$\text{std}(X_1)=0.4$}
		\label{fig:secondd}
	\end{subfigure}
	\hfill
	\begin{subfigure}{0.32\textwidth}
		\includegraphics[width=\textwidth]{./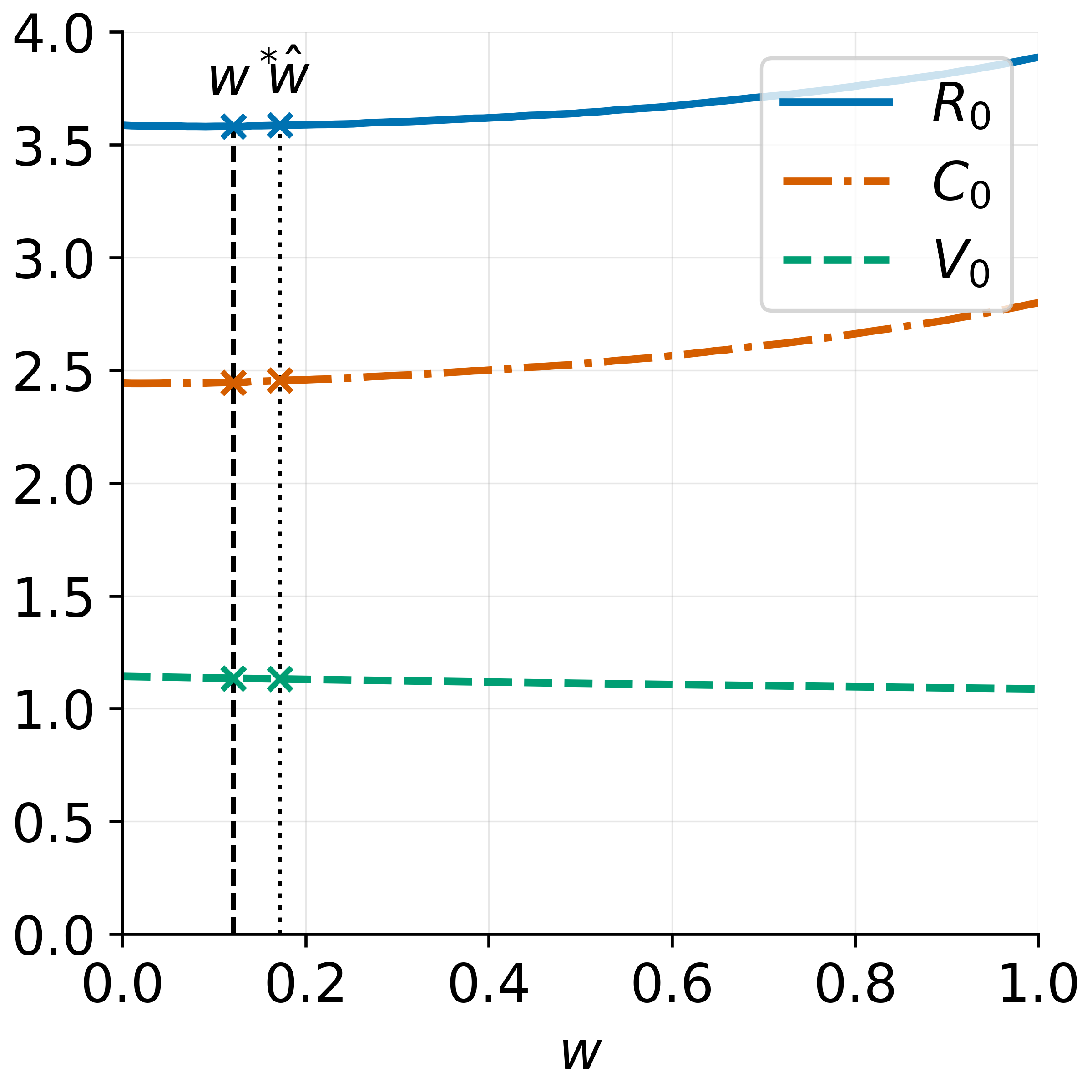}
		\caption{$\text{std}(X_1)=0.6$}
		\label{fig:thirdd}
	\end{subfigure}
	\caption{$R_{0}^{w}, C_{0}^{w}$ and $V_{0}^{w}$ for the lognormal model with $\E[S_1] = 1.02$, $\E[X_1] = 1$, $\text{std}(S_1)=0.2$, $\rho=\VaR_{0.005}$ and various values of $\text{std}(X_1)$}
	\label{fig10}
\end{figure}

\subsection{Lognormal investment returns and Pareto claims}
Figures \ref{fig11} and \ref{fig12} illustrate the possible effect of even heavier tails for the insurance risk, assuming that $X_1$ is Pareto-distributed with cumulative distribution function $F(x)=1-(x/x_m)^{-\beta}$, $x>x_m>0$, with the same first two moments as in the case of lognormal $X_1$ in Figures \ref{fig3} and \ref{fig4}: 
\begin{align*}
\E[X_1]=x_m\frac{\beta}{\beta-1}, \quad \text{std}(X_1)=x_m\frac{\sqrt{\beta}}{(\beta-1)\sqrt{\beta-2}}. 
\end{align*}
For example, the resulting Pareto parameters in Figure \ref{fig12}, where $\E[X_1]=1$, are given by 
\begin{align*}
\beta=1+\sqrt{1+1/\text{std}(X_1)^2}, \quad x_m=(\beta-1)/\beta.
\end{align*}
A comparison of Figure \ref{fig4} (lognormal model) and Figure \ref{fig12} (lognormal $S_1$ and Pareto-distributed $X_1$) 
shows that the main difference is that the heavier Pareto tail increases the overall capital requirement $R_0$, and that the increase in $R_0$ is solely financed by increased capital $C_0$ (the premiums $V_0$ stay virtually unchanged).
In order to see the effects of heavier tails even more significantly, in Figure \ref{fig15} the expected aggregate claim size is kept at $ \E[X_1] = 1$, but the parameter $\beta$ is taken more extreme ($\beta=2$ and $\beta=1.1$), in which case the standard deviation does not exist anymore (but the mean still does). 
One observes how the capital requirement $R_0$ goes up tremendously, but the theoretical premiums $V_0$ are still virtually unchanged.
It should be noted that whereas $R_0$ increases as $\beta$ decreases from $\beta=2$ to $\beta=1.1$, this is not the case for $V_0$ due to the increasing value of the shareholders limited liability option (cf.~Example \ref{ex:llo_pareto}). \\ 
Generally, the optimal investment weight $w^*$ and the critical threshold $\widehat{w}$ are larger in the Pareto case, indicating that for heavier-tailed insurance risks a riskier investment is advantageous (in the very heavy-tailed case of $\beta=1.1$ even to the extent that full investment in the risky asset, $w=1$, does not lead to a higher overall capital requirement $R_0$ than risk-less investment, $w=0$). 
Since the insurance risk dominates here, we gain from an attractive return on the risky investment without substantially affecting the overall capital requirement.  
One also observes that in comparison with lighter-tailed risks, the curves are generally flatter, i.e.\ they are less sensitive to changes in $w$. 

\begin{figure}[h]
	\centering
	\begin{subfigure}{0.32\textwidth}
		\includegraphics[width=\textwidth]{./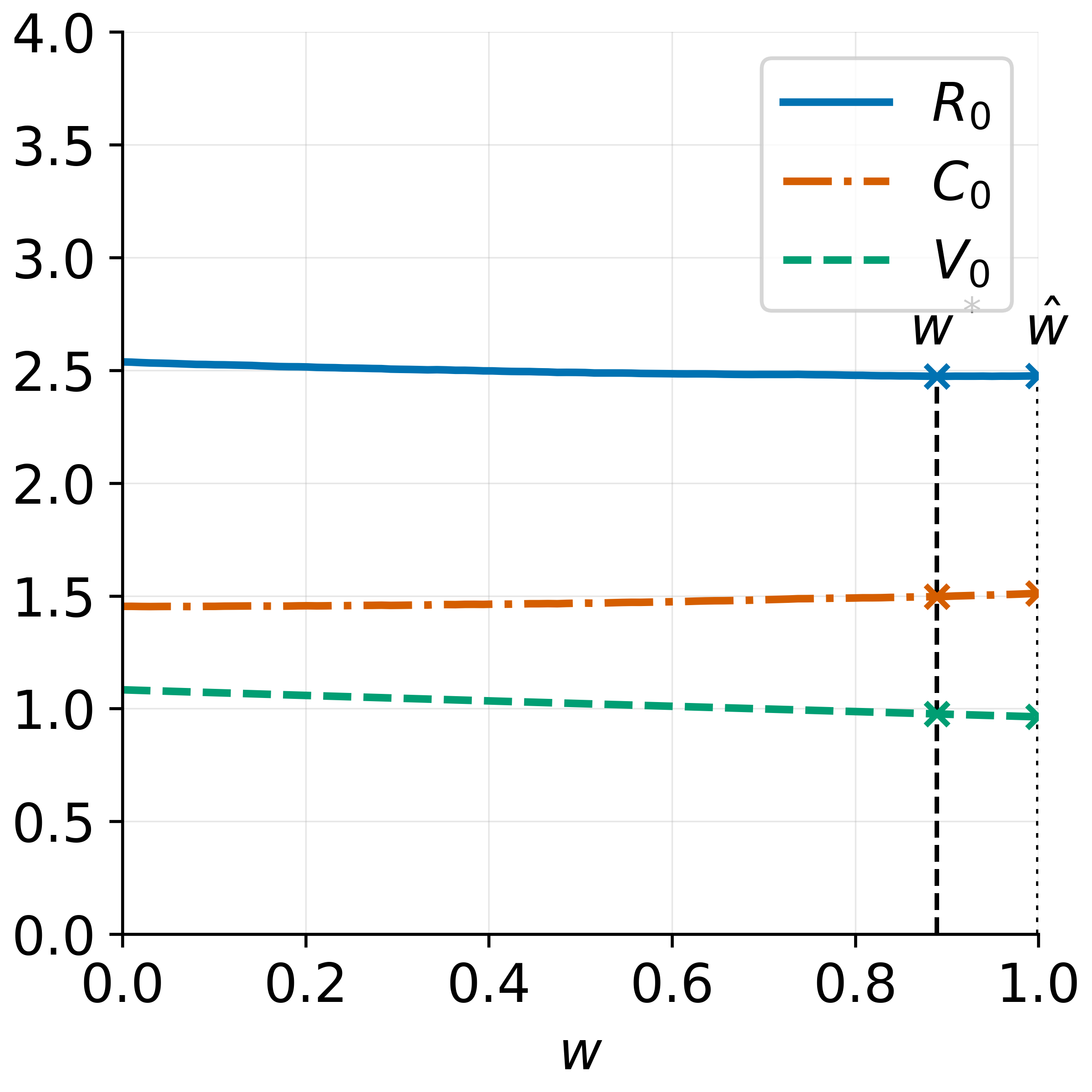}
		\caption{$\text{std}(S_1)=0.1$}
		\label{fig:firsta}
	\end{subfigure}
	\hfill
	\begin{subfigure}{0.32\textwidth}
		\includegraphics[width=\textwidth]{./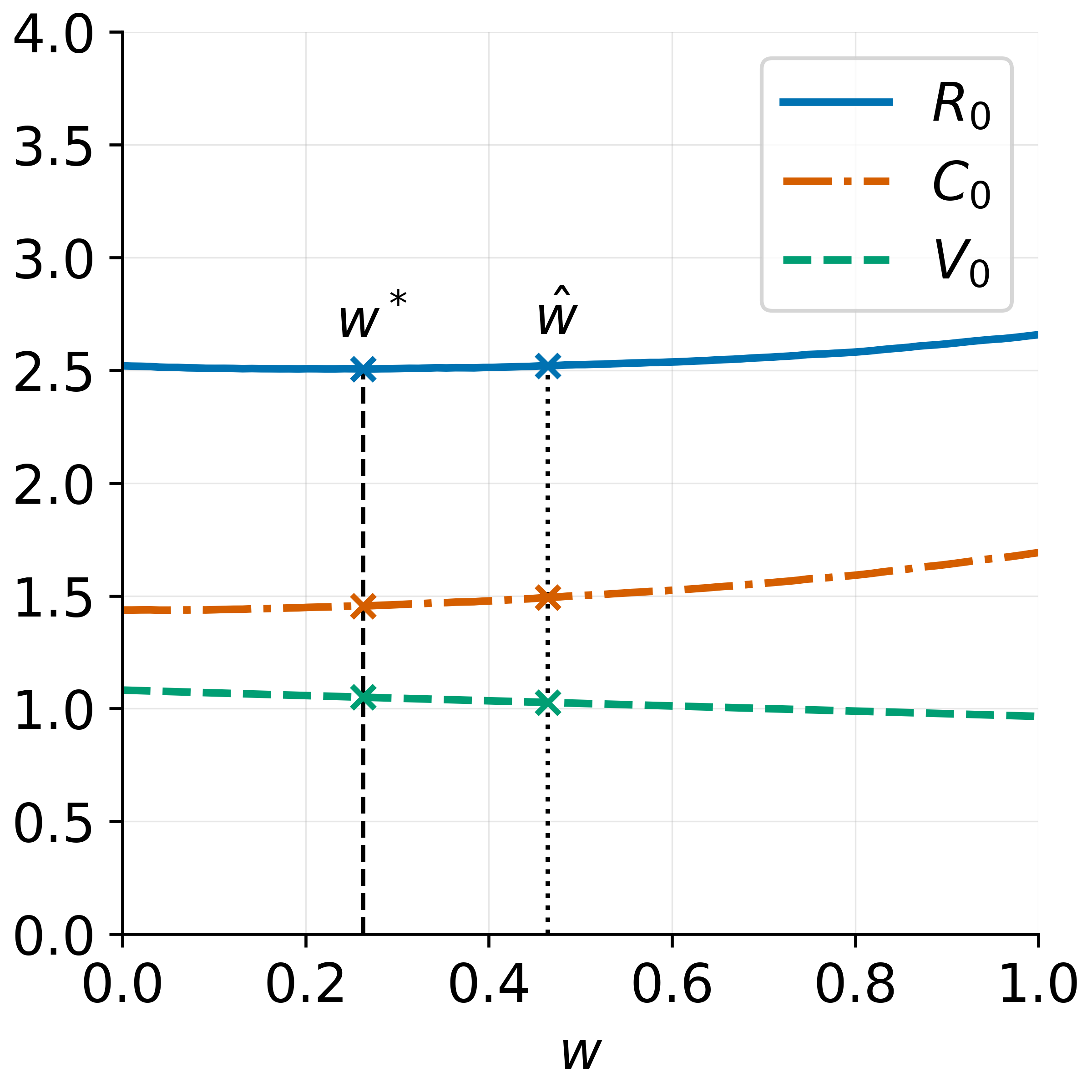}
		\caption{$\text{std}(S_1)=0.2$}
		\label{fig:seconda}
	\end{subfigure}
	\hfill
	\begin{subfigure}{0.32\textwidth}
		\includegraphics[width=\textwidth]{./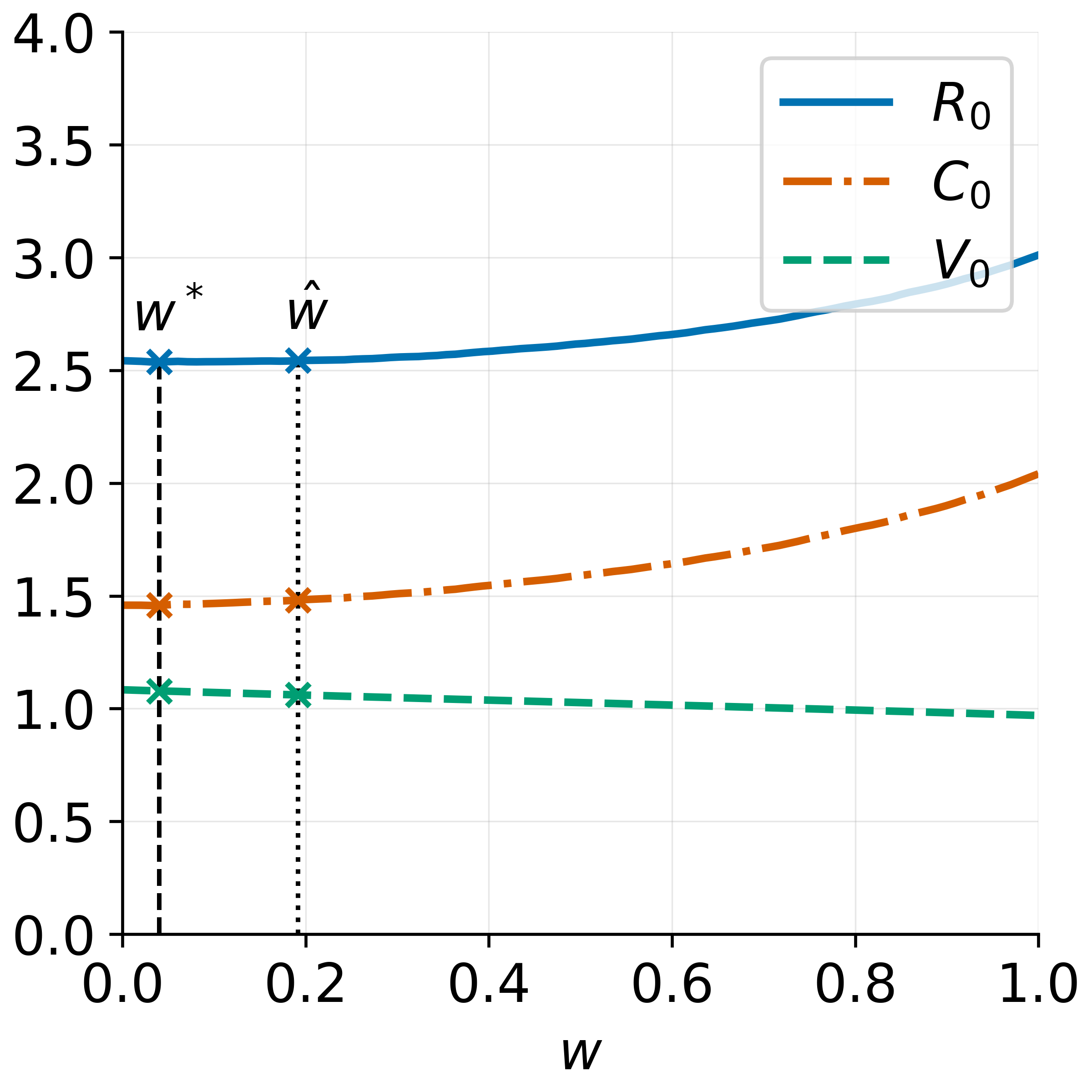}
		\caption{$\text{std}(S_1)=0.3$}
		\label{fig:thirda}
	\end{subfigure}
	\caption{$R_{0}^{w}, C_{0}^{w}$ and $V_{0}^{w}$ for lognormal $S_1$ and Pareto-distributed $X_1$ with $\E[X_1] = 1$, $\text{std}(X_1)=0.3$, $\E[S_1] = 1.05$, $\rho=\VaR_{0.005}$ and various values of $\text{std}(S_1)$}
	\label{fig11}
\end{figure}
\begin{figure}[h]
	\centering
	\begin{subfigure}{0.32\textwidth}
		\includegraphics[width=\textwidth]{./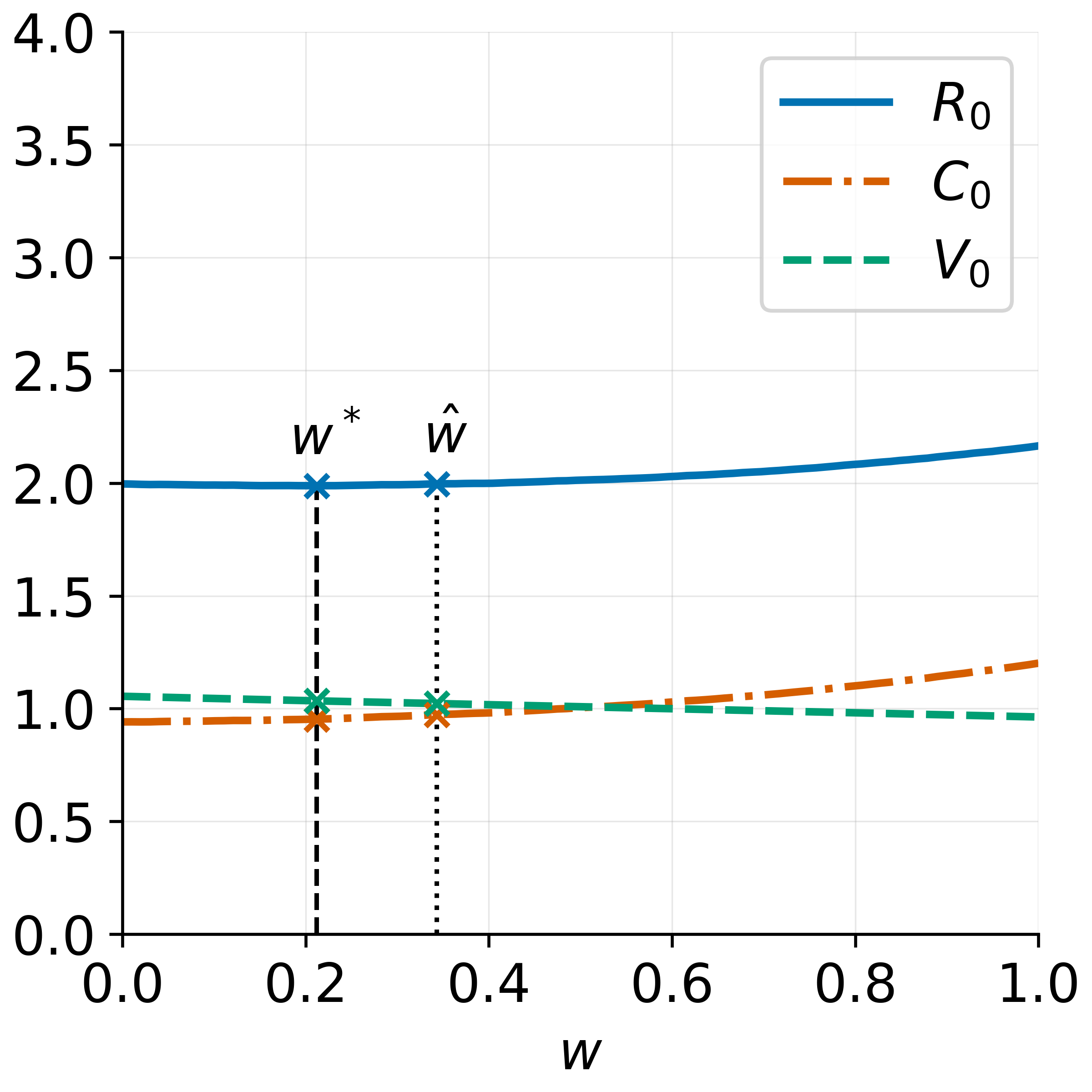}
		\caption{$\text{std}(X_1)=0.2$}
		\label{fig:firstb}
	\end{subfigure}
	\hfill
	\begin{subfigure}{0.32\textwidth}
		\includegraphics[width=\textwidth]{./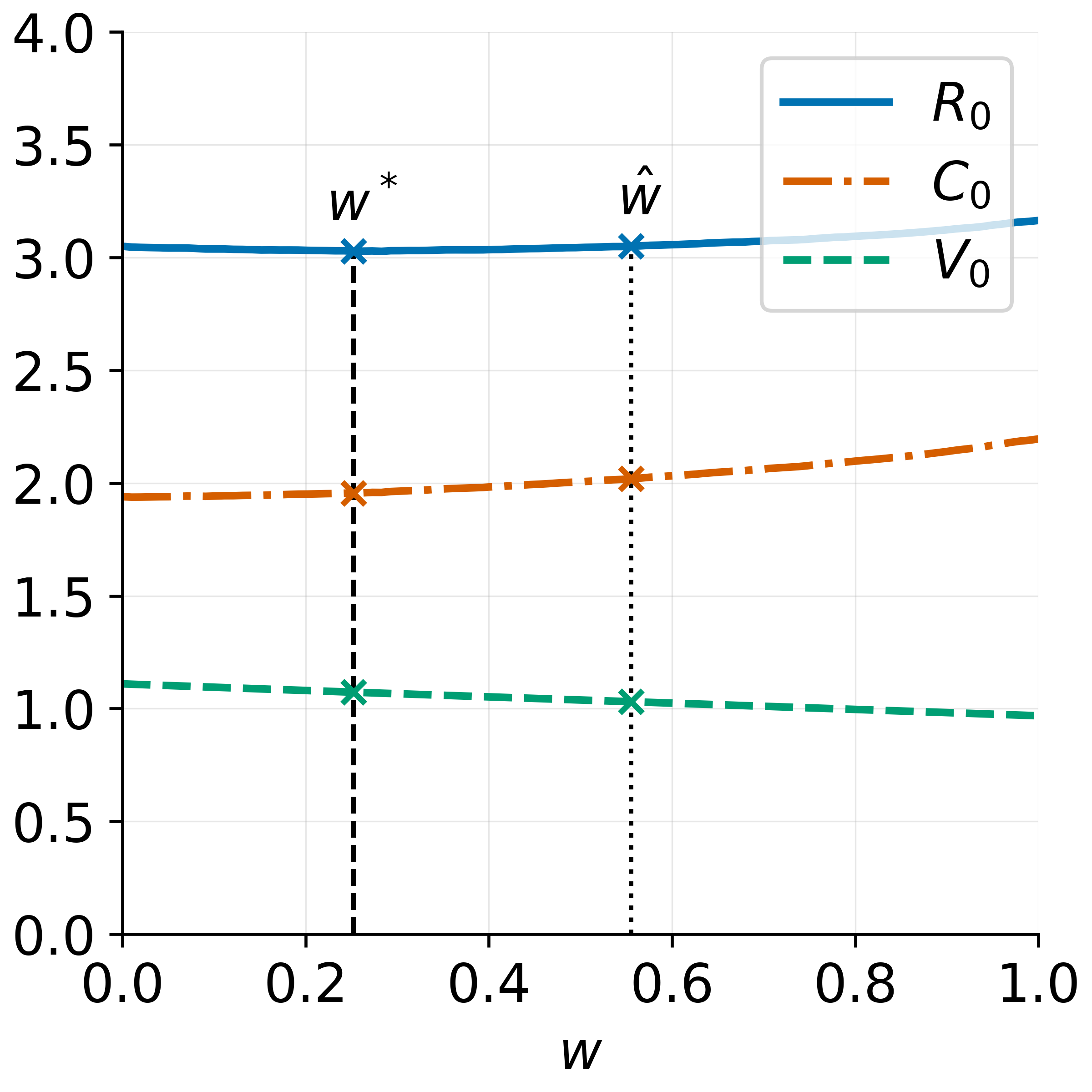}
		\caption{$\text{std}(X_1)=0.4$}
		\label{fig:secondb}
	\end{subfigure}
	\hfill
	\begin{subfigure}{0.32\textwidth}
		\includegraphics[width=\textwidth]{./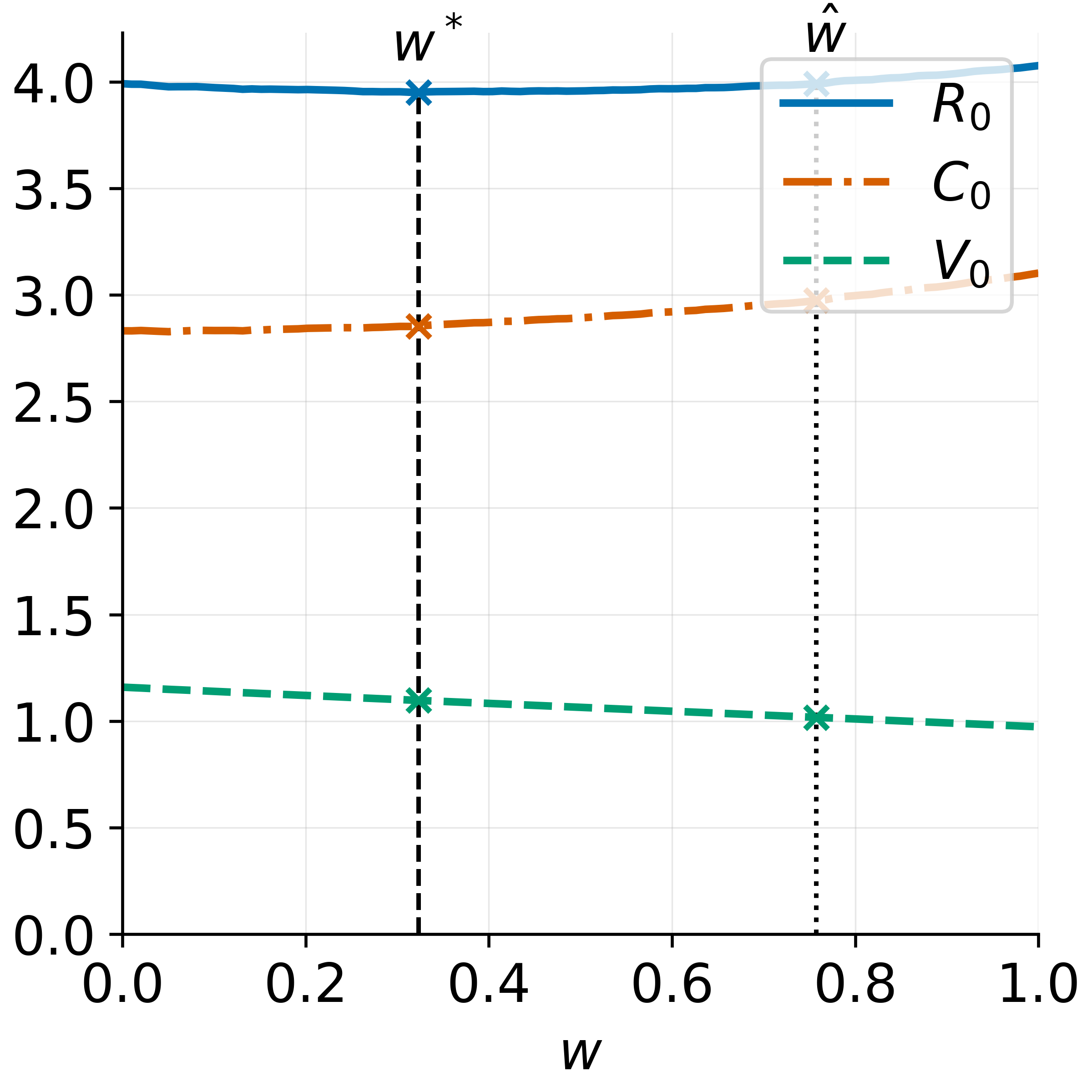}
		\caption{$\text{std}(X_1)=0.6$}
		\label{fig:thirdb}
	\end{subfigure}
	\caption{$R_{0}^{w}, C_{0}^{w}$ and $V_{0}^{w}$ for lognormal $S_1$ and Pareto-distributed $X_1$ with $\E[X_1] = 1$, $\E[S_1] = 1.05$, $\text{std}(S_1)=0.2$, $\rho=\VaR_{0.005}$ and various values of $\text{std}(X_1)$}
	\label{fig12}
\end{figure}

\begin{figure}[h]
	\centering
	\begin{subfigure}{0.32\textwidth}
		\includegraphics[width=\textwidth]{./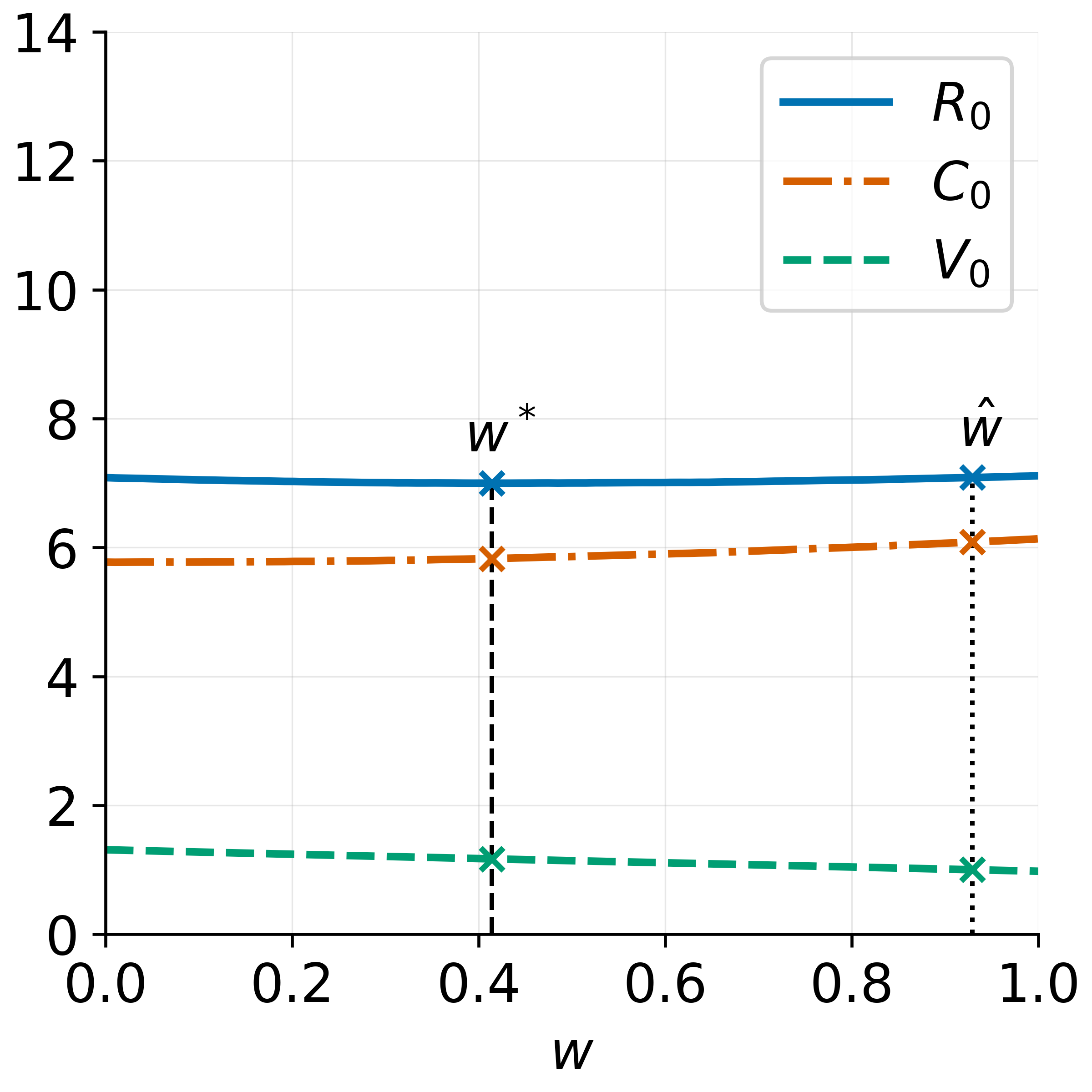}
		\caption{$\beta=2$}
		\label{fig:secondb}
	\end{subfigure}
	\hspace*{0.32cm}	\begin{subfigure}{0.32\textwidth}
		\includegraphics[width=\textwidth]{./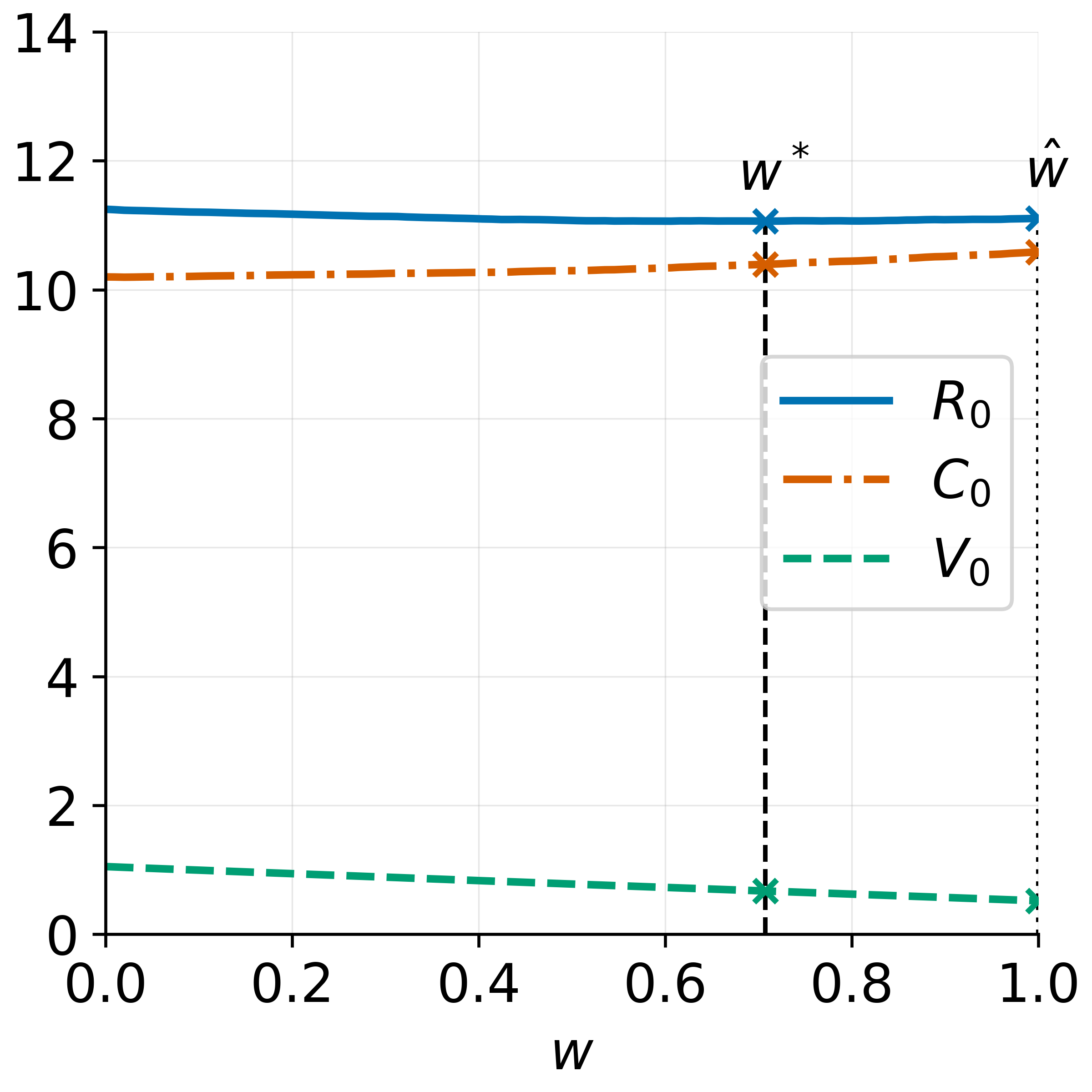}
		\caption{$\beta=1.1$}
		\label{fig:thirdb}
	\end{subfigure}
	\caption{$R_{0}^{w}, C_{0}^{w}$ and $V_{0}^{w}$ for lognormal $S_1$ and Pareto type I-distributed $X_1$ with $\E[X_1] = 1$, $\E[S_1] = 1.05$, $\text{std}(S_1) =0.2$, $\rho=\VaR_{0.005}$ and low levels of Pareto parameter $\beta$}
	\label{fig15}
\end{figure}

Finally, it may be of interest to see the effect of the risk measure on the result. While for the normal model the expected shortfall can be replaced by a value at risk with a different security level, this is not the case for the lognormal model. Figures \ref{fig8} and \ref{fig7b} depict the analogues of Figures \ref{fig3} and \ref{fig4} when $\VaR_{0.005}$ is replaced by $\ES_{0.01}$ (which is the risk measure in the Swiss Solvency Test). As expected, the overall capital requirement $R_0$ increases, but this increase is provided by a larger investment $C_0$ of the shareholders and the insurance premiums $V_0$ stay virtually unchanged. In this case, the optimal weights $w^*$ and the critical threshold $\widehat{w}$ are very similar to the ones obtained under the $\VaR_{0.005}$ criterion. \\

\begin{figure}[h]
	\centering
	\centering
	\begin{subfigure}{0.32\textwidth}
		\includegraphics[width=\textwidth]{./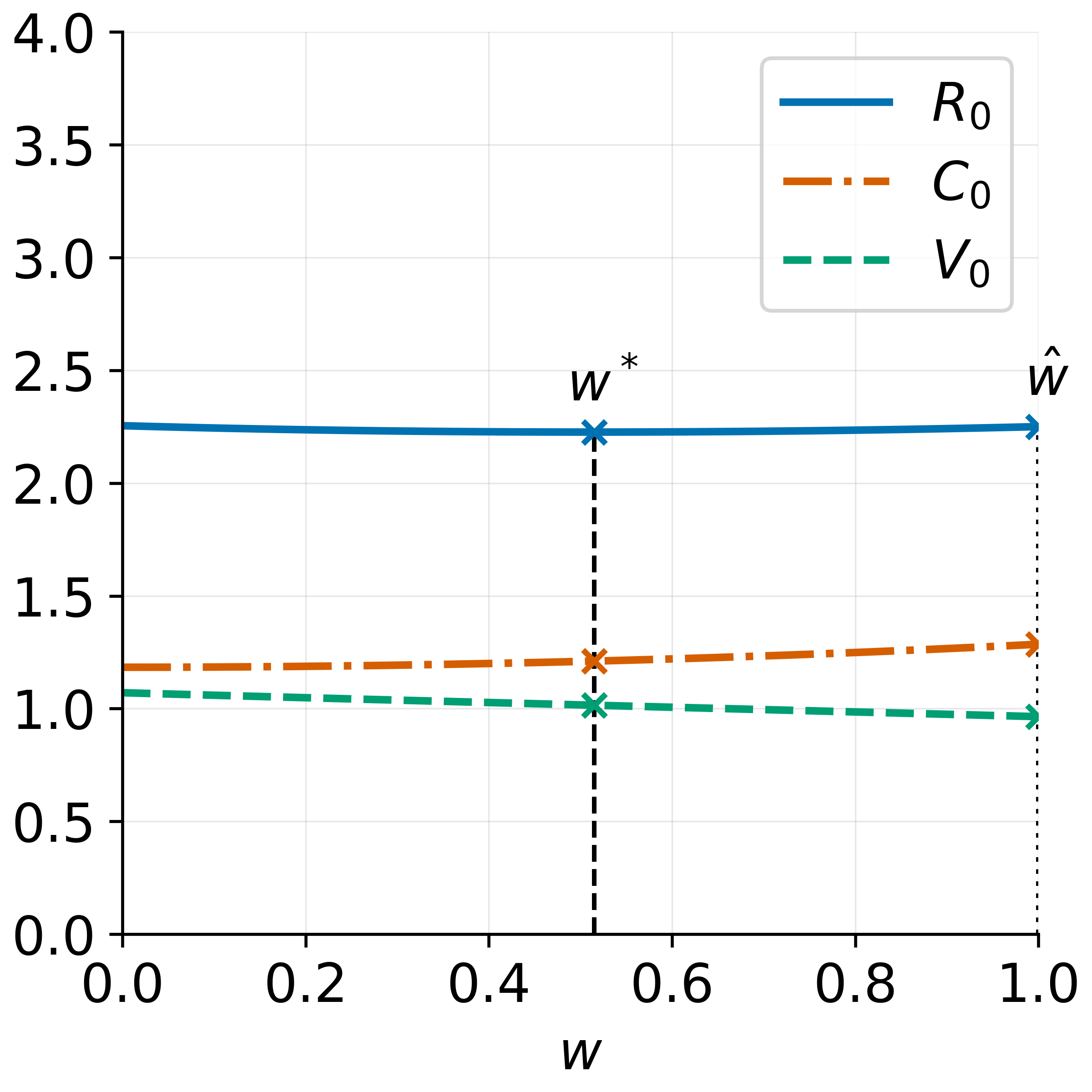}
		\caption{$\text{std}(S_1)=0.1$}
		\label{fig:firstc}
	\end{subfigure}
	\hfill
	\begin{subfigure}{0.32\textwidth}
		\includegraphics[width=\textwidth]{./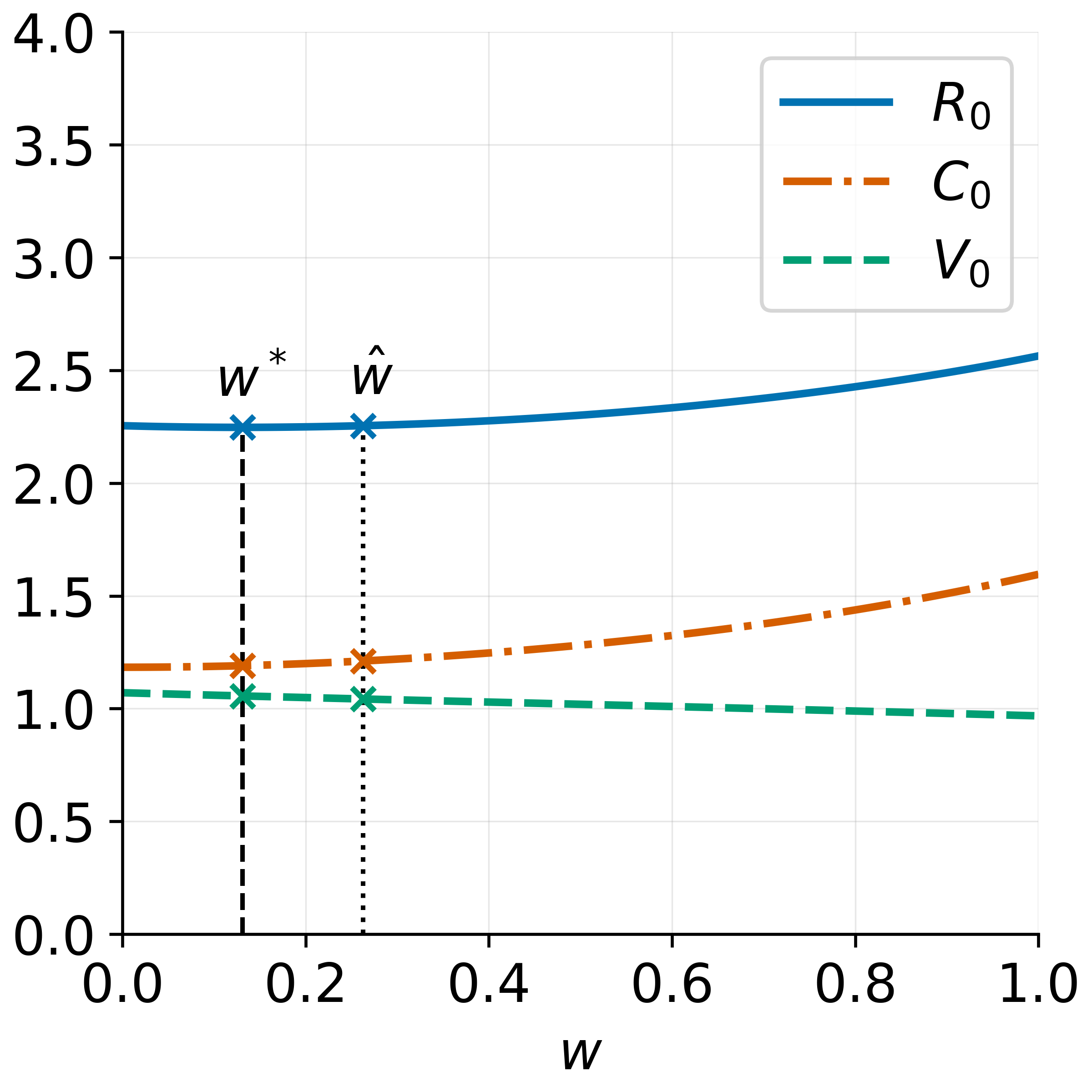}
		\caption{$\text{std}(S_1)=0.2$}
		\label{fig:secondc}
	\end{subfigure}
	\hfill
	\begin{subfigure}{0.32\textwidth}
		\includegraphics[width=\textwidth]{./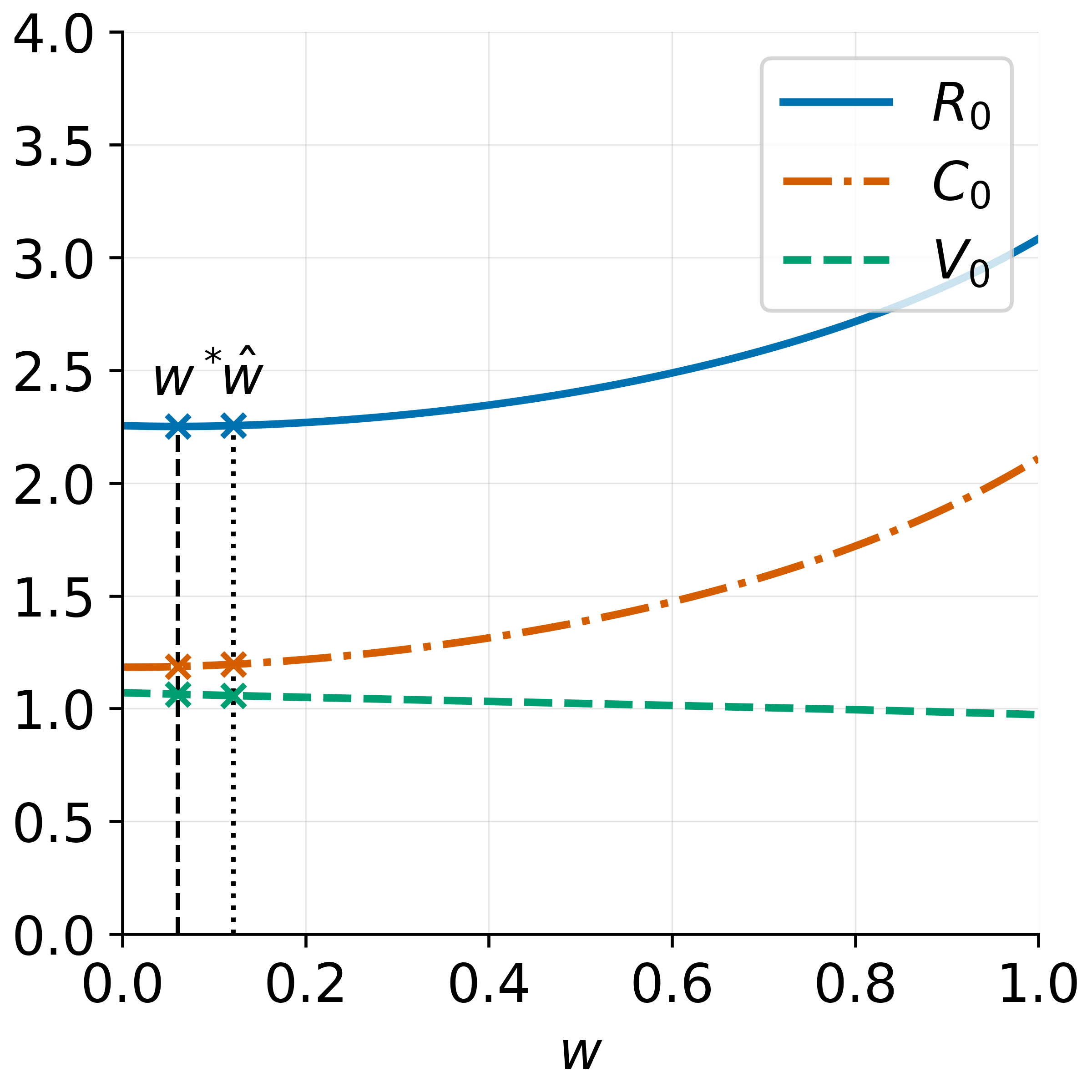}
		\caption{$\text{std}(S_1)=0.3$}
		\label{fig:thirdc}
	\end{subfigure}
	\caption{$R_{0}^{w}, C_{0}^{w}$ and $V_{0}^{w}$ for the lognormal model with $\E[S_1] = 1.05$, $\E[X_1] = 1$, $\text{std}(X_1)=0.3$, $\rho=\ES_{0.01}$ and various values of $\text{std}(S_1)$}
	\label{fig8}
\end{figure}
	
	\begin{figure}[h]
	\centering
	\begin{subfigure}{0.32\textwidth}
		\includegraphics[width=\textwidth]{./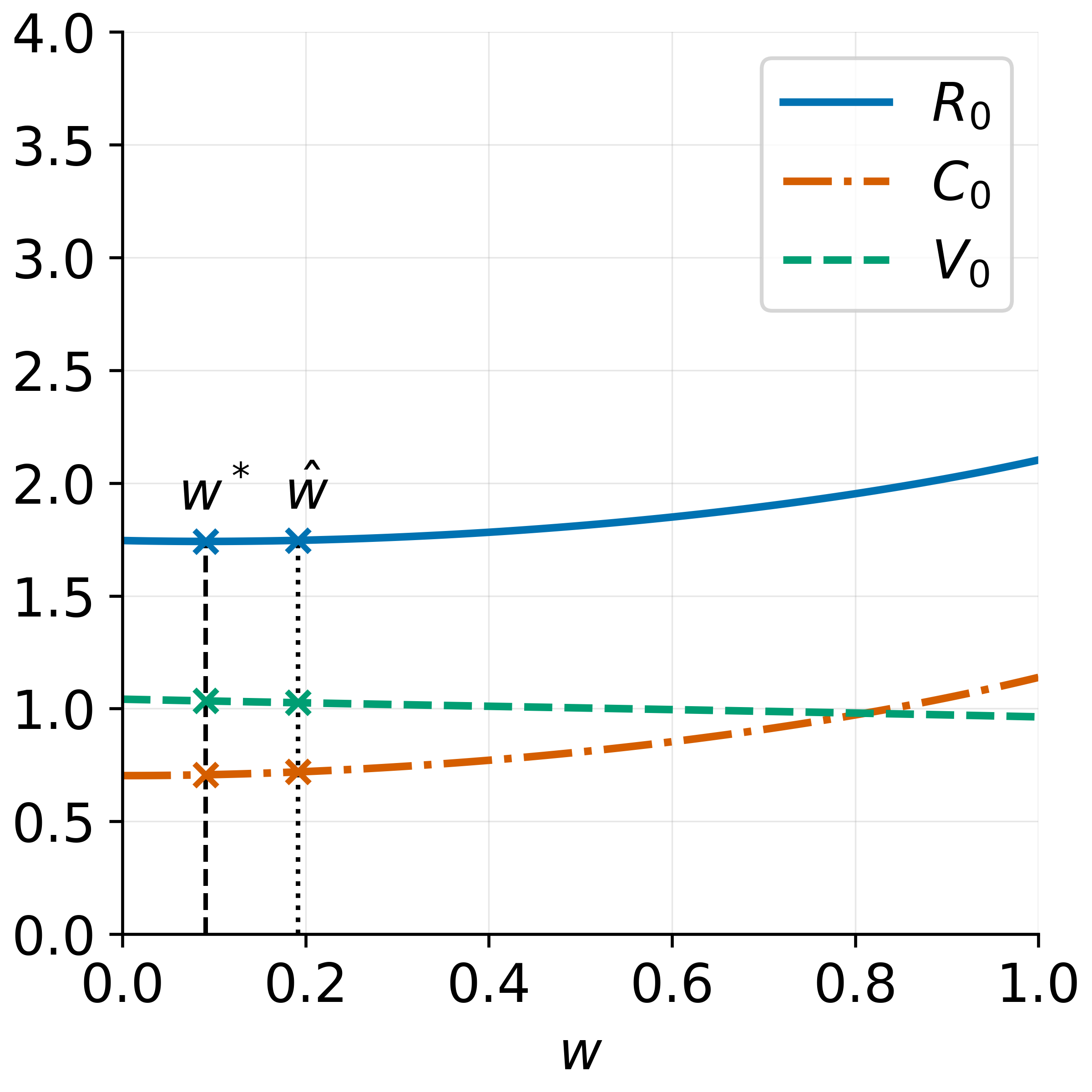}
		\caption{$\text{std}(X_1)=0.2$}
		\label{fig:firstb}
	\end{subfigure}
	\hfill
	\begin{subfigure}{0.32\textwidth}
		\includegraphics[width=\textwidth]{./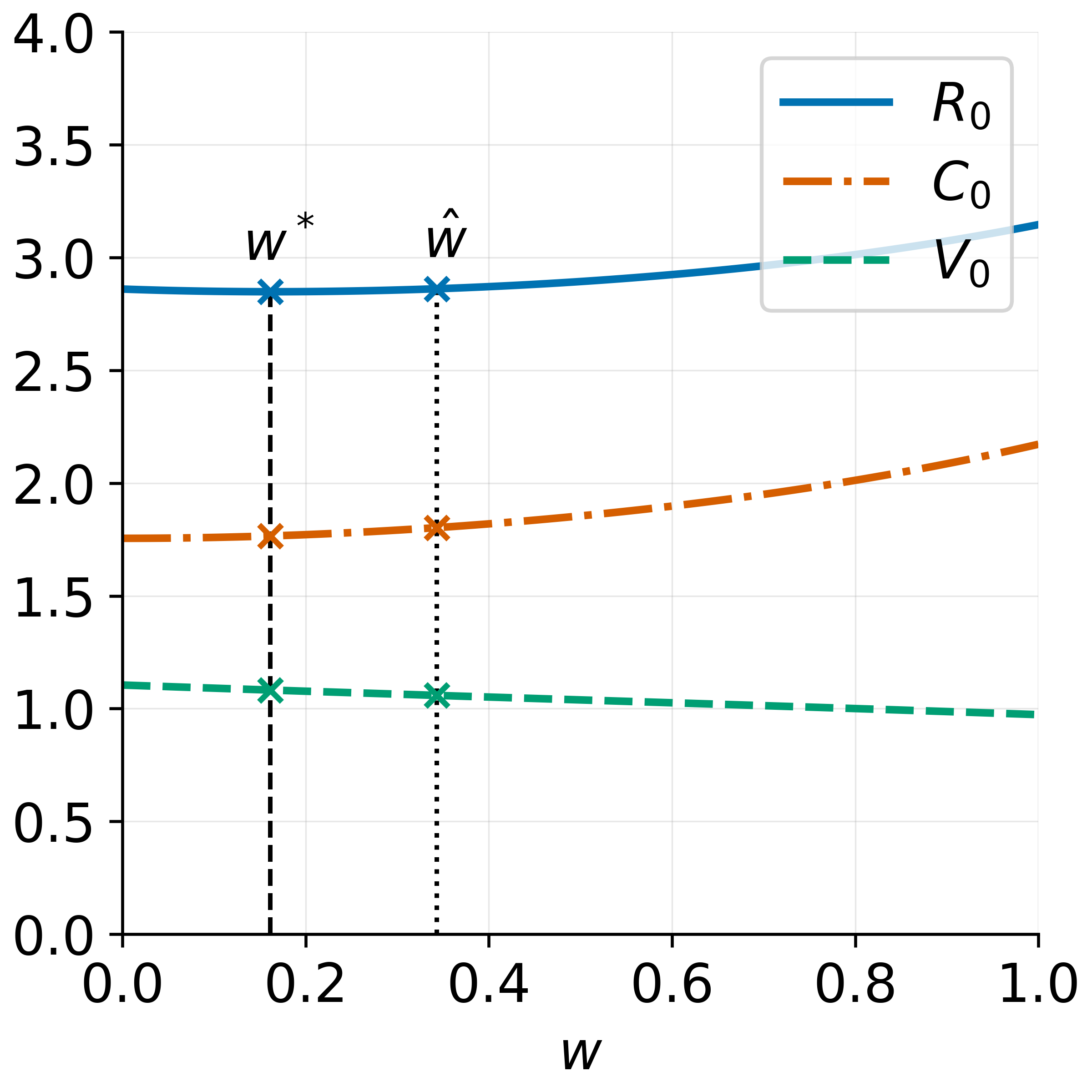}
		\caption{$\text{std}(X_1)=0.4$}
		\label{fig:secondb}
	\end{subfigure}
	\hfill
	\begin{subfigure}{0.32\textwidth}
		\includegraphics[width=\textwidth]{./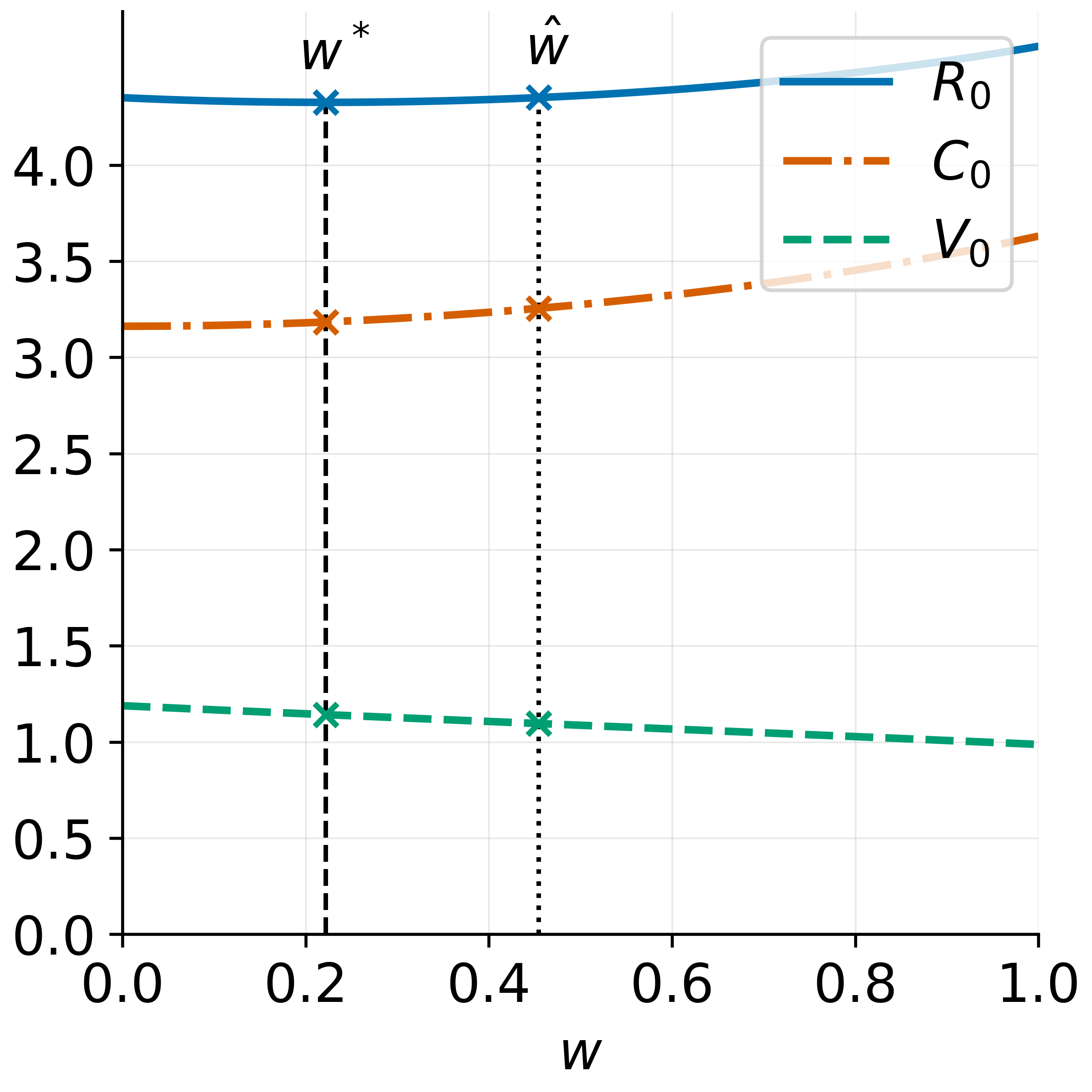}
		\caption{$\text{std}(X_1)=0.6$}
		\label{fig:thirdb}
	\end{subfigure}
	\caption{$R_{0}^{w}, C_{0}^{w}$ and $V_{0}^{w}$ for the lognormal model with $\E[S_1] = 1.05$, $\E[X_1]  = 1$, $\text{std}(S_1)=0.2$, $\rho=\ES_{0.01}$ and various values of $\text{std}(X_1)$}
	\label{fig7b}
\end{figure}

All these numerical examples illustrate that the appropriate blending of (independent) financial risk into the management of insurance risk can drive down the necessary insurance premiums according to actuarial considerations, which may be considered as a promising insight in situations where insurance premiums are considered too high. At the same time, the absolute reduction of $V_0$ through increasing $w$ is still relatively moderate.

\section{Conclusions}\label{sec6}
In this paper we showed, for simple yet reasonably realistic model assumptions, that a partial investment of the solvency capital into risky assets with higher return but also higher volatility can be beneficial for both policyholders and shareholders of a non-life insurance company. While the fact itself is well-known and also exploited in insurance practice, the present analysis allows to assess this effect quantitatively and quite explicitly from a viewpoint that seems not to have been pursued in the literature before. For the concrete static one-period model and under certain assumptions, we established a number of monotonicity results for needed premiums and solvency capital requirements. For the case of normally distributed insurance and asset risks it was possible to even derive explicit formulas for these quantities as well as the limit weight $\widehat{w}$ until which risky investment is a mutual advantage.  
We also illustrated the results numerically for parameters that are motivated by real-life magnitudes. For heavy-tailed risks, it was possible to quantify the increasing importance of the limited liability option of the shareholders, which can translate into reduced premium requirements of policyholders. 
Also, when the effect of heavy-tailed insurance risk on the overall capital requirement dominates the effect of risky investments, then higher proportions of the risky investment provide an additional pooling effect that supercedes the additionally introduced risk and are typically in favor of the policyholders.\\   
The results of this paper may be generalized in various directions. 
For instance, it could be interesting to investigate the sensitivity of the resulting quantities with respect to model uncertainty, both on the dynamics of the financial market and the distribution of insurance claims. Furthermore, although the independence assumption between insurance and financial risk is not an uncommon assumption in practice for certain non-life business lines, in a future study we intend to investigate to what extent dependence may compromise this trade-off of introducing additional risk for the prospect of higher investment returns. 
Also, to extend the analysis to a multi-period framework including explicit runoffs in long-tailed business would be of interest (see e.g.\ \cite{salzmann2010cost,Moehr2011}, possibly also under consideration of a changing realized cost-of-capital rate, cf.\ \cite{AlDa}). Such extensions are left for future research.

\appendix

\section{Proofs}\label{app1}

\begin{proof}[Proof of Proposition \ref{prop1}]
	If $w=0$, then the solution to $\rho(R_0Z_1-X_1)=0$ is $R_0=\rho(-X_1)\geq 0$. Therefore, it is sufficient to only consider $w>0$. 
	Note that $\VaR_{\alpha}(rZ_1-X_1)=F_{X_1-rZ_1}^{-1}(1-\alpha)$ and 
	\begin{align*}
		F_{X_1-rZ_1}(y)=\int_{0}^{\infty}\P(wS_1+1-w\geq (x-y)/r)dF_{X_1}(x), \quad r>0.
	\end{align*}
	Note further that $y\mapsto F_{X_1-rZ_1}(y)$ is continuous and strictly increasing since 
	\begin{align*}
		\frac{\partial}{\partial y}F_{X_1-rZ_1}(y)=\int_{y+r(w-1)}^{\infty}f_{wS_1}\bigg(\frac{x-y}{r}+w-1\bigg)\frac{1}{r}dF_{X_1}(x)>0.
	\end{align*}
	Hence, $\VaR_{\alpha}(rZ_1-X_1)=0$ is equivalent to $F_{X_1-rZ_1}(0)=1-\alpha$. 
	Note that 
	\begin{align*}
		\frac{\partial}{\partial r}F_{X_1-rZ_1}(y)=\int_{y+r(w-1)}^{\infty}f_{wS_1}\bigg(\frac{x-y}{r}+w-1\bigg)\frac{x-y}{r^2}dF_{X_1}(x)>0.
	\end{align*}
	Since $r\mapsto g(r):=F_{X_1-rZ_1}(0)$ is continuous and strictly increasing, $g(0)=F_{X_1}(0)<1-\alpha$ and $g(r)\uparrow 1$ as $r\to\infty$, there exists a unique $r>0$ solving $F_{X_1-rZ_1}(0)=1-\alpha$. 
	
	Write $(r,y)\mapsto g(r,y):=F_{X_1-rZ_1}(y)-1+\beta$. From the equation $g(r,y)=0$ it follows from the implicit function theorem and the partial derivatives above that the existence of a solution $(r^*,y^*)$ means that $y$ can be expressed as a continuous strictly decreasing function of $r$ in a neighborhood of $r^*$, i.e.~$\partial y/\partial r=-(\partial g/\partial r)/(\partial g/\partial y)<0$ in a neighborhood of $(r^*,y^*)$.  
	Hence, $r\mapsto \VaR_{\beta}(rZ_1-X_1)$ is continuous and strictly decreasing. Hence, also $r\mapsto \ES_{\alpha}(rZ_1-X_1)$ is continuous and strictly decreasing. Since, $\ES_{\alpha}(-X_1)>0$ and $\ES_{\alpha}(rZ_1-X_1)\downarrow -\infty$ as $r\to\infty$, there exists a unique $r>0$ such that $\ES_{\alpha}(rZ_1-X_1)=0$. 
\end{proof}

\begin{proof}[Proof of Proposition \ref{prop23}]
The monotonicity property of $\rho$ ($\VaR_{\alpha}$ or $\ES_{\alpha}$) means that if $\widetilde{Y}\geq_{\text{st}} Y$, then $\rho(\widetilde{Y})\leq \rho(Y)$ (see e.g.~Remark 4.58 in F\"ollmer \& Schied \cite{FS4} for the fact that $\rho$ is monotone with respect to first-order stochastic dominance, and note that this order is invariant under convolution). 

(i): Suppose that $\widetilde{X}_1\geq_{\text{st}} X_1$. Let $\widetilde{R}_0$ be the unique solution to $\rho(\widetilde{R}_0Z_1-\widetilde{X}_1)=0$. 
Suppose that $\widetilde{R}_0<R_0$. We will see that this assumption leads to a contradiction. 
By the monotonicity property of $\VaR_{\alpha}$ and $\ES_{\alpha}$ and the independence of $X_1$ and $\widetilde{X}_1$ of $Z_1$, 
\begin{align}\label{eq:proof_x1}
0=\rho(\widetilde{R}_0Z_1-\widetilde{X}_1)\geq \rho(\widetilde{R}_0Z_1-X_1).
\end{align}
Because of \eqref{eq:proof_x1} and uniqueness of the solution $r$ to $\rho(rZ_1-X_1)=0$,  
\begin{align}\label{eq:proof_x2}
\rho(\widetilde{R}_0Z_1-X_1)<0. 
\end{align}
Because of monotonicity of $\rho$ and the assumption $\widetilde{R}_0<R_0$, 
\begin{align}\label{eq:proof_x3}
\rho(\widetilde{R}_0Z_1-X_1)\geq \rho(R_0Z_1-X_1)=0. 
\end{align}
Since \eqref{eq:proof_x2} and \eqref{eq:proof_x3} are contradictory we conclude that the assumption $\widetilde{R}_0<R_0$ is false. Hence, $\widetilde{R}_0\geq R_0$ which completes the proof of the first statement.  

(ii): Suppose now that $\widetilde{Z}_1\geq_{\text{st}} Z_1$. Let $\widetilde{R}_0$ be the unique solution to $\rho(\widetilde{R}_0\widetilde{Z}_1-X_1)=0$. 
Suppose that $\widetilde{R}_0>R_0$. We will see that this assumption leads to a contradiction. 
By the monotonicity property of $\VaR_{\alpha}$ and $\ES_{\alpha}$ and the independence of $X_1$ and $\widetilde{X}_1$ of $Z_1$, 
\begin{align}\label{eq:proof_z1}
0=\rho(\widetilde{R}_0\widetilde{Z}_1-X_1)\leq \rho(\widetilde{R}_0Z_1-X_1).
\end{align}
Because of \eqref{eq:proof_z1} and uniqueness of the solution $r$ to $\rho(rZ_1-X_1)=0$,  
\begin{align}\label{eq:proof_z2}
\rho(\widetilde{R}_0Z_1-X_1)>0. 
\end{align}
Because of monotonicity of $\rho$ and the assumption $\widetilde{R}_0>R_0$, 
\begin{align}\label{eq:proof_z3}
\rho(\widetilde{R}_0Z_1-X_1)\leq \rho(R_0Z_1-X_1)=0. 
\end{align}
Since \eqref{eq:proof_z2} and \eqref{eq:proof_z3} are contradictory we conclude that the assumption $\widetilde{R}_0>R_0$ is false. Hence, $\widetilde{R}_0\leq R_0$ which completes the proof of the second statement.  

(iii): The proof reuses some arguments from the proof of Proposition 1 in \cite{engsner2017insurance}. 
By Statement (i), $\widetilde{X}_1\geq_{\text{st}} X_1$ implies $\widetilde{R}_0\geq R_0$. Note that 	 
\begin{align*}
	(\widetilde{R}_0Z_1-\widetilde{X}_1)^+&\leq_{\text{st}} 
	(R_0Z_1-\widetilde{X}_1)^++(\widetilde{R}_0-R_0)Z_1\\
	&\leq_{\text{st}} (R_0Z_1-X_1)^++(\widetilde{R}_0-R_0)Z_1.
\end{align*}
Hence, $\widetilde{C}_0-C_0\leq (\widetilde{R}_0-R_0)\E[Z_1]/(1+\eta)$. Therefore, 
\begin{align*}
	\widetilde{V}_0-V_0&=\widetilde{R}_0-R_0+C_0-\widetilde{C}_0\\ 
	&\geq (\widetilde{R}_0-R_0)\frac{1+\eta-\E[Z_1]}{1+\eta}\geq 0.
\end{align*}

(iv) By Statement (ii), $\widetilde{Z}_1\geq_{\text{st}} Z_1$ implies $\widetilde{R}_0\leq R_0$. Then
\begin{align*}
	(R_0Z_1-X_1)^+&\leq_{\text{st}} 
	(\widetilde{R}_0Z_1-X_1)^++(R_0-\widetilde{R}_0)Z_1\\
	&\leq_{\text{st}} (\widetilde{R}_0\widetilde{Z}_1-X_1)^++(R_0-\widetilde{R}_0)Z_1.
\end{align*}
Hence, $C_0-\widetilde{C}_0\leq (R_0-\widetilde{R}_0)\E[Z_1]/(1+\eta)$. 
Therefore, 
\begin{align*}
	\widetilde{V}_0-V_0&=\widetilde{R}_0-R_0+C_0-\widetilde{C}_0\\
	&\leq (\widetilde{R}_0-R_0)\frac{1+\eta-\E[Z_1]}{1+\eta} \leq 0.
\end{align*}
\end{proof}

\begin{proof}[Proof of Proposition \ref{prop:icxV0}]
	Set $Z_1:=Z_1^{w}$ and $\widetilde{Z}_1:=Z_1^{\widetilde{w}}$. 
	Set $R_0:=R_0^{w}$ and $\widetilde{R}_0:=R_0^{\widetilde{w}}$. 
	For $\widetilde{R}_0\leq R_0$ and any $Z_1,\widetilde{Z}_1,X_1$, 
	\begin{align*}
		(R_0Z_1-X_1)^+&\leq 
		(\widetilde{R}_0Z_1-X_1)^++(R_0-\widetilde{R}_0)Z_1\\
		&= (\widetilde{R}_0\widetilde{Z}_1-X_1)^++(R_0-\widetilde{R}_0)Z_1+\Delta,
	\end{align*}
	where 
	\begin{align*}
		\Delta:=(\widetilde{R}_0Z_1-X_1)^+-(\widetilde{R}_0\widetilde{Z}_1-X_1)^+.
	\end{align*}
	Note that $\E[\Delta]=\E[\E[\Delta\mid X_1]]$ and that $\E[\Delta\mid X_1]\leq 0$ follows from $Z_1\leq_{\text{icx}}\widetilde{Z}_1$. 
	Hence, 
	\begin{align*}
		C_0\leq \widetilde{C}_0+\frac{(R_0-\widetilde{R}_0)\E[Z_1]}{1+\eta}+\frac{\E[\Delta]}{1+\eta} \leq \widetilde{C}_0+\frac{(R_0-\widetilde{R}_0)\E[Z_1]}{1+\eta}.
	\end{align*}
	Consequently, 
	\begin{align*}
		\widetilde{V}_0-V_0=\widetilde{R}_0-R_0+C_0-\widetilde{C}_0
		\leq (\widetilde{R}_0-R_0)\frac{1+\eta-\E[Z_1]}{1+\eta} \leq 0
	\end{align*}
	given that  $1+\eta\geq \E[Z_1]$. 
\end{proof}

\begin{proof}[Proof of Proposition \ref{prop:icxV02}]
Note that $R_0^{w}Z_1^{w}\geq_{\text{icx}} R_0^0Z_1^0=R_0^0$ is equivalent to that, for all $\beta\in [0,1]$, 
\begin{align*}
\int_{\beta}^{1}F_{R_0^{w}(wS_1+1-w)}^{-1}(u)du-(1-\beta)R_0^0\geq 0.
\end{align*}
The inequality is equivalent to 
\begin{align*}
R_0^{w}\bigg(w\frac{1}{1-\beta}\int_{\beta}^{1}F_{S_1}^{-1}(u)du+1-w\bigg)\geq R_0^0
\end{align*}
and the left-hand side is bounded from below by $R_0^{w}(w\E[S_1]+1-w)$. Moreover, $R_0^{w}Z_1^{w}\geq_{\text{icx}} R_0^0Z_1^0$ implies $C_0^{w}\geq C_0^{0}$. Finally, $R_0^{w}\leq R_0^0$ and $C_0^{w}\geq C_0^{0}$ together imply $V_0^{w}=R_0^{w}-C_0^{w}\leq R_0^0-C_0^0=V_0^0$. 
\end{proof}

\begin{proof}[Proof of Proposition \ref{prop:init_der_base_case}]
Write $r(w):=R_0^w$. For all $w$, $\P(X\leq r(w)(wS_1+1-w))=1-\alpha$. Suppose that $X_1$ has a density function $f_{X_1}$. 
From 
\begin{align*}
\int F_{X_1}(r(w)(ws+1-w))dF_{S_1}(s)=1-\alpha \text{ for all } w
\end{align*}
follows that the derivative of the left-hand side with respect to $w$ is zero. The assumptions allow differentiating under the integral sign, which gives, for all $w$, 
\begin{align*}
\int f_{X_1}(r(w)(ws+1-w))\Big(r'(w)(ws+1-w)+r(w)(s-1)\Big)dF_{S_1}(s)=0. 
\end{align*}
Let 
\begin{align*}
h(w,s):=f_{X_1}(r(w)(ws+1-w))\Big(r'(w)(ws+1-w)+r(w)(s-1)\Big)
\end{align*}
and note that the assumptions imply that $|h(w,s)|\leq c_1s+c_2$ for constants $c_1,c_2>0$. 
Applying the Dominated Convergence Theorem gives 
\begin{align*}
0&=\lim_{w\to 0}\int f_{X_1}(r(w)(ws+1-w))\Big(r'(w)(ws+1-w)+r(w)(s-1)\Big)dF_{S_1}(s)\\
&=\int \lim_{w\to 0}f_{X_1}(r(w)(ws+1-w))\Big(r'(w)(ws+1-w)+r(w)(s-1)\Big)dF_{S_1}(s)\\
&=f_{X_1}(r(0))\Big(r'(0)+r(0)(\E[S_1]-1)\Big)
\end{align*}
Hence, $r'(0)=-r(0)(\E[S_1]-1)<0$. 

Set $\psi(w,s):=r(w)(ws+1-w)$. 
\begin{align*}
&\frac{d}{dw}\E[(r(w)(wS_1+1-w)-X_1)^+]\\
&\quad =\frac{d}{dw} \int_{0}^{\infty}\int_{0}^{\psi(w,s)}\big(\psi(w,s)-x\big)f_{X_1}(x)dx dF_{S_1}(s)\\
&\quad =\int_{0}^{\infty}\int_{0}^{\psi(w,s)}\frac{\partial}{\partial w}\psi(w,s)f_{X_1}(x)dx dF_{S_1}(s)\\
&\quad =\int_{0}^{\infty}F_{X_1}\big(\psi(w,s)\big)\Big(r'(w)(ws+1-w)+r(w)(s-1)\Big)dF_{S_1}(s)
\end{align*}
Since $\lim_{w\to 0}\psi(w,s)=r(0)$, applying the Dominated Convergence Theorem gives 
\begin{align*}
&\lim_{w\to 0}\frac{d}{dw}\E[(r(w)(wS_1+1-w)-X_1)^+]\\
&\quad =\int_{0}^{\infty}F_{X_1}\big(r(0)\big)\Big(r'(0)+r(0)(s-1)\Big)dF_{S_1}(s) \\
&\quad =F_{X_1}(r(0))(r'(0)+r(0)(\E[S_1]-1))\\
&\quad =0.
\end{align*}
Since $V_0^w=R_0^w-C_0^w$ the proof is complete. 
\end{proof}

\begin{proof}[Proof of Proposition \ref{prop:capital_req_normal}] 
Equation \eqref{eqr} is straight-forward. Define the convex function $[0,\infty)\ni x\mapsto f_{\rho}(x)$ by 
\begin{align*}
	f_{\rho}(x):=\rho(xZ_1-X_1)
	=\gamma-x\mu+\Phi^{-1}(1-\alpha)\sqrt{x^2\sigma^2+\nu^2}.
\end{align*}
Then, using $\sqrt{x^2\sigma^2+\nu^2}<x\sigma+\nu$, we get the following bounds for $f_{\rho  }(x)$: 
\begin{align*}
	\gamma + x (\sigma \Phi ^{-1}(1 - \alpha ) - \mu    ) \le
	f_{\rho  }(x)
	\le \gamma  + \nu \Phi ^{-1}(1- \alpha ) + x (\sigma \Phi ^{-1}(1 - \alpha ) - \mu    )  .
\end{align*}
It immediately follows that for $\mu \le \sigma \Phi ^{-1}(1 - \alpha )$, $f_{\rho  }(x) > 0$ for all positive $x$, settling (ii). \\
On the other hand, for $\mu > \sigma \Phi ^{-1}(1 - \alpha )$, there exists an $x> 0$ with  $f_{\rho  }(x) = 0$. From \eqref{eqr} it then follows that the function 
\begin{align*}
	g(x)=(\mu^{2} - \sigma ^{2} \Phi ^{-1}(1-\alpha )^{2}   ) \,x^{2} + (-2 \mu \gamma  )\, x + (\gamma ^{2} - \nu^{2} \Phi ^{-1}(1 - \alpha )^{2} )
\end{align*}
has two real zeroes. 
If $\gamma\leq \nu\Phi^{-1}(1-\alpha)$, then $g(x)$ has a unique positive zero $R_0>0$. 
	 If $\gamma>\nu\Phi^{-1}(1-\alpha)$, the function $g(x)$ has two positive zeroes that are candidates for $R_0$. 
However, $f_{\rho}(R_0) = 0$ is equivalent to $\Phi ^{-1}(1 - \alpha ) \sqrt{\sigma ^2 R_0^{2} + \nu^{2} } = \mu R_0 - \gamma $, so $R_0$ must satisfy $\mu R_0 - \gamma \ge 0 $. Since $g(0)>0$ and $g(\gamma/\mu)<0$, only $R_0$ given in \eqref{er0} fulfills $\mu R_0 - \gamma \ge 0 $.
\end{proof}

\begin{proof}[Proof of Proposition \ref{prop:R0_is_convex}]
Note that by Remark \ref{rem:equivR0} we can write $R_0^{w}=c/h(w)$, where $c:=\gamma^2-\nu^2\Phi^{-1}(1-\alpha)^2$ and 
\begin{align}\label{eq:hfunction}
h(w):=\mu_w\gamma-\Phi^{-1}(1-\alpha)\sqrt{\gamma^2\sigma_w^2+\mu_w^2\nu^2-\sigma_w^2\nu^2\Phi^{-1}(1-\alpha)^2}.
\end{align}
Since $c>0$, $w\mapsto R_0^{w}$ is convex if $w\mapsto h(w)$ is concave (a decreasing convex function composed with a concave function is convex). Set 
\begin{align}\label{eq:gfunction}
g(w):=(\gamma^2-\nu^2\Phi^{-1}(1-\alpha)^2)\sigma^2w^2+(w\mu+1-w)^2\nu^2
\end{align}
Then $h(w)=(w\mu+1-w)\gamma-\Phi^{-1}(1-\alpha)\sqrt{g(w)}$. Hence, $h''(w)<0$ if and only if $(g'(w))^2-2g(w)g''(w)<0$. Direct computations show that this inequality holds if $\gamma^2-\nu^2\Phi^{-1}(1-\alpha)^2>0$, i.e.~without having to impose any additional conditions.  
Moreover, 
\begin{align*}
\frac{dR_0^{w}}{dw}(0)&=-c\frac{h'(0)}{h^2(0)}=-c\frac{(\mu-1)(\gamma-\nu\Phi^{-1}(1-\alpha))}{(\gamma-\nu\Phi^{-1}(1-\alpha))^2}\\
&=-(\mu-1)(\gamma+\nu\Phi^{-1}(1-\alpha))<0.
\end{align*}
From the expression of $R_0^w$ in Remark \ref{rem:equivR0}, the equation $R_0^w=R_0^0=\gamma+\nu\Phi^{-1}(1-\alpha)$ implies that  
\begin{align*}
0=w(\gamma-\nu\Phi^{-1}(1-\alpha))\Big(&w(\gamma+\nu\Phi^{-1}(1-\alpha))\big((\mu-1)^2-\sigma^2\Phi^{-1}(1-\alpha)^2\big)\\
&+2(\mu-1)\nu\Phi^{-1}(1-\alpha)\Big).
\end{align*}
If $\mu\geq 1+\sigma\Phi^{-1}(1-\alpha)$, then there is no strictly positive solution $w$, which implies that $R_0^w<R_0^0$ for all $w>0$. 
If $\mu<1+\sigma\Phi^{-1}(1-\alpha)$, then there is a strictly positive solution $w$ which satisfies 
\begin{align*}
w(\gamma+\nu\Phi^{-1}(1-\alpha))\big(\sigma^2\Phi^{-1}(1-\alpha)^2-(\mu-1)^2\big)=2(\mu-1)\nu\Phi^{-1}(1-\alpha),
\end{align*}
from which the conclusion follows using the convexity of $w\mapsto R_0^{w}$.
\end{proof}

\begin{lemma}\label{lem:C0lemma}
If $G\sim N(0,1)$ and $e,f$ are constants, with $f>0$, such that $\rho(e+fG)=0$, then for any translation invariant and positively homogeneous risk measure $\rho$, one has
\begin{align*}
\E[(e+fG)^+]=e\bigg(\Phi(\rho(G))+\frac{\phi(\rho(G))}{\rho(G)}\bigg),  
\end{align*}
where $\Phi$ and $\phi$ denote the standard normal distribution function and density function, and $\overline{\Phi}=1-\Phi$.  
\end{lemma}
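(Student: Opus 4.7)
The plan is to translate the condition $\rho(e+fG)=0$ into an algebraic relation between $e$, $f$ and $\rho(G)$, and then perform the direct integration of the positive part against a standard normal density.

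First I would use translation invariance and positive homogeneity of $\rho$: since $f>0$, one has $\rho(e+fG)=-e+f\rho(G)$, so the hypothesis $\rho(e+fG)=0$ yields $e=f\rho(G)$, equivalently $f=e/\rho(G)$ (note that this also forces $\rho(G)$ and $e$ to have the same sign, and in particular $\rho(G)\neq 0$).

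Next I would compute the expectation explicitly. The event $\{e+fG>0\}$ equals $\{G>-e/f\}=\{G>-\rho(G)\}$, so
\begin{align*}
\E[(e+fG)^+]=\int_{-\rho(G)}^{\infty}(e+fx)\phi(x)\,dx
=e\,\P(G>-\rho(G))+f\int_{-\rho(G)}^{\infty}x\phi(x)\,dx.
\end{align*}
By symmetry of the standard normal, $\P(G>-\rho(G))=\Phi(\rho(G))$. For the second integral I would use the elementary antiderivative identity $\frac{d}{dx}(-\phi(x))=x\phi(x)$, giving
\begin{align*}
\int_{-\rho(G)}^{\infty}x\phi(x)\,dx=\bigl[-\phi(x)\bigr]_{-\rho(G)}^{\infty}=\phi(-\rho(G))=\phi(\rho(G)),
\end{align*}
again by the evenness of $\phi$.

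Finally, substituting $f=e/\rho(G)$ into $e\Phi(\rho(G))+f\phi(\rho(G))$ yields the claimed expression
\begin{align*}
\E[(e+fG)^+]=e\,\Phi(\rho(G))+\frac{e}{\rho(G)}\phi(\rho(G))=e\biggl(\Phi(\rho(G))+\frac{\phi(\rho(G))}{\rho(G)}\biggr).
\end{align*}
There is no real obstacle here; the only minor subtlety is ensuring that $\rho(G)\neq 0$ so that the equation $e=f\rho(G)$ can be inverted, which is guaranteed by $f>0$ together with the normalization $\rho(e+fG)=0$ (for $\rho\in\{\VaR_\alpha,\ES_\alpha\}$ with $\alpha\in(0,1/2)$ one has $\rho(G)>0$ explicitly).
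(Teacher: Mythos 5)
Your proof is correct and follows essentially the same route as the paper's: translate $\rho(e+fG)=0$ into $f=e/\rho(G)$ via translation invariance and positive homogeneity, then integrate $(e+fg)\phi(g)$ over $\{g>-e/f\}$ using $\int x\phi(x)\,dx=-\phi(x)$ and the symmetry of $\Phi$ and $\phi$. The only cosmetic difference is that you substitute $-e/f=-\rho(G)$ before integrating while the paper does so afterwards.
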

\begin{proof}The result follows from
\begin{align*}
\E[(e+fG)^+]&=\int_{-e/f}^{\infty} (e+fg)\phi(g)dg\\
&=e\overline{\Phi}(-e/f)+f\int_{-e/f}^{\infty}g\phi(g)dg\\
&=e\overline{\Phi}(-e/f)+f\phi(-e/f),
\end{align*}
From $-e+f\rho(G)=0$ follow $-e/f=-\rho(G)$ and $f=e/\rho(G)$. Finally, $\overline{\Phi}(-\rho(G))=\Phi(\rho(G))$ and $\phi(\rho(G))=\phi(-\rho(G))$.   
\end{proof}

\begin{proof}[Proof of Proposition \ref{prop:C0V0_normal_model}]
	From Lemma \ref{lem:C0lemma} it follows that 
	\begin{align*}
		C_0=\frac{R_0\mu-\gamma}{1+\eta}\bigg(1-\alpha+\frac{\phi(\Phi^{-1}(1-\alpha))}{\Phi^{-1}(1-\alpha)}\bigg),
	\end{align*}
	where we used that $\Phi(\rho(G))=\Phi(\Phi^{-1}(1-\alpha))=1-\alpha$. 
	The Mill's ratio bounds
	\begin{align*}
		\frac{z}{z^2+1}<\frac{\overline{\Phi}(z)}{\phi(z)}<\frac{1}{z}, \quad z > 0
	\end{align*}
	imply (for $\alpha<1/2$)
	\begin{align*}
		\alpha\Phi^{-1}(1-\alpha)<\phi(\Phi^{-1}(1-\alpha))<\alpha\bigg(\Phi^{-1}(1-\alpha)+\frac{1}{\Phi^{-1}(1-\alpha)}\bigg).
	\end{align*}
\end{proof}

\begin{proof}[Proof of Proposition \ref{prop:C0greater_normal_model}]
From 
\begin{align*}
	0=\VaR_{\alpha}(R_0^{w}Z_1-X_1)=\gamma-R_0^{w}\mu_{w}+\Phi^{-1}(1-\alpha)\sqrt{(R_0^{w})^2\sigma_{w}^2+\nu^2} 
\end{align*}
follows that 
\begin{align*}
	R_0^{w}\mu_{w}-R_0^{0}=\Phi^{-1}(1-\alpha)\Big(\sqrt{(R_0^{w})^2\sigma_{w}^2+\nu^2}-\nu\Big)>0.
\end{align*}
Hence, 
\begin{align*}
	\frac{1 + \eta  }{1 + \delta (\alpha )}	(C_{0}^{w} - C_{0}^{0}) &= (R_{0}^{w}\mu_{w}   - R_{0}^{0}) > 0. 
\end{align*}
Consequently, $C_{0}^{w} > C_{0}^{0}$. 
\end{proof}

\bibliographystyle{abbrv}
\bibliography{coc}

\end{document}